\newcommand{\STAB}[1]{$\mathrm{STAB}_{#1}$} 
\newcommand{\STABmth}[1]{\mathrm{STAB}_{#1}} 
\newcommand{\SPO}[1]{$\mathrm{SP}_{#1}$} 
\newcommand{\SPOmth}[1]{\mathrm{SP}_{#1}}
\newcommand{\robmag}[2]{\mathcal{R}_{#2}\qty(#1)} 
\newcommand{\lone}[1]{\| #1\|_1} \newcommand{\lonevec}[1]{\lone{\vec{#1}}} \newcommand{\id}{\mathds{1}}     \newcommand{\sign}[1]{\mathrm{sign}\qty(#1)}
\newcommand{\chan}{\mathcal{E}} \newcommand{\RChoi}[1]{\mathcal{R}\qty(\Phi_{#1})}
\newcommand{\RChoiTP}[1]{\mathcal{R}_* \qty(#1)} \newcommand{\Mcap}[1]{\mathcal{C}\qty(#1) }  
\newcommand{\curlyT}{\mathcal{T}} 
\newcommand{\idn}[1]{\id_{#1}} 
\newcommand{\outstate}[3]{\qty(#1 \otimes \idn{#3}) #2} \newcommand{\Rstar}[1]{\RChoiTP{#1}} 
\newcommand{\choistate}[1]{\Phi_{#1}} \newcommand{\CPR}{\mathrm{CPR}} \newcommand{\LofK}[1]{\mathcal{L}\qty(\mathcal{K}#1)} \newcommand{\ketcal}[2]{\ket{\mathcal{#1}#2}}
\newcommand{\Hilb}{\mathcal{H}}
\newtheorem{theorem}{Theorem}[section]
\newtheorem{obs}{Observation}[section]
\newtheorem{lemma}{Lemma}[section]
\newtheorem{conjecture}{Conjecture}[section]
\begin{document}
\floatstyle{ruled}
\newfloat{algo}{htbp}{}
\floatname{algo}{Algorithm }

\title{Quantifying magic for multi-qubit operations}

\author{
James R. Seddon$^{1}$ and Earl T. Campbell$^{2}$}
\affiliation{$^1$Department of Physics and Astronomy, University College London, London, UK}
\email{james.seddon.15@ucl.ac.uk}
\affiliation{$^2$Department of Physics and Astronomy, University of Sheffield, Sheffield, UK}
\date{2nd July, 2019}

\def\thesection{\arabic{section}}

\begin{abstract}
The development of a framework for quantifying ``non-stabiliserness'' of quantum operations is motivated by the magic state model of fault-tolerant quantum computation, and by the need to estimate classical simulation cost for noisy intermediate-scale quantum (NISQ) devices. The robustness of magic was recently proposed as a well-behaved magic monotone for multi-qubit states and quantifies the simulation overhead of circuits composed of Clifford+$T$ gates, or circuits using other gates from the Clifford hierarchy. Here we present a general theory of the ``non-stabiliserness'' of quantum operations rather than states, which are useful for classical simulation of more general circuits. We introduce two magic monotones, called channel robustness and magic capacity, which are well-defined for general $n$-qubit channels and treat all stabiliser-preserving CPTP maps as free operations.  We present two complementary Monte Carlo-type classical simulation algorithms with sample complexity given by these quantities and provide examples of channels where the complexity of our algorithms is exponentially better than previous known simulators. We present additional techniques that ease the difficulty of calculating our monotones for special classes of channels.
\end{abstract}

\maketitle

\section{Introduction}
The Gottesman-Knill theorem showed that circuits comprised of stabiliser state preparations, Clifford gates, Pauli measurements, classical randomness and conditioning can be efficiently simulated by a traditional computer \cite{Gottesman1997,GKtheorem}. If a circuit involves a relatively small proportion of non-Clifford operations, simulation may be within the reach of a classical computer, albeit with a runtime overhead that is expected to scale exponentially with the amount of resource required.

 An important class of devices comprises so-called near-Clifford circuits where simulation may be feasible \cite{Bennink2017,Yoganathan2018}. There are two scenarios where near-Clifford circuits are relevant.  As we enter the era of Noisy Intermediate Scale Quantum (NISQ) devices \cite{Preskill2018}, many experiments proposed as demonstrators of quantum advantage may be near-Clifford so it is important to rigorously understand when a classical simulation is available. Furthermore, in the NISQ regime the need for classical simulation tools for benchmarking and verification becomes more pressing.  The quantification of non-stabiliser resource is also of interest in the context of the magic state model of fault-tolerant quantum computation \cite{Knill2005,Bravyi2005,Campbell2017a}, the second scenario. Any device intended to provide quantum advantage must involve non-stabiliser operations. In circuits employing error-correcting codes, however, it is often not possible for the code to `natively' implement non-Clifford gates fault-tolerantly \cite{Campbell2017a}. Instead, these gates are implemented indirectly by injection of so-called magic states. These are non-stabiliser states that must be prepared using the experimentally costly process of magic state distillation \cite{Knill2005,Bravyi2005,Reichardt2005,Bravyi2012,Campbell2017a,Jones2013,Trout2015,Haah2017,Hastings2018,Krishna2018,Campbell2018,Wang2018}, which is comprised of Clifford-dominated circuits.

Both of these scenarios motivate the development of a resource theory \cite{Vidal1999,Grudka2014,Horodecki2013a,Brandao2015a,Coecke2016,Napoli2016,Stahlke2014,Wang2018,Takagi2018,Regula2018} where the class of free operations is generated by stabiliser state preparations and rounds of stabiliser operations as described above. For the case of odd $d$-dimensional qudits this problem is largely solved by the discrete phase space formalism \cite{Gross2006,Mari2012,Veitch2012,Veitch2014,Ahmadi,Delfosse2017}; odd dimension qudit stabiliser states are characterised by a positive discrete Wigner function. In Ref. \cite{Pashayan2015}, the discrete Wigner function was cast as a quasiprobability distribution, making a direct connection between the negativity of the distribution, and the complexity of calculating expectation values via a Monte Carlo-type simulation algorithm. These techniques have recently been extended to quantify the magic of odd-dimension qudit quantum channels \cite{Wang2019}. However,  the discrete phase space approach cannot be applied cleanly to qubits without excluding some Clifford operations from the free operations~\cite{Delfosse2015,Raussendorf2017}, or losing the ability to compose representations under tensor product \cite{Raussendorf2019}. To retain all multi-qubit stabiliser channels as free operations, then, we must seek alternative approaches.

Howard and Campbell~\cite{Howard2017} introduced a scheme where density matrices are decomposed as real linear combinations of pure stabiliser state projectors. Non-stabiliser states sit outside the convex hull of the pure stabiliser states, so their decompositions necessarily contain negative terms and can again be viewed as quasiprobability distributions, with $\ell_1$-norm strictly larger than 1. The robustness of magic for a state, defined as the minimum $\ell_1$-norm over all valid decompositions, is a monotone under stabiliser operations and has several useful resource-theoretic properties. Alternative approaches include stabiliser rank methods, where the state vector is decomposed as a superposition of stabiliser states \cite{Garcia2014,Bravyi2016,Bravyi2016A,Bravyi2018}. Exact and approximate stabiliser rank, and the associated quantity extent, are measures of magic for pure states. Here we are interested in measures naturally suited for applications to mixed states or general, noisy quantum channels. A stabiliser-based method to simulate noisy circuits by decomposition of states into Pauli operators was recently proposed in Ref. \cite{Rall2019}. 
In this work we characterise the cost of quantum \emph{operations} with respect to the resource theory of magic. Robustness of magic naturally quantifies the cost for a subclass of non-Clifford operations, namely gates from the third level of the Clifford hierarchy. It is less clear how the framework can be extended to more general quantum operations, and formalising this is one of our main aims. 

In Ref. \cite{Bennink2017}, Bennink et al. presented an algorithm in which completely positive trace-preserving (CPTP) maps are decomposed as quasiprobability distributions over a subset of stabiliser-preserving operations that we will call $\CPR$. This subset supplements the Clifford unitaries with Pauli reset channels, in which measurement of some Pauli observable is followed by a conditional Clifford correction, so as to reset a state to a particular +1 Pauli eigenstate. While Bennink et al. showed that $\CPR$ spans the set of CPTP maps, there is no guarantee that all stabiliser-preserving CPTP maps can be found within its convex hull. Indeed, we will see in Section \ref{sec:algorithms} there exist channels that are stabiliser-preserving, but are nevertheless assigned a non-trivial cost by the algorithm of Ref. \cite{Bennink2017}. The implication is that decomposition in terms of elements of $\CPR$ is not the best strategy for simulating general non-stabiliser operations. An obvious extension of Ref. \cite{Bennink2017} is to replace $\CPR$ by the full set of stabiliser-preserving CPTP maps. The technical question to be answered is then how to correctly and concisely represent this set; how can we be sure that we have captured all possible stabiliser-preserving channels? This issue is addressed in Sections \ref{sec:SPO} and \ref{sec:choi}.

In this paper we introduce two magic monotones for channels: the channel robustness $\mathcal{R}_*$ and the magic capacity $\mathcal{C}$. Both are closely related to the robustness of magic for states.  They are well-defined for general $n$-qubit channels and treat all stabiliser-preserving CPTP maps as free operations.  We will see that these monotones give the sample complexity of two classical simulation algorithms. Other magic monotones have been proposed \cite{Veitch2014, Ahmadi, Wang2018} but without known connections to classical simulation algorithms.  Furthermore, we give several examples of channels where the simulation complexities of our approaches are exponentially faster (as a function of gate count) than other quasiprobability simulators such as the Bennink et al. simulator \cite{Bennink2017}. To our knowledge, our algorithms are the first that are known to efficiently simulate all stabiliser-preserving CPTP maps, as opposed to the set of stabiliser operations generated by Clifford gates and Pauli measurements.

 The paper is structured as follows. In Section \ref{sec:defs} we review the properties of robustness of magic and give some definitions. Next, we summarise our main results in Section \ref{sec:results}, before pinning down what we mean by stabiliser-preserving operations in Section \ref{sec:SPO}. Sections \ref{sec:choi} and \ref{sec:cap} are chiefly concerned with proving important properties of our monotones. Two classical simulation algorithms, each related to one of our monotones, are described in Section \ref{sec:algorithms}. Finally, in Section \ref{sec:numerical} we calculate the numerical values of our monotones for operations on up to five qubits, using techniques developed in Appendix \ref{app:diagonal}. MATLAB code to calculate each of our measures is provided at the public repository Ref. \cite{channel_repo}.

\section{\label{sec:defs}Preliminaries}

Let \STAB{n} be the set of $n$-qubit stabiliser states. In an abuse of notation we will use $\ket{\phi} \in \STABmth{n}$ to mean a pure state from this set, and $\rho \in \STABmth{n}$ to mean the density matrix of a state taken from the stabiliser polytope, the convex hull of pure stabiliser states. The pure states in \STAB{n} form an overcomplete basis for the set of $2^n$-dimensional density matrices $\mathcal{D}_n$. We can therefore write the density matrix for \emph{any} 
state as an affine combination of pure stabiliser state projectors 
$\rho = \sum_j q_j \op{\phi_j}$ where $\ket{\phi_j} \in \STABmth{n}$, and $\sum_j q_j = 1$. In general, $q_j$ can be negative. The robustness of magic is defined as the minimal $\ell_1$-norm $\lone{\vec{q}}  = \sum_j \abs{q_j}$ over all possible decompositions \cite{Howard2017}:
\begin{equation}
\robmag{\rho}{} = \min_{\vec{q}} \qty{\lone{\vec{q}} : \sum_j q_j \op{\phi_j} = \rho, \, \ket{\phi_j} \in \STABmth{n} }.
\end{equation}
In the definition above, the state of interest is expressed as a decomposition over pure stabiliser states. By collecting together all terms of the same sign, any state can instead be expressed in terms of a pair of mixed stabiliser states (Figure~\ref{fig:robustness_decomp}). An equivalent definition is then:
\begin{equation}
    \robmag{\rho}{} = \min_{\rho_\pm \in \STABmth{n}} \qty{ 1 + 2p : (1+p) \rho_+ - p \rho_- = \rho, p \geq 0}.
\end{equation}

The robustness of magic is a well-behaved magic monotone, having the following properties:
\begin{enumerate}
    \item \emph{Convexity}: $\robmag{\sum_j q_j \rho_j}{} \leq \sum_j \abs{q_j} \robmag{\rho_j}{}$;
    \item \emph{Faithfulness}: If $\rho \in \STABmth{n}$, then $\robmag{\rho}{}=1$. Otherwise $\robmag{\rho}{}>1$;
    \item \emph{Monotonicity under stabiliser operations}: If $\Lambda$ is a CPTP stabiliser-preserving operation, then $\robmag{\Lambda\qty(\rho)}{} \leq \robmag{\rho}{}$;
    \item \emph{Submultiplicativity under tensor product}: $\robmag{\rho^A \otimes \rho'^B}{} \leq \robmag{\rho^A }{} \robmag{\rho'^B}{}$.
\end{enumerate}
\begin{figure}[thbp]
    \centering
    \includegraphics[width=0.4\textwidth]{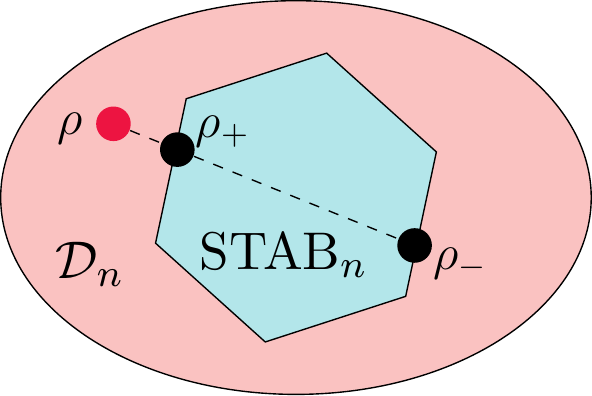}
    \caption{Schematic illustration of a density matrix $\rho \in \mathcal{D}_n$ decomposed as an affine combination of elements from the stabiliser polytope $\STABmth{n}$.}
    \label{fig:robustness_decomp}
\end{figure}
The quantity $\mathcal{R}$ also has a clear operational meaning, quantifying the classical simulation cost in a Monte Carlo-type scheme that samples from a quasiprobability distribution over stabiliser states \cite{Pashayan2015,Howard2017,Bennink2017}. These algorithms estimate the expectation value of a Pauli observable after a stabiliser channel is applied to a \emph{non}-stabiliser input state. The minimum number of samples required to achieve some stated accuracy scales with $\mathcal{R}^2$.

The robustness of magic can be calculated using standard linear programming techniques \cite{Boyd2004} (for example using the MATLAB package CVX \cite{CVX}). The naive formulation of the linear program is practical on a desktop computer for up to five qubits (the number of stabiliser states increases super-exponentially with $n$). It was recently shown by Heinrich and Gross \cite{Heinrich2018} that when states possess certain symmetries, the original optimisation problem can be mapped to a more tractable one, so that the robustness of magic can be calculated for up to 10 copies of a state. 

\begin{figure}[htbp]
    \centering
    \includegraphics[width=0.6\textwidth]{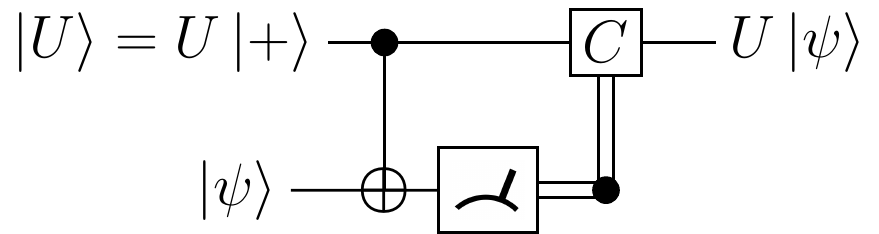}
    \caption{State injection gadget. A resource state $\ket{U}$ is consumed in order to implement the corresponding gate $U$. A Clifford correction $C$ is applied to qubit 1 conditioned on the outcome of a Pauli measurement on qubit 2. A single-qubit diagonal gate is shown, but the scheme can be generalised to all multi-qubit gates from the third level of the Clifford hierarchy.}
    \label{fig:state_injection}
\end{figure}

The framework naturally extends to a subclass of non-stabiliser circuits: those that may be implemented by deterministic state injection \cite{Howard2017}, including all gates from the third level of the Clifford hierarchy (Figure \ref{fig:state_injection}). The canonical example is the T-gate, $T = \mathrm{diag}\qty(1,e^{i \pi /4})$, which can be implemented by consuming so-called magic states as a resource \cite{Bravyi2005}. The classical simulation overhead for implementing a gate is then the robustness of magic for the consumed resource state. Not all non-stabiliser operations can be implemented in this way, however. 

Informally we say that an operation is stabiliser-preserving if it always maps stabiliser states to stabiliser states. To make this precise, define \SPO{n,m} to be the set of $n$-qubit operations $\chan$ such that  $\qty(\chan \otimes \idn{m})\sigma \in \STABmth{n+m}$ for all $\sigma \in \STABmth{n+m}$, where $\idn{m}$ is the identity map for an $m$-qubit Hilbert space. The set \SPO{n,0} is then the set of channels that map $n$-qubit stabiliser states to $n$-qubit stabiliser states.  We say a channel is ``completely'' stabiliser-preserving if $\chan \in \SPOmth{n,m}$  for all $m$.

\section{\label{sec:results}Overview of main results}

Our first result is a characterisation of the class of completely stabiliser-preserving operations, making use of the well-known Choi-Jamio{\l}kowski isomorphism~\cite{Jamiolkowski1972,Choi1975,Choi}. Throughout this paper we will consider Hilbert spaces comprised of $n$-qubit subspaces, eg. $\mathcal{H} = \mathcal{H}^A \otimes \mathcal{H}^B$. Where it is unclear from the context, we use superscripts to indicate which subspace each operator belongs to. For the maximally entangled state $\ket{\Omega_n}$, as defined in Theorem \ref{thmcor:SPOnn} below, we use the superscript $A|B$ to indicate the partition across which the state is entangled.

\newpage
\begin{theorem}[Completely stabiliser-preserving operations]\label{thmcor:SPOnn}
Given an $n$-qubit CPTP channel $\chan$, for all $m > 0 $, $\chan \in \SPOmth{n,n + m}$ if and only if $\chan \in \SPOmth{n,n}$.  Furthermore,  $\chan \in \SPOmth{n,n}$ if and only if the Choi state
\begin{equation}
	\Phi_\chan^{AB} = \outstate{\chan^A}{\op{\Omega_n}^{A|B}}{n},
	\quad \text{where} \quad
	\ket{\Omega_n}^{A|B} = \frac{1}{\sqrt{2^n}}\sum_{j=0}^{2^n - 1} \ket{j}^A \otimes \ket{j}^B,
	\label{eq:Omegastate}
\end{equation}
is a stabiliser state.  Here, $\ket{j}$ are the $n$-qubit computational basis states. 
\end{theorem}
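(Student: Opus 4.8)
The plan is to collapse the whole statement onto a single reconstruction claim and then read off both biconditionals from it. The claim I would isolate is: \emph{if $\choistate{\chan}$ is a stabiliser state, then $\outstate{\chan}{\sigma}{k}\in\STABmth{n+k}$ for every $\sigma\in\STABmth{n+k}$ and every $k\geq 0$}, i.e. $\chan\in\SPOmth{n,k}$ for all $k$. To prove it I would set up gate teleportation through the Choi state. I introduce two fresh $n$-qubit registers $A'$ and $B$, let $A$ carry the channel input and $C$ the $k$-qubit ancilla, and form the product state $\sigma^{AC}\otimes\choistate{\chan}^{A'B}$; since a tensor product of stabiliser states is a stabiliser state, this lies in $\STABmth{2n+k}$ by hypothesis. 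I then measure the $2n$ commuting Pauli generators of $\ket{\Omega_n}^{AB}$ across registers $A$ and $B$ and post-select on the branch that projects $AB$ onto $\ket{\Omega_n}^{AB}$.

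Two facts finish the claim. First, because $\ket{\Omega_n}$ is itself a stabiliser state, this projection is a sequence of commuting Pauli measurements, hence a stabiliser operation, so each outcome branch leaves $A'C$ in a stabiliser state. Second, combining the ``ricochet'' identity $(\id\otimes O)\ket{\Omega_n}=(O^{T}\otimes\id)\ket{\Omega_n}$ with the Choi reconstruction formula $\chan(\rho)=2^{n}\,\Tr_B\qty[(\id^A\otimes(\rho^{T})^B)\,\choistate{\chan}^{AB}]$, the unnormalised post-selected state on $A'C$ equals $2^{-2n}\,\outstate{\chan}{\sigma^{T_A}}{k}$ with $A$ relabelled to $A'$, where $T_A$ is the partial transpose on the input register. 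Trace preservation of $\chan$ pins the probability of this outcome to be exactly $2^{-2n}>0$ (the reference marginal of $\choistate{\chan}$ is maximally mixed), so after normalisation the branch is \emph{exactly} $\outstate{\chan}{\sigma^{T_A}}{k}$. The partial transpose is harmless: $T_A$ sends Paulis to Paulis up to sign and hence permutes $\STABmth{n+k}$ onto itself, so as $\sigma$ ranges over all stabiliser states so does $\sigma^{T_A}$, and the claim follows.

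The remaining pieces are bookkeeping. The ``only if'' half of the second biconditional is immediate, since $\ket{\Omega_n}^{A|B}\in\STABmth{2n}$ forces $\choistate{\chan}=\outstate{\chan^A}{\op{\Omega_n}^{A|B}}{n}\in\STABmth{2n}$ whenever $\chan\in\SPOmth{n,n}$. For the first biconditional I would also record the monotonicity $\SPOmth{n,m'}\subseteq\SPOmth{n,m}$ for $m'\geq m$: pad any $\sigma\in\STABmth{n+m}$ with $\op{0}^{\otimes(m'-m)}$, apply $\chan\otimes\idn{m'}$, and discard the padding (partial trace is stabiliser-preserving), which gives in particular $\chan\in\SPOmth{n,n+m}\Rightarrow\chan\in\SPOmth{n,n}$. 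Assembling, $\chan\in\SPOmth{n,n}\Rightarrow\choistate{\chan}\in\STABmth{2n}\Rightarrow\chan\in\SPOmth{n,k}$ for all $k$ (taking $k=n+m$ closes the other direction of the first biconditional), and the equivalence with $\choistate{\chan}$ being a stabiliser state follows.

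The hard part will be the reconstruction claim, and within it the verification that the post-selected branch is \emph{equal} to (not merely proportional to) $\outstate{\chan}{\sigma^{T_A}}{k}$: this needs careful tracking of the transpose convention arising from $\ket{\Omega_n}$, together with the normalisation argument via trace preservation to confirm the $\ket{\Omega_n}$ outcome has the fixed nonzero probability $2^{-2n}$. Everything else reduces to the standard facts that stabiliser states are closed under tensor product, partial trace, Pauli measurement, and transpose.
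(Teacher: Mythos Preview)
Your strategy is sound and genuinely different from the paper's. The paper proves the two biconditionals via two separate lemmas: Lemma~4.1 uses the bipartite decomposition of stabiliser states due to Fattal et al.\ to show that robustness generated on $(2n+m)$ qubits is always matched on $2n$ qubits, and Lemma~4.2 uses a hyperplane-separation witness argument to show that a non-stabiliser-preserving channel must have a non-stabiliser Choi state. Your approach bypasses both: by running gate teleportation through $\choistate{\chan}$ you obtain directly that a stabiliser Choi state implies $\chan\in\SPOmth{n,k}$ for \emph{every} $k$, from which both equivalences drop out with only the trivial pad-and-trace monotonicity. This is more elementary for the theorem at hand; the paper's route has the advantage that Lemma~4.1 (maximum robustness achieved on $2n$ qubits) is a quantitatively stronger statement that is reused later to define the magic capacity. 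Interestingly, the paper does invoke your teleportation construction, but only later (Section~7) to argue that completely stabiliser-preserving maps are efficiently simulable.

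There is, however, one genuine slip. A direct computation of the post-selected branch gives
\[
\bigl(\bra{\Omega_n}^{AB}\otimes\idn{}^{A'C}\bigr)\bigl(\sigma^{AC}\otimes\choistate{\chan}^{A'B}\bigr)\bigl(\ket{\Omega_n}^{AB}\otimes\idn{}^{A'C}\bigr)
=\frac{1}{2^{2n}}\,\outstate{\chan}{\sigma}{k}
\]
with \emph{no} partial transpose; your claimed $\sigma^{T_A}$ is an artefact of conflating the Choi reconstruction formula (where $\rho^T$ is inserted on the reference register) with the teleportation picture (where the Bell projection undoes that transpose). More importantly, the patch you propose---that $T_A$ permutes $\STABmth{n+k}$---is false: the partial transpose of $\op{\Omega_1}$ is $\tfrac{1}{2}\mathrm{SWAP}=\tfrac{1}{4}(II+XX+YY+ZZ)$, which has a negative eigenvalue and is not a state at all. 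The reasoning ``$T_A$ sends Paulis to Paulis up to sign'' fails because partial transpose is not multiplicative, so it does not respect the projector factorisation $\tfrac{1}{2^n}\prod_j(I+g_j)$. Fortunately none of this is needed: once you correct the teleportation identity to remove the transpose, the rest of your argument goes through verbatim.
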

\noindent
We prove this in section \ref{sec:SPO} .  We take this to be the set of free operations in our resource theories. 

Our first new monotone is channel robustness $\mathcal{R}_*$.   For an $n$-qubit CPTP channel $\chan$ this is defined as:
\begin{equation}
    \mathcal{R}_* (\chan) = \min_{\Lambda_{\pm}\in \SPOmth{n,n} \cap \mathrm{CPTP} }\qty{2p + 1 : (1+p)\Lambda_+ - p \Lambda_- = \chan, p \geq 0},
\end{equation}
where $\Lambda_\pm$ are completely stabiliser-preserving and CPTP  maps.  To fully enumerate this class of maps, we notice that the associated Choi state must satisfy two conditions:  (i) $\Phi_\chan^{AB}$ is a stabiliser state, and (ii) $\Phi_\chan^{AB}$ satisfies the trace-preservation condition $\Tr_A(\Phi_\chan) = \frac{\idn{n}}{2^n}$. We can therefore write:
\begin{align}
	\label{AlternateDef}
\mathcal{R}_* (\chan) = \min_{\rho_\pm \in \STABmth{2n}}\qty{2p + 1 : (1+p)\rho_+ - p \rho_- = \Phi_\chan^{AB}, p\geq0, \Tr_A(\rho_\pm) = \frac{\idn{n}}{2^n}}.
\end{align}
This can now be calculated by linear program given access to a list of all stabiliser states (see Appendix \ref{app:channel_optimisation} and the code repository Ref. \cite{channel_repo}). Channel robustness satisfies the following:
\begin{enumerate}
    \item \emph{Faithfulness}: If $\chan$ is a CPTP channel, then $\Rstar{\chan} = 1$ if $\chan$ is completely stabiliser-preserving and strictly larger than $1$ otherwise;
    \item \emph{Convexity}: $\Rstar{\sum_j q_j \chan_j} \leq \sum_j \abs{q_j} \Rstar{\chan_j}$;
    \item \emph{Submultiplicativity under composition}: $\Rstar{\chan_2 \circ \chan_1} \leq \Rstar{\chan_2} \Rstar{\chan_1}$;
    \item \emph{Submultiplicativity under tensor product}: $\Rstar{\chan^A \otimes \chan'^B} \leq \Rstar{\chan^A} \Rstar{\chan'^B}$.
\end{enumerate}
As a special case, if $\Lambda$ is a CPTP stabiliser channel, then
\begin{equation}
	\mathcal{R}_*( \Lambda  \circ \chan ) \leq \mathcal{R}_*(\chan) \mathcal{R}_*(\Lambda) = \mathcal{R}_*(\chan),
\end{equation}	
and similarly $ \mathcal{R}_*(\chan \circ \Lambda ) \leq \mathcal{R}_*(\chan)$. This combines submultiplicativity under composition and faithfulness, to show that $\mathcal{R}_*$ is suitably monotonically non-increasing under compositions with stabiliser channels. This is the sense in which channel robustness is a magic monotone for channels.  We prove submultiplicativity in Section \ref{sec:choi}. For completeness we prove convexity and faithfulness in Appendix \ref{app:channel_robustness}.  

The approach above is very close to the stabiliser decomposition of channels employed by Bennink et al. in Ref. \cite{Bennink2017}. The main difference is that Bennink et al. optimise their decomposition with respect to $\CPR$, the set of Clifford unitaries supplemented by Pauli reset channels, rather than $\SPOmth{n,n}$. The set $\CPR$ turns out to be a strict subset of the stabiliser-preserving CPTP maps, so $\mathcal{R}_*$ is a lower bound to the $\ell_1$-norm of any $\CPR$ decomposition (though the bound is tight in many cases). Just as the $\ell_1$-norm in Bennink et al. quantifies the sample complexity of a classical simulation algorithm, we can construct a related algorithm where the runtime depends on $\mathcal{R}_*$ in a similar way. We give the details of this algorithm in Section \ref{subsec:algorithm_direct}.

Before proceeding, let us reflect on the condition $\Tr_A(\rho_\pm) = \idn{n} / 2^n$ that enforces that the corresponding channels $\chan_\pm$ are trace-preserving. Dropping this condition would instead lead to $\mathcal{R}(\Phi_\chan)$, the robustness of the Choi state. For gates from the third level of the Clifford hierarchy, deterministic state injection is always possible, and hence the resource cost of the gate $U$ can be equated with the robustness of magic of the corresponding resource state.  These resource states can always (by Clifford-equivalence) be taken to have the form $\op{U} = \qty(U_A \otimes \id_B)\op{\Omega}$. This is precisely the Choi state, so it is natural to ask if $\mathcal{R}(\Phi_\chan)$ also quantifies non-stabiliserness for more general channels. We find that  $\mathcal{R}(\Phi_\chan)$ exhibits faithfulness, convexity and submultiplicativity under tensor product, but lacks submultiplicativity under composition. This arises from the fact that the decomposition of the Choi state corresponds to a decomposition of the channel into maps that are not necessarily trace-preserving. See Appendix \ref{app:choiR} for a counterexample. Despite this shortcoming, we will see that $\mathcal{R}\qty(\Phi_\chan)$ is a useful quantity to compare to more well-behaved measures. Moreover, given that $\mathcal{R}(\Phi_\chan)$ does give the resource cost for third level Clifford hierarchy gates, for consistency of the framework it is reasonable to require that our new monotones should be equal to $\mathcal{R}(\Phi_\chan)$ for this restricted class of gates. We will show later that our monotones do have this property.

Our second new monotone is the \emph{magic capacity}.  Given an $n$-qubit channel $\chan$, it is natural to consider the largest possible increase in robustness of magic, over any possible input state. By analogy with the resource theories of entanglement \cite{Campbell2010} and coherence \cite{Stahlke2014}, we define the magic capacity as:
\begin{equation*}
\mathcal{C}(\chan) = \max_{\ket{\phi} \in \STABmth{2n}} \mathcal{R}\qty[ \qty(\chan \otimes \idn{n} )\op{\phi}].
\end{equation*}
Note that the definition of capacity involves forming a tensor product of an $n$-qubit channel $\chan$ with the $n$-qubit identity. This is necessary because there exist $n$-qubit channels that generate their maximum robustness when applied to part of an $m$-qubit state, where $m>n$. Nevertheless, the $n$-qubit identity suffices for our definition; This is a consequence of Lemma \ref{lem:twontensor} in Section \ref{sec:SPO}. The capacity has the following useful properties:
\begin{enumerate}
\item \emph{Faithfulness}: If $\chan$ is a completely positive, trace-preserving (CPTP) channel, then $\mathcal{C} = 1$ if $\chan$ is stabiliser-preserving (SP), and strictly larger than $1$ otherwise;
\item \emph{Convexity}: $\Mcap{\sum_j q_j \chan_j} \leq \sum_j |q_j| \Mcap{\chan_j} $;
\item \emph{Submultiplicativity under composition}: $\Mcap{\chan_1 \circ \chan_2} \leq \Mcap{\chan_1} \Mcap{\chan_2}$;
\item \emph{Submultiplicativity under tensor product}: $\Mcap{\chan^A \otimes \chan'^B} \leq \Mcap{\chan^A} \Mcap{\chan'^B}$;
\item \emph{Maximum increase in robustness}: $\frac{\mathcal{R}[\qty(\chan \otimes \id)\rho]}{\mathcal{R}\qty(\rho)} \leq \Mcap{\chan}, \, \forall \rho $.
\end{enumerate}
In Section \ref{sec:cap} we prove properties 3-5, (convexity and faithfulness are shown in Appendix \ref{app:capacity_properties}). We will also prove the following theorem relating magic capacity to channel robustness and $\mathcal{R}\qty(\choistate{\chan})$.
\begin{theorem}[Sandwich Theorem]\label{thm:sandwich}
For any CPTP map $\chan$, the following inequalities hold:
\begin{equation}
    \RChoi{\chan} \leq \Mcap{\chan} \leq \Rstar{\chan}.
\end{equation}
Moreover, if the unitary operation $\mathcal{U}$ is in the third level of the Clifford hierarchy, then we have equality:
\begin{equation}
    \RChoi{\mathcal{U}} = \Mcap{\mathcal{U}} = \Rstar{\mathcal{U}}.
\end{equation}
\end{theorem}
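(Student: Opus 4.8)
The plan is to prove the two sandwich inequalities separately and then, for third-level gates, establish the reverse of the whole chain so that everything collapses to equality. The lower bound $\RChoi{\chan} \leq \Mcap{\chan}$ is essentially immediate: the maximally entangled state $\ket{\Omega_n}^{A|B}$ is itself a stabiliser state (a product of $n$ Bell pairs), so $\op{\Omega_n} \in \STABmth{2n}$ is one of the admissible inputs in the maximisation defining the capacity. Since the Choi state is exactly $\Phi_\chan = \outstate{\chan}{\op{\Omega_n}}{n}$, choosing $\ket{\phi} = \ket{\Omega_n}$ gives
\begin{equation*}
\Mcap{\chan} \;=\; \max_{\ket{\phi}\in\STABmth{2n}} \rob{(\chan\otimes\idn{n})\op{\phi}}{} \;\geq\; \rob{(\chan\otimes\idn{n})\op{\Omega_n}}{} \;=\; \RChoi{\chan}.
\end{equation*}

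For the upper bound $\Mcap{\chan} \leq \Rstar{\chan}$, I would start from an optimal channel-robustness decomposition $\chan = (1+p)\Lambda_+ - p\Lambda_-$ with $\Lambda_\pm \in \SPOmth{n,n}\cap\mathrm{CPTP}$ and $2p+1 = \Rstar{\chan}$. Fix an arbitrary $\ket{\phi}\in\STABmth{2n}$ and tensor each map with $\idn{n}$, giving
\begin{equation*}
(\chan\otimes\idn{n})\op{\phi} \;=\; (1+p)\,(\Lambda_+\otimes\idn{n})\op{\phi} \;-\; p\,(\Lambda_-\otimes\idn{n})\op{\phi}.
\end{equation*}
Because $\Lambda_\pm$ are completely stabiliser-preserving (Theorem \ref{thmcor:SPOnn}), each $\sigma_\pm := (\Lambda_\pm\otimes\idn{n})\op{\phi}$ lies in $\STABmth{2n}$, so the right-hand side is a valid stabiliser decomposition of the output with $\ell_1$-weight $2p+1$. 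By the $1+2p$ form of the robustness of a state this yields $\rob{(\chan\otimes\idn{n})\op{\phi}}{} \leq 2p+1 = \Rstar{\chan}$, and maximising over $\ket{\phi}$ gives $\Mcap{\chan}\leq\Rstar{\chan}$.

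For the equality when $\mathcal{U}$ is a third-level Clifford-hierarchy unitary, the sandwich already forces $\RChoi{\mathcal{U}}\leq\Mcap{\mathcal{U}}\leq\Rstar{\mathcal{U}}$, so it suffices to prove the reverse inequality $\Rstar{\mathcal{U}}\leq\RChoi{\mathcal{U}}$. The key tool is deterministic state injection (Figure \ref{fig:state_injection}). There is a single fixed protocol—built only from Clifford unitaries, a Pauli measurement, and an outcome-conditioned Clifford correction—that consumes a $2n$-qubit resource $\omega$ and acts on an $n$-qubit input $\rho$; writing it as $\mathcal{I}[\omega](\rho)=\Tr_{\text{res}}[\mathcal{N}(\rho\otimes\omega)]$ for one stabiliser operation $\mathcal{N}$ makes manifest that (i) $\omega\mapsto\mathcal{I}[\omega]$ is linear; (ii) $\mathcal{I}[\omega]$ is CPTP whenever $\omega$ is a normalised state, since $\Tr_{\text{res}}[\mathcal{N}(\rho\otimes\omega)]$ has trace $\Tr[\rho]\Tr[\omega]=\Tr[\rho]$; and (iii) if $\omega\in\STABmth{2n}$ then $\mathcal{I}[\omega]\in\SPOmth{n,n}$, because $\mathcal{N}$ maps stabiliser inputs to stabiliser states and partial trace preserves the stabiliser polytope. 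The defining feature of the third level is that feeding in the resource/Choi state $\op{U}=\Phi_\mathcal{U}$ reproduces the gate exactly, $\mathcal{I}[\op{U}]=\mathcal{U}$, the byproduct operator always being Clifford-correctable. Starting from an optimal robustness decomposition $\Phi_\mathcal{U}=(1+p)\rho_+ - p\rho_-$ with $\rho_\pm\in\STABmth{2n}$ and $2p+1=\RChoi{\mathcal{U}}$, linearity gives $\mathcal{U}=(1+p)\mathcal{I}[\rho_+] - p\mathcal{I}[\rho_-]$, a feasible point of the channel-robustness program with value $2p+1$ by (ii)–(iii). Hence $\Rstar{\mathcal{U}}\leq 2p+1=\RChoi{\mathcal{U}}$, and the three quantities coincide.

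I expect the main obstacle to be items (ii)–(iii): arguing rigorously that the fixed injection protocol sends \emph{normalised} stabiliser resources to \emph{trace-preserving}, completely stabiliser-preserving $n$-qubit channels, and that the very same protocol $\mathcal{N}$ which realises $\mathcal{U}$ from $\op{U}$ continues to output CPTP maps when $\op{U}$ is replaced by an arbitrary stabiliser mixture. This is exactly where the third-level hypothesis enters—only there is the measurement byproduct deterministically correctable by a Clifford, so that no postselection (and hence no loss of trace preservation) is needed; for higher levels the correction would itself be non-Clifford and the argument breaks.
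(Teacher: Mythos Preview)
Your proof is correct and follows essentially the same route as the paper. The two inequalities are handled identically, and for third-level gates both you and the paper convert an optimal \emph{state} decomposition of $\Phi_\mathcal{U}$ into a feasible \emph{channel} decomposition of $\mathcal{U}$ via deterministic state injection, yielding $\Rstar{\mathcal{U}}\leq\RChoi{\mathcal{U}}$.

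The only notable difference is stylistic: where you argue abstractly that the fixed injection protocol $\mathcal{N}$ is itself a CPTP stabiliser circuit, so that $\omega\mapsto\mathcal{I}[\omega]$ automatically sends normalised stabiliser resources to CPTP completely stabiliser-preserving maps, the paper instead works at the Choi-state level. It applies the teleportation circuit to $\Phi_\mathcal{U}^{AB}\otimes\op{\Omega}^{CD}$, obtains the induced decomposition $\Phi_\mathcal{U}^{AD}=(1+p)\widetilde{\rho}_+ - p\widetilde{\rho}_-$, and then \emph{explicitly} verifies the trace-preservation condition $\Tr_A(\widetilde{\rho}_\pm)=\idn{n}/2^n$ by computing with the Bell-measurement Kraus operators (using cyclicity of the partial trace and $\sum_j M_j^\dagger M_j=\id$). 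Your concern about items (ii)--(iii) is therefore exactly the point the paper spells out by direct calculation; your abstract argument that $\mathcal{N}$ is CPTP and stabiliser-preserving already suffices, and the explicit computation is just a concrete instantiation of it.
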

We are interested in whether or not these inequalities are tight for more general operations. In Table \ref{tab:diag_results} we summarise numerical results for a selection of diagonal gates. The results for these gates are presented in full in Section \ref{sec:numerical}.

\begin{table}[htbp]
    \centering
    \begin{tabular}{c|c|c|c|c}
    
         $n$ & 2 & 3 & 4 & 5  \\
         \hline
         Multicontrol gates, $t=0$ & $\mathcal{R}_\Phi = \mathcal{C} = \mathcal{R}_*$
                                        & $\mathcal{R}_\Phi = \mathcal{C} = \mathcal{R}_*$
                                        & $\mathcal{R}_\Phi = \mathcal{C} = \mathcal{R}_*$
                                        & $\mathcal{R}_\Phi = \mathcal{C} < \mathcal{R} $ \\
         Multicontrol gates, $t\geq 1$  & $\mathcal{R}_\Phi = \mathcal{C} = \mathcal{R}_*$ 
                                        & $\mathcal{R}_\Phi = \mathcal{C} = \mathcal{R}_*$
                                        & $\mathcal{R}_\Phi = \mathcal{C} < \mathcal{R}_*$
                                        & $\mathcal{R}_\Phi < \mathcal{C} < \mathcal{R} $ \\
         Random phase gates         & $\mathcal{R}_\Phi = \mathcal{C} = \mathcal{R}_*$ 
                                        & $\mathcal{R}_\Phi = \mathcal{C} \leq \mathcal{R}_*$
                                        & $\mathcal{R}_\Phi = \mathcal{C} \leq \mathcal{R}_*$
                                        & - \\
    \end{tabular}
    \caption{Tightness of bound given by Theorem \ref{thm:sandwich}, as determined by numerical estimation of diagonal gates, where $\mathcal{R}_\Phi$ is the robustness of the Choi state, $\mathcal{C}$ is the magic capacity, $\mathcal{R}_*$ is the trace-preserving variant of $\mathcal{R}_\Phi$. Here an equality indicates that in all cases investigated, values calculated were equal up to the precision of the solver. Multicontrol phase gates are taken to be those represented by unitaries of the form $\mathrm{diag}(1,\ldots,1, \exp[i \pi/2^t])$.}
    \label{tab:diag_results}
\end{table}
The magic capacity also quantifies the sample complexity for a Monte Carlo-type classical simulation algorithm, presented in Section \ref{subsec:algorithm_convex}. This differs from previous algorithms such as Bennink et al. \cite{Bennink2017} in that a convex optimisation must be solved at each step. While this results in an increase in runtime per sample, it can be the case that $\mathcal{C}(\chan) \ll \mathcal{R}_*(\chan)$, which can lead to an improvement in sample complexity over the algorithm of Section \ref{subsec:algorithm_direct}.

\section{\label{sec:SPO}Completely stabiliser-preserving operations}
In this section, we justify setting \SPO{n,n}  as the class of free operations. We begin with an example channel $\chan \in \SPOmth{n,0}$ that fails to be stabiliser-preserving when acting on part of a larger system. Consider the single-qubit channel $\chan_T$ defined by the Kraus operators $\qty{\op{0}{T},\op{1}{T_\perp} } $, where $\ket{T} = T\ket{+}$ and $\ket{T_\perp} = T\ket{-}$. Clearly, applied to any single-qubit state, the output will be some probabilistic mixture of $\ket{0}$ and $\ket{1}$, and so must have $\mathcal{R}=1$, so $\chan_T \in \SPOmth{n,0}$. But if $\chan_T$ is applied to one qubit in a Bell pair, we obtain:
\begin{equation}
 \outstate{\chan_T}{\op{\Phi_+}}{} = \frac{1}{2} \qty(\op{0T^*} + \op{1 T_\perp^*}),   
\end{equation}
where $\ket{T^*}= T^\dagger \ket{+}$, $\ket{T^*_\perp}= T^\dagger \ket{-}$. From this output state, we can deterministically recover a pure magic state on qubit 2 using only stabiliser operations, by making a $Z$-measurement on qubit 1 and then performing a rotation on qubit 2 conditioned on the outcome. The output state has robustness $\robmag{ \outstate{\chan_T}{\op{\Phi_+}}{}}{} = \robmag{\ket{T}}{} = \sqrt{2}$. 

So, there exist channels where $\chan \in \SPOmth{n,m}$ but $\chan \notin \SPOmth{n,m + 1}$. To call a channel \emph{completely} stabiliser-preserving, then, we need to be sure $\chan \otimes \idn{m}$ remains stabiliser-preserving for all $m>0$.  We now show we only need tensor with the identity of the same dimension as the original channel.

\begin{lemma}[Maximum robustness achieved on $2n$ qubits]\label{lem:twontensor}
Let $\chan$ be an $n$-qubit quantum channel. Then for $m>0$, for any $\ket{\phi} \in \STABmth{2n+m}$, there exists some state $\ket{\psi}\in \STABmth{2n}$ such that:
\begin{equation}
{\mathcal{R}\qty[\qty(\chan^A \otimes \id_{n+m} )\op{\phi}^{AB}] = \mathcal{R}\qty[\qty(\chan^A \otimes \id_n )\op{\psi}^{AB'}]}.\end{equation}
\end{lemma}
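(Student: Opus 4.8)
The plan is to exploit the fact that $\chan$ acts only on the $n$-qubit factor $A$, so the only feature of $\ket{\phi}^{AB}$ that can influence the output is the reduced state $\rho_A = \Tr_B \op{\phi}$ together with the correlations $A$ shares with $B$. Since a purification of an $n$-qubit state never needs more than $n$ ancilla qubits, my aim is to rotate $\ket{\phi}$, using a Clifford that acts \emph{only} on the identity factor $B$, into a form in which all but $n$ of the $B$-qubits are disentangled. Because the disentangling Clifford never touches $A$, it commutes with $\chan \otimes \id$ and leaves $\rho_A$ fixed; this is exactly what lets me carry the \emph{same} channel $\chan$ through to the reduced picture, rather than some Clifford-conjugated version of it.

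Concretely, first I would invoke the structural fact that any pure stabiliser state has a flat entanglement spectrum across the $A|B$ cut: the Schmidt rank is $2^r$ for some integer $r$, all nonzero Schmidt coefficients are equal, and $r \leq n$ since $\dim \mathcal{H}_A = 2^n$. Hence $\rho_B = \Tr_A \op{\phi}$ is maximally mixed on a stabiliser code space of dimension $2^r$. Any such code space can be mapped by a Clifford $C_B$ acting on $B$ to the standard one in which $n+m-r$ of the qubits are fixed to $\ket{0}$, so that
\begin{equation}
(\idn{A} \otimes C_B)\ket{\phi}^{AB} = \ket{\chi}^{A B_1} \otimes \ket{0}^{\otimes(n+m-r)}_{B_2},
\end{equation}
where $B_1$ carries $r$ qubits, $B_2$ the remaining $n+m-r$, and $\ket{\chi}$ is a stabiliser state on $n+r \leq 2n$ qubits. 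The product form across the $B_2|(AB_1)$ cut is forced because the reduced state on $B_2$ is pure. Establishing this disentangling step is the main obstacle: it rests on the flatness of the stabiliser entanglement spectrum and the Clifford-reducibility of stabiliser codes to canonical form, for which I would appeal to the canonical form of bipartite stabiliser states.

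With this in hand, the conclusion follows from two elementary invariances of $\mathcal{R}$ already recorded for states. Since $\chan$ acts on $A$ it commutes with $\idn{A}\otimes C_B$, and Clifford conjugation leaves the robustness unchanged (monotonicity applied to both $C_B$ and $C_B^\dagger$), giving $\mathcal{R}\qty[(\chan\otimes\idn{n+m})\op{\phi}] = \mathcal{R}\qty[(\chan\otimes\idn{n+m})\qty(\op{\chi}\otimes\op{0}_{B_2})]$. The output now factorises as $\qty[(\chan\otimes\idn{r})\op{\chi}]\otimes\op{0}_{B_2}$, and tensoring with (or discarding) the stabiliser state $\op{0}_{B_2}$ changes nothing: submultiplicativity under tensor product gives $\leq$, while monotonicity under the partial trace gives $\geq$, so $\mathcal{R}\qty[(\chan\otimes\idn{n+m})\op{\phi}] = \mathcal{R}\qty[(\chan\otimes\idn{r})\op{\chi}]$. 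Finally I would set $\ket{\psi}^{AB'} := \ket{\chi}^{AB_1}\otimes\ket{0}^{\otimes(n-r)} \in \STABmth{2n}$; the identical tensor-with-stabiliser argument applied to this trivial padding shows $\mathcal{R}\qty[(\chan\otimes\idn{n})\op{\psi}] = \mathcal{R}\qty[(\chan\otimes\idn{r})\op{\chi}]$, and comparing the two identities yields the claimed equality.
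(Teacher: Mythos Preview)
Your proposal is correct and follows essentially the same route as the paper's proof: both disentangle all but at most $n$ qubits of the $B$ register using a Clifford acting only on $B$, then use Clifford invariance of $\mathcal{R}$ together with the fact that tensoring by a stabiliser state leaves $\mathcal{R}$ unchanged. The paper invokes the bipartite canonical form of Fattal et al.\ (local Clifford equivalence to at most $n$ Bell pairs across $A|B$), whereas you phrase the same structural fact via the flat Schmidt spectrum and Clifford transitivity on stabiliser code spaces of a given dimension; these are equivalent packagings of the same lemma, and the remainder of the argument is identical.
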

\begin{proof}
Consider a $(2n+m)$-qubit stabiliser state $\ket{\phi}$, with partition $A|B$ between the first $n$ and last $n+m$ qubits. Ref. \cite{Fattal2004} shows that the state $\ket{\phi}^{AB}$ is local Clifford-equivalent to $p$ independent Bell pairs entangled across the partition $A|B$  (here ``local'' means with respect to the bipartition rather than per qubit). Since there are $n$ qubits in partition $A$, $p$ is at most $n$. Let $B'|B''$ be a partition of $B$ into $n$ and $m$ qubits. Then by local permutation of qubits within B, we can take these $p\leq n$ Bell pairs to be entangled across $A|B'$. So we have:
\begin{equation}
 \ket{\phi}^{AB} = \qty(\idn{n} \otimes U^B ) \ket{\psi}^{AB'} \ket{\psi'}^{B''},
\end{equation}
where $U^B$ is a Clifford operation, $\ket{\psi}^{AB'} \in \STABmth{2n}$ and $\ket{\psi'}^{B''}\in \STABmth{m}$. So writing the channel corresponding to $U^B$ as $\mathcal{U}^B$, for any $\chan$ on $n$ qubits, we know that:
\begin{align}
\mathcal{R}\qty[\qty(\chan^A \otimes \id_{n+m}) \op{\phi}^{AB}]
		&= \mathcal{R}\qty[\qty(\idn{n} \otimes \mathcal{U}^B)\qty(\qty(\chan^A \otimes \idn{n})\qty(\op{\psi}^{AB'})  \otimes    \op{\psi'}^{B''})].
\end{align}
Since $\idn{n} \otimes\mathcal{U}^B$ represents a (reversible) Clifford gate, by monotonicity of robustness of magic:
\begin{align}
\mathcal{R}\qty[\qty(\chan^A \otimes \idn{n+m}) \op{\phi}^{AB}] &= \mathcal{R}\qty[\qty(\chan^A \otimes \idn{n})\qty(\op{\psi}^{AB'})  \otimes    \op{\psi'}^{B''}] \\
& = \mathcal{R}\qty[\qty(\chan^A \otimes \idn{n})\op{\psi}^{AB'}],
\end{align}
where in the last line we used the fact that $\ket{\psi'}^{B''}$ is a stabiliser state, and hence does not contribute to the robustness. The state $\op{\psi}^{AB'}$ is a $2n$-qubit state, so this proves the result. \end{proof}

This lemma allows us to prove the first claim of Theorem \ref{thmcor:SPOnn}, which says that $\chan$  is completely stabiliser-preserving  if and only if $\chan \in \SPOmth{n,n}$. The inclusion $ \SPOmth{n,n} \subseteq  \SPOmth{n,n+m}$ is immediate since the stabiliser states are preserved under tracing out of auxiliary systems.  The interesting inclusion is $ \SPOmth{n,n+m} \subseteq  \SPOmth{n,n}$. Suppose that $\chan \in \SPOmth{n,n}$ and consider any $\sigma \in \STABmth{2n + m}$. By Lemma \ref{lem:twontensor} there exists some stabiliser state $\sigma' \in \STABmth{2n}$ such that $\robmag{\outstate{\chan}{\sigma}{n+m}}{} = \robmag{\outstate{\chan}{\sigma'}{n}}{}$. But if $\chan \in \SPOmth{n,n}$, then $\outstate{\chan}{\sigma'}{n}$ is a stabiliser state, so the robustness is equal to 1. By the faithfulness of robustness of magic, $\outstate{\chan}{\sigma}{n+m}$ is a stabiliser state. Therefore, $\chan \in \SPOmth{n,n}$ implies $\chan \in \SPOmth{n,n + m}$. Next, we discuss a straightforward test for membership of this set, which does not require mechanically checking all possible input stabiliser states.

We can associate every $n$-qubit channel $\chan$ with a unique density operator on $2n$ qubits \cite{Choi,Jamiolkowski1972,Choi1975}:
\begin{equation}
	\choistate{\chan}^{AB} = \qty(\chan^A \otimes \id^B)\op{\Omega_n}^{A|B}, \quad \text{where} \quad  \ket{\Omega_n}^{A|B} = \frac{1}{\sqrt{2^n}}\sum_{j=0}^{2^n - 1} \ket{j}^A \otimes \ket{j}^B.
	\label{eq:CJdefn}
\end{equation}
Here $\ket{j}$ label the computational basis states. We will also use the following property:
\begin{equation}
	\Tr[A \mathcal{E}\qty(\rho)] = 2^n \Tr[\Phi_\mathcal{E}(A\otimes \rho^T)], \quad \forall\,\rho,A. \label{eq:CJtrace}
\end{equation}
Consider the robustness of magic of the Choi state, $\mathcal{R}(\Phi_\chan)$. We mentioned earlier that $\mathcal{R}(\Phi_\chan)$ quantifies simulation cost for gates from the third level of the Clifford hierarchy. This motivates us to consider its properties for more general operations, and it turns out that $\RChoi{\chan}$ gives us our first criterion for completely stabiliser-preserving channels.

\begin{lemma}[Faithfulness of robustness of the Choi state]\label{thm:choi_faithfulness}
	Consider the $n$-qubit CPTP channel $\chan$. If $\chan \in \SPOmth{n,n}$, then $\RChoi{\chan} = 1$. Otherwise, $\RChoi{\chan} > 1$.
\end{lemma}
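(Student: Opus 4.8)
The plan is to reduce the statement to the single question of whether the Choi state $\choistate{\chan}$ is a stabiliser state, and then to settle that question in both directions. Since $\RChoi{\chan} = \robmag{\choistate{\chan}}{}$ by definition, the faithfulness property of the robustness of magic for states (Section~\ref{sec:defs}) tells us that $\RChoi{\chan} = 1$ precisely when $\choistate{\chan} \in \STABmth{2n}$, and $\RChoi{\chan} > 1$ otherwise. So it suffices to prove the equivalence that $\choistate{\chan} \in \STABmth{2n}$ if and only if $\chan \in \SPOmth{n,n}$.

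One direction is immediate. The maximally entangled state $\ket{\Omega_n}$ is a stabiliser state (a product of $n$ Bell pairs across the $A|B$ partition), so $\op{\Omega_n} \in \STABmth{2n}$. If $\chan \in \SPOmth{n,n}$, then by the defining property of $\SPOmth{n,n}$ the state $\choistate{\chan} = \outstate{\chan}{\op{\Omega_n}}{n}$ is a stabiliser state, giving $\RChoi{\chan} = 1$.

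For the converse I would use a gate-teleportation gadget. Suppose $\choistate{\chan}^{AB} \in \STABmth{2n}$ and fix an arbitrary $\sigma^{CD} \in \STABmth{2n}$, with $\chan$ to act on the $n$-qubit register $C$. First I would form the $4n$-qubit stabiliser state $\sigma^{CD} \otimes \choistate{\chan}^{AB}$, and then perform the Pauli measurement projecting registers $C$ and $B$ onto $\ket{\Omega_n}^{CB}$ (the simultaneous measurement of the commuting Pauli generators of the stabiliser group of $\ket{\Omega_n}^{CB}$). A direct contraction of the $C$ and $B$ indices, using $\choistate{\chan} = \frac{1}{2^n}\sum_{jk}\chan\qty(\op{j}{k})^A \otimes \op{j}{k}^B$, shows that the unnormalised post-measurement state on registers $AD$ equals $\frac{1}{2^{2n}}\outstate{\chan}{\sigma}{n}$, with the output of $\chan$ now carried by register $A$. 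Because $\chan$ is trace-preserving, this outcome occurs with probability $\Tr\qty[\frac{1}{2^{2n}}\outstate{\chan}{\sigma}{n}] = 2^{-2n} > 0$, so conditioning is well defined and the normalised conditional state is exactly $\outstate{\chan}{\sigma}{n}$. Since the input $\sigma^{CD} \otimes \choistate{\chan}^{AB}$ is a stabiliser state and a Pauli measurement maps stabiliser states to stabiliser states (Gottesman--Knill), the conditional output is a stabiliser state. As $\sigma$ was an arbitrary element of $\STABmth{2n}$, this yields $\chan \in \SPOmth{n,n}$.

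The hard part will be verifying the teleportation identity itself, namely that projecting $C,B$ onto $\ket{\Omega_n}^{CB}$ transfers the action of $\chan$ from the reference leg of the Choi state onto the input register $\sigma$ with no spurious transpose, complex conjugation, or residual Clifford correction, together with confirming that this outcome has nonzero probability so that the conditional state is meaningful. Everything else, the easy direction and the appeal to faithfulness, is routine; the real content is that a stabiliser Choi state lets one simulate $\chan$ while consuming only stabiliser resources and stabiliser measurements.
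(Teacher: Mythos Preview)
Your proof is correct, and it takes a genuinely different route from the paper's own argument.

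The paper proves the nontrivial direction by contrapositive using witnesses: assuming $\chan \notin \SPOmth{n,n}$, it picks a stabiliser input $\ket{\phi'}$ whose output $\rho'=(\chan\otimes\idn{n})\op{\phi'}$ is non-stabiliser, invokes hyperplane separation to obtain a witness $W_{\rho'}$ with $\Tr[W_{\rho'}\rho']>0$ but $\Tr[W_{\rho'}\sigma]\leq 0$ for all stabiliser $\sigma$, and then shows that $W_{\rho'}\otimes\op{\phi'}^T$ is a good $4n$-qubit witness detecting $\Phi_{\chan\otimes\idn{n}}=\Phi_\chan\otimes\op{\Omega_n}$ via the Choi trace identity~\eqref{eq:CJtrace}. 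This forces $\Phi_\chan$ out of the stabiliser polytope.

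You instead prove the forward implication directly and operationally: a stabiliser Choi state is consumed in a postselected Bell measurement to enact $\chan$ on an arbitrary stabiliser input, and closure of the stabiliser polytope under Pauli measurements and partial trace (Gottesman--Knill) does the rest. Your index-contraction check is right, and in particular there is no spurious transpose because the projection onto $\ket{\Omega_n}^{CB}$ links the \emph{reference} leg $B$ of the Choi state to the input register $C$, landing the channel output on $A$; the nonzero success probability $2^{-2n}$ follows from trace preservation exactly as you say. The paper in fact uses this very teleportation gadget later (Section~\ref{sec:algorithms} and equation~\eqref{eq:stateupdate}) to justify efficient simulation of $\SPOmth{n,n}$, so your argument could have been slotted in here with no new machinery. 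What the paper's witness approach buys is an explicit separating functional, which fits the broader resource-theoretic narrative; what your approach buys is a shorter, constructive proof that avoids hyperplane separation altogether.
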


\begin{proof}
	The fact that $\chan \in \SPOmth{n,n}$ implies $\RChoi{\chan} = 1$ is easy to see. Since $\op{\Omega_n}$ is itself a $2n$-qubit stabiliser state, $\chan \in \SPOmth{n,n}$ guarantees that $\choistate{\chan}$ is a stabiliser state, so must have robustness 1. The implication in the other direction is less obvious; one might imagine there perhaps exist maps that send $\op{\Omega_n}$ in particular to a stabiliser state, but are not stabiliser-preserving in general. We show that this is not the case using an argument based on witnesses for non-stabiliser states, in part inspired by the conditions for free operations (SPO) given by Ahmadi et al. \cite{Ahmadi} for odd prime dimension qudits and for the single qubit case. The criteria for SPO were based on a class of witness defined by phase point operators. Here we instead consider the following family of witnesses for $n$ qubits. We say that $W_n$ is a good witness for $n$-qubit non-stabiliser states if:
	\begin{equation}
		\Tr(W_n \sigma ) \leq 0, \quad \forall \sigma \in \STABmth{n}.
		\label{eq:goodwitness}
	\end{equation}
	The hyperplane separation theorem \cite{Boyd2004} guarantees that such witnesses exist and can be constructed for any non-stabiliser state $\rho$. That is, for any $\rho \notin \STABmth{n}$, there always exists an operator $W_\rho$ such that $\Tr(W_\rho \rho)>0$ and yet is a good $n$-qubit witness as defined above.
	
	We first show that for any good $n$-qubit witness $W_n$, the operator $W_n \otimes \op{\phi}$, where $\ket{\phi}\in \STABmth{m}$, is a good witness for $(n + m)$-qubit non-stabiliser states. For any $\sigma \in \STABmth{n+m}$:
	\begin{align}
		\Tr[( W_n \otimes \op{\phi}) \sigma] & = \Tr[ \qty(\idn{n}\otimes \op{\phi}) \qty(W_n \otimes \idn{m}) \sigma] = \Tr[\qty(W_n \otimes \idn{m}) \widetilde{\sigma} ],
	\end{align}
	so that ${\widetilde{\sigma} = \qty(\idn{n}\otimes \op{\phi}) \sigma \qty(\idn{n}\otimes \op{\phi}) }$, where we used cyclicity of the trace and the fact that $\op{\phi}$ is a projector. If $\widetilde{\sigma}=0$ then the inequality \eqref{eq:goodwitness} is trivially always satisfied by $W_n \otimes \op{\phi}$.  Otherwise, $\widetilde{\sigma}$ is a stabiliser state (non-normalised) and so too is $\Tr_B[\widetilde{\sigma}]$.   Then:
		\begin{align}
			\Tr[\qty(W_n \otimes \idn{m}) \widetilde{\sigma} ] & =  \Tr[W_n \Tr_B\qty(\widetilde{\sigma}) ] \leq 0.
		\end{align}
		The inequality follows because $W_n$ is a good witness and $\Tr_B[\widetilde{\sigma}]$ is a stabiliser state. Therefore, $W_n \otimes \op{\phi}$ is also a valid witness.
	
	Now suppose $\chan \notin \SPOmth{n,n}$. Then there is some stabiliser state $\ket{\phi'} \in \STABmth{2n}$, such that $\rho' = \qty(\chan \otimes \idn{n}) \op{\phi'} \notin \STABmth{2n}$. By the hyperplane separation theorem, there exists a good $2n$-qubit witness $W_{\rho'}$ such that $\Tr[W_{\rho'} \rho']>0$. Consider that the $4n$-qubit state $\ket{\Omega_{2n}}^{AA'|BB'}$ is unentangled across the partition $AB|A'B'$, so that we can write:
	\begin{equation}
	\ket{\Omega_{2n}}^{AA'|BB'} = \frac{1}{2^{2n}} \sum_{j,k} \ket{j}^{A} \ket{k}^{A'} \otimes \ket{j}^{B} \ket{k}^{B'}= \ket{\Omega_n}^{A|B}\otimes \ket{\Omega_n}^{A'|B'},
	\end{equation}
	taking care to note the permutation of subspaces. Therefore the Choi state for $\qty(\chan \otimes \idn{n})$ is:
	\begin{equation}
	    \Phi^{AA'|BB'}_{\chan \otimes \idn{n}} =(\chan^A \otimes \idn{n}^{A'}) \otimes \idn{2n}^{BB'}(\op{\Omega_{2n}}^{AA'|BB'}) = \Phi_\chan^{AB} \otimes \op{\Omega_n}^{A'|B'}.
	\end{equation}
     We then use equation \eqref{eq:CJtrace} to obtain:
	\begin{align}
		0 < \frac{1}{2^n}\Tr[W_{\rho'} \rho'] & = \Tr[\Phi_{\chan \otimes \idn{n}} (W_{\rho'} \otimes \op{\phi'}^T)], \quad \text{where} \quad \op{\phi'}^T \in \STABmth{2n}. 
	\end{align}
	But $(W_{\rho'} \otimes \op{\phi'}^T)$ is a good witness, so $\Phi_\chan \otimes \op{\Omega_n} \notin \STABmth{4n}$ and therefore $\Phi_\chan$ is a non-stabiliser state. So, by faithfulness of robustness of magic, if $\chan \notin \SPOmth{n,n}$ then $\mathcal{R}\qty(\Phi_\chan)> 1$.
\end{proof}
Combined the above two lemmas provide a proof of both claims given in Theorem \ref{thmcor:SPOnn}.  This does not mean the robustness of the Choi state is a reliable monotone, since despite being faithful it fails to be submultiplicative under composition (see Appendix \ref{app:choiR}).  Rather, we use the faithfulness of the Choi state as a tool to give an alternative definition of the channel robustness as captured by Eq.~\eqref{AlternateDef}.

\section{\label{sec:choi}Channel robustness}

A natural extension of the algorithm of Bennink et al.  \cite{Bennink2017} is to replace $\CPR$ (the set of Clifford gates and Pauli reset channels) with \SPO{n,n}. We therefore define the \emph{channel robustness} as:
\begin{equation}
    \mathcal{R}_* (\chan) = \min_{\Lambda_\pm \in \SPOmth{n,n}}\qty{2p + 1 : (1+p)\Lambda_+ - p \Lambda_- = \chan, p \geq 0}.
\end{equation}
To calculate this in practice, we decompose the Choi state $\Phi_\chan$ as per equation \eqref{AlternateDef}, adapting the robustness of magic optimisation problem from Ref. \cite{Howard2017}. The details are given in Appendix \ref{app:channel_optimisation}. We also note that for diagonal channels $\chan$, the problem is equivalent to a decomposition of the state $\chan(\op{+}^{\otimes n})$ as:

\begin{equation}
    \chan(\op{+}^{\otimes n}) = (1+p) \rho_+ - p \rho_-, \, \text{where} \,\, \rho_{\pm} \in \STABmth{n}, \, \, \text{and} \bra{x}\rho_\pm \ket{x}  = \frac{1}{2^n}, \,\, \forall{x}. \label{eq:diagonal_decomp}
\end{equation}
Here, the condition on the partial trace for the general case is replaced by the requirement that all diagonal elements of the states $\rho_\pm$ are equal to $1/2^n$. In the special case where $\rho_\pm$ are pure, this implies they are diagonal Clifford-equivalent to graph states, though more generally they may be mixtures of states that individually do not have full support in the standard basis. In practical terms, this reduction to an $n$-qubit problem confers a significant advantage, since the number of stabiliser states (which form the extreme points of the linear programming problem) grows super-exponentially with $n$. Full technical details of this simplification are given in Appendix \ref{app:diagonal}. We now return to consider the properties of channel robustness.

The channel robustness is convex and faithful with these properties inherited from the robustness of magic (see Appendix \ref{app:channel_robustness} for details).  Here we discuss additional properties.

\textbf{Submultiplicativity under composition:} $\mathcal{R}_*(\chan_2 \circ \chan_1 ) \leq \mathcal{R}_*(\chan_1) \mathcal{R}_*(\chan_2)$.  The channels $\chan_1$ and $\chan_2$ will have an optimal decomposition:
\begin{equation}
	\chan_j = (1 + p_j) \Lambda_{j,+} - p_j \Lambda_{j,-} ,
\end{equation}
where $\mathcal{R}_*(\chan_j)=1+2p_j$ and $\Lambda_{j,\pm}$ are CPTP maps and completely stabiliser preserving.  Using these decompositions, we obtain that
\begin{equation}
	\chan_2 \circ \chan_1  = (1+q)  \Lambda_{+}'  - q \Lambda_{-}'  ,
\end{equation}
where 
\begin{align}
	\Lambda_{+}' & = (1+q)^{-1} [  (1 + p_2)(1 + p_1) \Lambda_{2,+}  \circ \Lambda_{1,+} + p_2p_1 \Lambda_{2,-}  \circ \Lambda_{1,-} ]   , \\
	\Lambda_{-}' & =q^{-1} [ p_2(1 + p_1) \Lambda_{2,-}  \circ \Lambda_{1,+} +   (1+p_2)p_1 \Lambda_{2,+}  \circ \Lambda_{1,-}  ] , \\
	q & = p_1 + p_2 + 2 p_1 p_2
\end{align}
The set of CPTP completely stabiliser preserving channels is closed under composition and convex, so both  $\Lambda_{\pm}'$ are in this set.   Therefore, we have a valid decomposition for $\chan_2 \circ \chan_1$ that entails $\mathcal{R}_*(\chan_2 \circ \chan_1 )\leq 1+2q$.  One finds  
\begin{align}
	1+ 2 q & = (1+2p_1)(1+2p_2) = \mathcal{R}_*(\chan_1) \mathcal{R}_*(\chan_2) ,
\end{align}
which completes the proof.  

\textbf{Submultiplicativity under tensor product:} $\Rstar{\chan^A \otimes \chan'^B} \leq \Rstar{\chan^A} \Rstar{\chan'^B}$. We treat tensor product as a special case of composition. For $n$-qubit $\chan_A$ and $m$-qubit $\chan'_B$:
\begin{equation}
\mathcal{R}_* (\chan^A \otimes \chan'^B) \leq \mathcal{R}_* (\chan^A \otimes \idn{m}^B) \mathcal{R}_* (\idn{n}^A \otimes \chan'^B) .
\end{equation}
To complete the proof we will confirm that 
\begin{equation}
\mathcal{R_*} (\idn{n}^A \otimes \chan^B) = \mathcal{R_*} (\chan^A \otimes \idn{n}^B) = \mathcal{R_*} (\chan).
\label{eq:RidentityEquality}
\end{equation}
As noted earlier, we can write 
$
\ket{\Omega_{n+m}}^{AA'|BB'} = \ket{\Omega_{n}}^{A|B} \otimes \ket{\Omega_{m}}^{A'|B'}
$, so that the Choi state for $\chan^A \otimes \idn{m}$ is given by:
\begin{align}
\Phi_{\chan \otimes \idn{m}}^{AA'|BB'} &= \qty(\chan^A \otimes \idn{m}^{A'} \otimes {\idn{n+m}^{BB'}} ) \op{\Omega_{n+m}}^{AA'|BB'} \nonumber \\
		& = \qty(\chan^A \otimes \idn{n}^B) \op{\Omega_{n}}^{A|B} \otimes  \op{\Omega_{m}}^{A'|B'}\nonumber  \\
        & = \Phi_{\chan}^{AB} \otimes \op{\Omega_{m}}^{A'|B'}.
\end{align}
The state $\Phi_{\chan}$ will have some optimal decomposition $\Phi_{\chan}  = {(1+p)\rho_+ - p \rho_-}$, with channel robustness $\mathcal{R}_*(\chan_A) = 1 +2p$, so that:
\begin{align}
\Phi_{\chan \otimes \idn{m}}^{AA'|BB'}  &=  (1 + p) \rho_+^{AB} \otimes \op{\Omega_{m}}^{A'|B'}  - p \rho_-^{AB} \otimes \op{\Omega_{m}}^{A'|B'}.
 \label{eq:productDecomp}
\end{align}
This is a valid, not necessarily optimal, stabiliser decomposition satisfying the trace condition, so we have
\begin{equation}
\mathcal{R}_* (\chan^A \otimes \id^{A'}) \leq \mathcal{R}_* (\chan^A).
\label{eq:RidentityInequality}
\end{equation}
This is enough to show submultiplicativity; for completeness, in Appendix \ref{app:channel_robustness} we will also show ${\mathcal{R}_*(\chan) \leq \mathcal{R}_*(\chan \otimes \id)}$ so that in fact we have equality.

\section{\label{sec:cap}Magic capacity}
\subsection{Properties}
We now turn to our second monotone, which quantifies the capacity of a channel to generate magic. Recall:
\begin{equation}
\mathcal{C}(\chan) = \max_{\ket{\phi} \in \STABmth{2n}} \mathcal{R}\qty[ \qty(\chan \otimes \idn{n} )\op{\phi}] \label{eq:capdef},
\end{equation}
where $\mathcal{R}$ is the robustness of magic. Notice that we only need optimise over the pure stabiliser states.  For mixed states or even non-stabiliser states, the capacity still captures the possible increase in robustness of magic by virtue of the \textbf{maximum increase in robustness} property:
\begin{equation}
	\frac{\robmag{\outstate{\chan}{\rho}{n}}{}}{\robmag{\rho}{}} \leq \Mcap{\chan}
	\label{eq:maxincrease}.
\end{equation}
Here we prove this property, using similar arguments to those deployed in \cite{Campbell2010}.  Consider an $n$-qubit channel $\chan$. Any $2n$-qubit input state $\rho$ will have an optimal stabiliser state decomposition $\rho = \sum_j q_j \op{\phi_j}$, where $\sum_j q_j = 1$, and such that 
	$\robmag{\rho}{} = \sum_j |q_j|$. By linearity we have:
	\begin{equation}
		(\chan \otimes \idn{n})\rho = \sum_j q_j (\chan \otimes \idn{n}) \op{\phi_j}.
	\end{equation}
	By convexity of robustness of magic, we then have:
	\begin{align}
		\robmag{\outstate{\chan}{\rho}{n}}{} &\leq \sum_j |q_j|\, \robmag{(\chan \otimes \idn{n}) \op{\phi_j}}{}.
	\end{align}
	The optimal pure stabiliser state $\ket{\phi_*}$, satisfies: $$\Mcap{\chan} = \robmag{\outstate{\chan}{\op{\phi_*}}{n}}{} \geq \robmag{\outstate{\chan}{\op{\phi_j}}{n}}{}$$ for any $j$.  So we have:
	\begin{align}
		\robmag{\outstate{\chan}{\rho}{n}}{} &\leq \robmag{\outstate{\chan}{\op{\phi_*}}{n}}{} \sum_j |q_j|  = \Mcap{\chan} \robmag{\rho}{}.
	\end{align}
	Rearranging we obtain inequality \eqref{eq:maxincrease}.

\textbf{Submultiplicativity under composition:} $\Mcap{\chan_1 \circ \chan_2} \leq \Mcap{\chan_1} \Mcap{\chan_2}$.  Take the composition of two linear maps $\chan_1$ and $\chan_2$. There exists some stabiliser state $\rho_* = \op{\phi_*}$ that achieves the optimal robustness:
\begin{align}\Mcap{\chan_2 \circ \chan_1} &= \robmag{\qty[(\chan_2 \circ \chan_1) \otimes \idn{n}] \rho_* }{}  = \robmag{(\chan_2 \otimes \idn{n}) \circ (\chan_1 \otimes \idn{n}) \rho_* }{}.
\end{align}
The operator $(\chan_1 \otimes \idn{n}) \rho_* $ will have some  optimal decomposition $(\chan_1 \otimes \idn{n}) \rho_* = \sum_k q_{1k} \op{\phi_k}$ such that $\robmag{(\chan_1 \otimes \idn{n}) \rho_*}{} = \sum_k \abs{q_{1k}}$. So by linearity:
\begin{align}
(\chan_2 \otimes \idn{n}) \circ (\chan_1 \otimes \idn{n}) \qty[\rho_*] & = (\chan_2 \otimes \idn{n}) \qty[ \sum_k q_{1k} \op{\phi_k}  ]  = \sum_k q_{1k} (\chan_2 \otimes \idn{n})\op{\phi_k}.
\end{align}
Then by convexity of robustness of magic:
\begin{align}
 \robmag{(\chan_2 \otimes \idn{n}) \circ (\chan_1 \otimes \idn{n}) \qty[\rho_*] }{} & = \robmag{\sum_k q_{1k} (\chan_2 \otimes \idn{n})\op{\phi_k}}{}\\
 & \leq \sum_k \abs{q_{1k}} \robmag{(\chan_2 \otimes \idn{n})\op{\phi_k}}{} \label{eq:subMultStep1}\\
 & \leq \sum_k \abs{q_{1k}} \Mcap{\chan_2} \label{eq:subMultStep2}\\
 & = \robmag{(\chan_1 \otimes \idn{n})\rho_*}{}  \Mcap{\chan_2},
\end{align}
where to go from \eqref{eq:subMultStep1} to \eqref{eq:subMultStep2} we used the fact that since $\ket{\phi_k}$ are stabiliser states,  $\robmag{(\chan_2 \otimes \id)\op{\phi_k}}{}$ can be no larger than $\Mcap{\chan_2}$. Finally, using the fact that $\robmag{(\chan_1 \otimes \id) \rho_*}{} \leq \Mcap{\chan_1}$, we have
$\Mcap{\chan_2 \circ \chan_1} \leq \Mcap{\chan_2} \Mcap{\chan_1}$, completing the proof.

\textbf{Submultiplicativity under tensor product}: $\Mcap{\chan^A \otimes \chan'^B} \leq \Mcap{\chan^A} \Mcap{\chan'^B}$. This follows directly from submultiplicativity under composition, since
\begin{align}
\Mcap{\chan^A \otimes \chan'^B} & =  \Mcap{\qty(\chan^A \otimes \idn{m}^B) \circ \qty(\idn{n}^A \otimes \chan'^B)  }  \leq \Mcap{\chan^A \otimes \idn{m}^B} \Mcap{\idn{n}^A \otimes \chan'^B}.
\label{eq:submultCapTensor}
\end{align}
We saw in Section \ref{sec:SPO} that any gains in robustness achievable by tensoring $\chan_A$ with the identity and acting on a larger Hilbert space are already taken care of by the $\otimes \idn{n}$ in the definition \eqref{eq:capdef}, so that  $\Mcap{\chan^A \otimes \idn{m}^B} = \Mcap{\chan^A}$ and $\Mcap{\idn{n}^A \otimes \chan'^B} = \Mcap{\chan'^B}$. Substituting this into inequality \eqref{eq:submultCapTensor} gives the desired result.

\subsection{Sandwich theorem\label{subsec:sandwich}}
We will now prove Theorem \ref{thm:sandwich}, which stated that $\RChoi{\chan} \leq \Mcap{\chan} \leq \Rstar{\chan}$, for any CPTP channel $\chan$, and that $\RChoi{\mathcal{U}} = \Mcap{\mathcal{U}} = \Rstar{\mathcal{U}}$ for any unitary operation $\mathcal{U}$ from the third level of the Clifford hierarchy.

\begin{proof}
By definition $\choistate{\chan} = \outstate{\chan}{\op{\Omega_n}}{n}$. But $\ket{\Omega_n}$ is a stabiliser state, so $\RChoi{\chan}$ can be no larger than $\robmag{\outstate{\chan}{\op{\phi_*}}{n}}{} = \mathcal{C}(\chan)$, where $\ket{\phi_*}$ is the stabiliser state that achieves the capacity, and so:
\begin{equation}
    \mathcal{R}\qty(\choistate{\chan}) \leq \mathcal{C}(\chan).
\end{equation}

Now suppose $\chan = (1+p) \Lambda_+ - p \Lambda_-$ is the optimal decomposition of $\chan$ into CPTP stabiliser-preserving maps, $\Lambda_\pm \in \SPOmth{n,n}$, so that $\Rstar{\chan} = 1 + 2p$. Then for any \emph{input} stabiliser state $\sigma \in \STABmth{2n}$, we can write down a valid stabiliser decomposition of the \emph{output} state: 
\begin{equation}
    \outstate{\chan}{\sigma}{n} = (1 + p) \outstate{\Lambda_+}{\sigma}{n} - p \outstate{\Lambda_-}{\sigma}{n}.
    \label{eq:nonoptimal}
\end{equation}
In particular this is true for the stabiliser state $\sigma_* = \op{\phi_*}$ that is optimal with respect to the capacity. But equation \eqref{eq:nonoptimal} could be a non-optimal decomposition, so its $\ell_1$-norm $1+2p$ is at least as large as $\robmag{\outstate{\chan}{\sigma_*}{n}}{}$. So:
\begin{equation}
    \mathcal{C}(\chan) = \robmag{\outstate{\chan}{\sigma_*}{n}}{} \leq 1 + 2p = \Rstar{\chan},
\end{equation}
completing the proof of the first statement. Having done so, to prove the second statement it suffices to show that $\RChoi{\mathcal{U}} = \Rstar{\mathcal{U}}$.

For any $n$-qubit gate $U$ from the third level of the Clifford hierarchy, corresponding to the channel $\mathcal{U}$, deterministic state injection is possible \cite{Bravyi2005,Gottesman1999}. That is, given a Hilbert space $\Hilb = \Hilb_A \otimes \Hilb_B \otimes \Hilb_C \otimes \Hilb_D$, where each subspace is comprised of $n$ qubits, there exists a completely-stabiliser-preserving circuit $\Lambda$ such that, for any $2n$-qubit input state $\rho$:
\begin{equation}
    \Tr_{BC}\qty[\Lambda(\Phi_\mathcal{U}^{AB} \otimes \rho^{CD})] =  \mathcal{U} \otimes \idn{n} (\rho^{AD})
    \label{eq:injectchannel}
\end{equation}
Where $\Phi_\mathcal{U}$ is the Choi state for the channel $\mathcal{U}$. The circuit $\Lambda$ is comprised of a complete Bell measurement on $BC$, followed by a Clifford correction on subspace $A$ conditioned on the outcome of the Bell measurement. It can be represented by Kraus operators:
\begin{equation}
    K_j = (C_j^A \otimes \idn{3n}^{BCD})M_j ,
\end{equation}

where $M_j = \id^A \otimes \op{\Phi_j}^{BC} \otimes \id^D$ are the Kraus operators corresponding to elements of the Bell basis $\ket{\Phi_j}$, and $C_j$ is some unitary Clifford correction.

Now consider an optimal decomposition of the Choi state:
\begin{equation}
    \Phi_{\mathcal{U}} = (1+p) \rho_+ - p \rho_+, \quad \text{s.t.} \quad \RChoi{\mathcal{U}} = 1 + 2p.
\end{equation}
We now show that by substitution into equation \eqref{eq:injectchannel} we can obtain a decomposition of the channel that satisfies the trace-preservation condition required for channel robustness. We have:
\begin{align}
    \Phi_\mathcal{U}^{AD} & = \mathcal{U} \otimes \idn{n} (\op{\Omega}^{AD}) = \Tr_{BC}\qty[\Lambda(\Phi_\mathcal{U}^{AB} \otimes \op{\Omega}^{CD})] \\
    & = (1+p) \widetilde{\rho}_+^{AD} - p \widetilde{\rho}_-^{AD}, \label{eq:thirdLchanneldecomp}
\end{align}
where $\widetilde{\rho}_\pm^{AD} = \Tr_{BC}[\Lambda(\rho_\pm^{AB} \otimes \op{\Omega}^{CD})] \in \STABmth{2n}$, since $\Lambda \in \SPOmth{n,n}$. If we can show that $\Tr_A(\widetilde{\rho}_\pm) = \idn{n}/2^n$, then we have satisfied the required condition. First, note that $\Tr_A(\widetilde{\rho}_\pm)$ is independent of the Clifford corrections $C_j$, since the partial trace depends only on the outcome probabilities of the Bell measurement, $p_j = \Tr [ M_j (\rho_\pm^{AB} \otimes \op{\Omega}^{CD}) M_j^\dagger]$. Therefore $ \Tr_A(\widetilde{\rho}_\pm) = \Tr_{ABC}(\rho'_\pm)$, where $\rho'_\pm = \sum_j M_j  (\rho_\pm^{AB} \otimes \op{\Omega}^{CD}) M_j^\dagger$ is the state following the Bell measurement. We then have:
\begin{align}
    \Tr_A(\widetilde{\rho}_\pm) &= \sum_j \Tr_{ABC}\qty[ M_j \qty(\rho_\pm^{AB} \otimes \op{\Omega}^{CD}) M_j^\dagger ] \\ 
                        & = \Tr_{ABC}\qty[\sum_j M_j^\dagger M_j  \qty(\rho_\pm^{AB} \otimes \op{\Omega}^{CD})] \\
                        & = \Tr_{ABC} \qty[\rho_\pm^{AB} \otimes \op{\Omega}^{CD}] \\
                        & = \Tr_{C} \qty[\op{\Omega}^{CD}] = \frac{\idn{n}}{2^n}.
\end{align}
In going to the second line, we used the fact that the partial trace over $BC$ is cyclic with respect to operators that act non-trivially only on $\Hilb_B \otimes \Hilb_C$. In going from the second to the third line, we used the fact that $\{ M_j \}$ is a complete set of Kraus operators, so $\sum_j M_j^\dagger M_j = \id$. We have shown that the decomposition \eqref{eq:thirdLchanneldecomp} satisfies the trace preservation criterion. Since the decomposition may not be optimal, we have that $\Rstar{\mathcal{U}}\leq 1 + 2p = \RChoi{\mathcal{U}}$. But from the proof of the first statement $\RChoi{U} \leq \Mcap{\mathcal{U}}\leq \Rstar{\mathcal{U}}$, so it must be that equality holds.
\end{proof}
We note that the result that $\mathcal{R}(\Phi_\mathcal{U}) = \Rstar{\mathcal{U}}$ for third-level gates carries over to the case of decompositions of $\mathcal{U}(\op{+}^{\otimes n })$ for diagonal third-level gates. That is, there always exists a decomposition satisfying the constraints of equation \eqref{eq:diagonal_decomp} that is optimal with respect to $\mathcal{R}(\Phi_\mathcal{U}) = \mathcal{R}(\mathcal{U}(\op{+}^{\otimes n }))$. This can be seen by following the argument of Theorem \ref{thm:sandwich}, but replacing the full $4n$-qubit teleportation circuit with a $2n$-qubit state injection circuit (Figure \ref{fig:state_injection}).

\section{\label{sec:algorithms}Classical simulation algorithms}

Here we propose two classical simulation algorithms.  The channel robustness $\mathcal{R}_*$  relates to the runtime of our first simulator, which we call the \emph{static} simulator.  The magic capacity $\mathcal{C}$ relates to the runtime of the second simulator, called the \emph{dynamic} simulator.  In both cases, we consider a circuit composed from a sequence of channels  with $\{  \mathcal{E}_1,  \mathcal{E}_2, \ldots  , \mathcal{E}_L \} $ acting on an initial stabiliser state, which we take to be $\ket{0^n}$.  The circuit ends with some final state $\rho =\mathcal{E}_L  \ldots \circ \mathcal{E}_2 \circ \mathcal{E}_1 ( \ket{0^n}\bra{0^n} )$ and measurement of some Pauli observable $Z$.  We assume that each channel acts non-trivially on a bounded number of qubits (e.g. 2 or 3) so we can evaluate the relevant monotones. Our goal is to estimate the expectation value $\mathrm{Tr}[Z \rho]$ to within additive error. In the language of Ref.~\cite{pashayan2017estimation} our simulators will be poly-boxes.  

Both our algorithms are inspired by previous methods that collect a large number of Monte Carlo samples that scales quadratically with the negativity of some quasiprobability distribution~\cite{Pashayan2015,Bennink2017,Howard2017}.  The static Monte Carlo simulator uses a precomputed, and therefore static, quasiprobability distribution.  The dynamic Monte Carlo simulator recomputes optimal quasiprobability distributions at each step, which can lead to fewer samples required but with a higher runtime per sample.  As such, there are subtle trade-offs in the runtime complexities. 

Both our algorithms use that completely stabiliser preserving operations \SPO{n,n} acting on a stabiliser state can be classically efficiently simulated.  This follows from the fact that given the Choi state $\Phi_{\mathcal{E}}$ for an $n$-qubit channel,  the channel may be implemented by performing a Bell measurement on $\Phi_{\mathcal{E}} \otimes  \sigma$, postselecting on the $\Omega$ outcome to obtain 
$ (\id_n \otimes \op{\Omega} )\Phi_{\mathcal{E}} \otimes  \sigma (\id_n \otimes \op{\Omega} )$
and then tracing out the last $2n$ qubits \cite{Gottesman1999}. This can be simulated using Gottesman-Knill when $\sigma$ is a stabiliser state and $\chan \in \SPOmth{n,n}$. Curiously, it is unclear whether \SPO{n,n} can be physically realised using Clifford unitaries and Pauli measurements but without the use of postselection.

\subsection{Static Monte Carlo\label{subsec:algorithm_direct}}

In Algorithm \ref{fig:static} we give pseudocode for our first simulator, which we call static Monte Carlo, and which can be viewed as a generalisation of the algorithm of Bennink et al. \cite{Bennink2017}, differing in two important ways. First, whereas their algorithm involved an optimisation over the set CPR, the channel robustness is optimised with respect to \SPO{m,m}, so that all completely stabiliser-preserving maps are represented non-negatively. Second, while the positive and negative parts of the decomposition are each CPTP, the subroutine in step (c) of the algorithm involves updating the state with maps that may not be trace-preserving. This is necessary to ensure that the updated state after step (c)iii. remains pure. We explain below how this subroutine is carried out and show that it does not increase the sample complexity.

We assume a pre-computation stage in which for each circuit element we determine an optimal decomposition as per the definition \eqref{AlternateDef}, so that:
\begin{equation}
\mathcal{E}_j = (1+p_{j}) \mathcal{E}_{j,0} - p_{j}  \mathcal{E}_{j,1}, \quad \text{where} \quad \Rstar{\chan_j} = 1 + 2p_j
\label{eq:circuitElPair}
\end{equation}
where  $\mathcal{E}_{j,k} \in \SPOmth{m,m}$. The runtime of this pre-computation stage is bounded since each circuit element acts non-trivially on only $m$ qubits. Once these decompositions have been calculated, one can define an overall quasiprobability distribution:
\begin{equation}
	\label{Product_q_distro}
	 q_{\vec{k}}= \prod_{j: k_j=0}  \left(  1+p_{j} \right)  \prod_{j: k_j=1} \left( -p_{j} \right)  ,
\end{equation}
where $\vec{k} \in \mathbb{Z}_2^L$ is a vector representing a choice of either $\mathcal{E}_{j,0}$ or $\mathcal{E}_{j,1}$ at each circuit element $\mathcal{E}_{j}$. We can renormalise this to obtain a product probability distribution:
\begin{equation}
	\label{Product_distro}
	 p_{\vec{k}}= \prod_{j: k_j=0}  \frac{\left(  1+p_{j} \right)}{\Rstar{\chan_j}}  \prod_{j: k_j=1}  \frac{\left( p_{j} \right)}{\Rstar{\chan_j}}.
\end{equation}
Hence we can write the output of the circuit as follows:
\begin{equation}
\chan(\op{\phi_0}) = R \sum_{\vec{k}} p_{\vec{k} } \lambda_{\vec{k}}  \Lambda_{\vec{k}}(\op{\phi_0})
\end{equation}
where each $\Lambda_{\vec{k}} = \chan_{L,k_L} \circ \ldots \circ \chan_{1,k_1}$ gives a trajectory of \SPO{m,m} channels through the circuit, $\lambda_{\vec{k}} = \mathrm{sign}(q_{\vec{k}})$, and $R = \prod_{j=1}^L \Rstar{\chan_j}$. 
The expectation value of an observable $Z$ at the end of the circuit is given by:
\begin{equation}
    \Tr[Z \chan(\op{\phi_0})] = R \sum_{\vec{k}} p_{\vec{k}} \lambda_{\vec{k}}  \Tr[Z  \Lambda_{\vec{k}}(\op{\phi_0})]
\end{equation}
This decomposition is a quasi-probability distribution with $\ell_1$-norm
\begin{equation}
\label{Runtime}
	|| \vec{q} ||_1=	\sum_{\vec{k}} |  p_{\vec{k} } \lambda_{\vec{k}} R| = \sum_{\vec{k}}|  p_{\vec{k} }| R =  \prod_{j}\mathcal{R}_*(	\mathcal{E}_j) .
\end{equation}
By sampling from the stabiliser-preserving trajectories $\Lambda_{\vec{k}}$ with probability distribution $\{p_{\vec{k}}\}$, and calculating $ \lambda_{\vec{k}} R \Tr[Z\Lambda_{\vec{k}}(\op{\phi_0})]$, we have an unbiased estimator for $\Tr[Z \chan(\op{\phi_0})]$. Crucially, \eqref{Product_distro} gives a product distribution, so this first sampling step is efficient. The variance is increased by $R>1$, but by standard arguments \cite{Pashayan2015,Bennink2017,Howard2017} the Hoeffding inequalities can be used to show that any constant error $\delta > 0$ in the mean estimate can be achieved with arbitrary success probability $(1-\epsilon)$ by repeating the sampling procedure $N$ times, where:
\begin{equation}
    N = \frac{2}{\delta}|| \vec{q} ||^2_1 \ln(\frac{2}{\epsilon}).
\end{equation}

\begin{algo}[ht] 
			\raggedright
			{\bf Input:} 
			Circuit description $\qty{\chan_1,\chan_2,\ldots,\chan_L}$, where each $\chan_j$ has a known optimal decomposition with channel robustness $\Rstar{\chan_j}$ as per equations \eqref{eq:circuitElPair} and \eqref{eq:LPoutput}, input stabiliser state $\ket{\phi_{0}}$, Pauli observable $Z$, number of samples $M$.\\
			{\bf Output:} Estimate of the Pauli expectation value $\expval{Z} = \Tr[Z \chan(\op{\phi_{0}})]$
\begin{enumerate}
	\item Set $i \leftarrow 1$, $T \leftarrow 0$ and $R \leftarrow \prod_{j=1}^L \Rstar{\chan_j}$;
		\item For $i = 1$ to $M$: 
		\begin{enumerate}	
	        \item Sample vector $\vec{k}_i$ according to the distribution $\qty{p_{\vec{k}}}$; 
	        \item Set the input state of the circuit to be the stabiliser state $\ket{\phi_{0}}$;
	        \item For $j = 1$ to $L$:
	        \begin{enumerate}
	            \item Calculate the distribution: \begin{equation*} \qty{p_l^{(j)} = \Tr[\curlyT_{l,j,k_j}(\op{\phi_{j-1}})] }; \end{equation*}
	            \item Sample $l$ with probability $ p_l^{(j)}$;
	            \item Set $\op{\phi_{j}} = \curlyT_{l,j,k_j}(\op{\phi_{j-1}})/ p_l^{(j)}$.
	        \end{enumerate}

	        \item Calculate $\langle P \rangle_i = \Tr[P\op{\phi_{L}}]$;
	        \item Set $T_{i} = \mathrm{sign}\qty(q_{\vec{k}_i}) R \langle P \rangle_i$;
	        \item $T \leftarrow T + T_i$.
     \end{enumerate}	
\end{enumerate}
			{\bf Return:} $T/M$
	\caption{\raggedright A classical simulator with sample complexity $\prod_{j=1}^L\mathcal{R}_*(\chan_j)^2$.}\label{fig:static}
\end{algo}

The simulator proceeds by tracking the evolution of a pure stabiliser state through the sampled trajectory $\Lambda_{\vec{k}}$, as described in Algorithm \ref{fig:static}. In practice, while each $\chan_{j,k_j}$ is constrained to be CPTP, the output of the linear program will be a decomposition of $\chan_{j,k_j}$ into maps that are stabiliser-preserving but not necessarily trace-preserving, corresponding to Choi states that are proportional to pure stabiliser projectors:
\begin{equation}
    \chan_{j,k_j} = \sum_l \curlyT_{l,j,k_j} \quad \longleftrightarrow \quad
    \Phi_{\chan_{j,k_j}} = \sum_l |q_{l,j,k_j}|\op{\psi_{l,j,k_j}}.
    \label{eq:LPoutput}
\end{equation}
 Nevertheless, because $\chan_{j,k_j}$ is a CPTP channel, it can be simulated by sampling from a proper probability distribution as defined in step (c)i. of Algorithm \ref{fig:static}. We stress that while $\Phi_{\chan_{j,k_j}}$ is subject to the constraint ensuring that each $\chan_{j,k_j}$ is trace-preserving, this need not apply to the individual terms $\op{\psi_{l,j,k_j}}$ in its decomposition. Indeed, any decomposition of a non-unital channel cannot be represented as a linear combination of unitary Clifford operations alone. While Clifford gates are represented in the distribution by maximally entangled states, product states correspond to non-trace-preserving maps involving projections. For example, $\Phi_{\curlyT}= \op{01}$ corresponds to the single Kraus operator $\curlyT = \op{0}{1}$ which can be seen as a $Z$-measurement post-selected on the ``$\ket{1}$'' outcome, followed by an $X$ gate. Under these well-defined stabiliser operations, the state update in step (c)iii. can be carried out efficiently as per the Gottesman-Knill theorem \cite{Gottesman1997,GKtheorem}. More generally, the state update corresponding to any pure Choi state $\ket{\psi_{l,j,k_j}}$ can be modelled as a post-selected Bell measurement \cite{Gottesman1999}:
 \begin{equation} \ket{\phi_{j}}^A \otimes \ket{\Omega_n}^{B|C} \propto (\idn{n}^A\otimes \op{\Omega_n}^{B|C})\ket{\psi_{l,j,k_j}}^{AB} \otimes \ket{\phi_{j-1}}^C.
 \label{eq:stateupdate}
 \end{equation}
Note that the new state $\ket{\phi_{j}}$ is guaranteed to be pure, since the projection of the system BC onto a Bell state removes any correlation across the partition $A|BC$. 
We emphasise that the probability for carrying out this update depends not just on the known prefactors $|q_{l,j,k_j}|$, but on $\Tr[\curlyT_{l,j,k_j}(\op{\phi_{j-1}})]$, which are input state dependent. Computing this trace amounts to evaluating the overlap between $(\idn{n}\otimes \op{\Omega_n})$ and some stabiliser state, as per equation \eqref{eq:stateupdate}, which can be done efficiently using the stabiliser tabulex method \cite{Gottesman1997,GKtheorem}. However, since $\ket{\phi_{j-1}}$ will have been chosen randomly in the previous step of the algorithm, these traces cannot be calculated ahead of time. This means that our algorithm involves additional per-sample runtime overhead in order to calculate $\{p_l^{(j)}\}$. However, there are a finite and bounded number of such calculations since each circuit element acts on a small number of qubits. The key point is that the constraint on $\chan_{j,k_j}$ ensures that, despite being composed of non-trace-preserving elements, the distribution over the outcomes $\{p_l^{(j)}\}$ for any given state forms a proper probability distribution. Consequently, the intermediate sampling in step (c)i. does not increase the variance, so the sample complexity of the simulator depends only on $\prod_{j}\mathcal{R}_*(\mathcal{E}_j)^2$.

Suppose we want to compare the runtime of our simulator for a particular circuit with that of the Bennink et al. algorithm \cite{Bennink2017}. Their decompositions are in terms of $\CPR$, the set of Cliffords and Pauli reset channels, and an associated cost function is
	\begin{equation}
		\mathcal{R}_{\CPR} (\chan) = \min_{\Lambda_j \in \CPR}\qty{ ||p||_1: \sum p_j \Lambda_j = \chan},
	\end{equation}
	The sample complexity for simulating a given circuit element $\chan_j$ is proportional to $\mathcal{R}_{\CPR}(\chan_j)^2$. Since $\CPR \subseteq \SPOmth{n,n}$, it must be the case that $\mathcal{R}_* \leq \mathcal{R}_{\CPR}$, potentially leading to lower simulation sample complexity if there exist channels with $\mathcal{R}_* < \mathcal{R}_{\CPR}$. We give here a simple toy example demonstrating a significant advantage to our static simulator.

Consider the single-qubit CPTP map $\Lambda_H$ defined by a $Z$-measurement followed by a Hadamard gate conditioned on the ``-1'' outcome. This has Kraus representation:
\begin{equation}
	K_1 = \op{0}, \quad K_2 = \op{-}{1}.
\end{equation}  
This is clearly a completely stabiliser-preserving map, so has channel robustness $\mathcal{R}_*\qty(\Lambda_H) = 1$. For a single qubit, $\CPR$ consists of the 24 Clifford gates, and 6 Pauli reset channels. Using this set, we calculate $\mathcal{R}_{\CPR}\qty(\Lambda_H) = 2$. Since $\Lambda_H \in \SPOmth{1,1}$, this confirms that $\CPR$ is a strict subset of the completely stabiliser-preserving channels, and indicates that $\mathcal{R}_{\CPR}$ is not a monotone under stabiliser operations. We also note that the calculated value is larger than the robustness of magic for any single-qubit state, despite $\Lambda_H$ being a stabiliser operation. For a circuit containing M uses of the channel $\Lambda_H$, the samples required for a CPR simulator would be proportional to $\mathcal{R}_{\CPR}(\Lambda_H)^{2M} = 4^M $. But for our simulator, $\Lambda_H$ can be simulated efficiently, as $\mathcal{R}_*\qty(\Lambda_H)^{2M} = 1$. 

While the above example is quite artificial, a reduction in sample complexity is also achieved for channels where $\mathcal{R}_* (\chan) > 1$, but is strictly smaller than $\mathcal{R}_{\CPR}(\chan)$. Given a circuit decomposed as $\chan = \mathcal{E}_L \circ \ldots \circ \mathcal{E}_2 \circ \mathcal{E}_1$ , the sample complexity for the CPR simulator would be proportional to $\prod_j^L \mathcal{R}_{\CPR}(\chan_j)^2$. It is always the case that $\mathcal{R}_*(\chan_j) \leq \mathcal{R}_{\CPR}(\chan_j)$, so the sample complexity for our simulator will never be greater. Suppose we find that there are $M$ circuit elements such that $\mathcal{R}_*(\chan_j)/\mathcal{R}_{\CPR} (\chan_j) \leq k $ for some constant $0 < k < 1$. Then we would find that using our simulator gives a reduction in sample complexity by a factor of $k^{2M}$. While our simulator sometimes incurs a modest increase in the runtime per sample, this must be weighed against a reduction in sample complexity that is exponential in the number of circuit elements where $\mathcal{R}_*(\chan_j) < \mathcal{R}_{\CPR}(\chan_j) $. The obvious next question is whether there are any natural non-trivial examples where this happens. We show in Section \ref{sec:numerical} that gate sequences subject to amplitude-damping noise provide one such case.

We also note that calculation of optimal $\CPR$ decompositions is only tractable for one- and two-qubit circuit elements, as the three-qubit case already involves a linear program with nearly 93 million variables \cite{Bennink2017}. For the most general quantum channels, we encounter a similar problem, as for three-qubit circuit elements, we in principle need to optimise over six-qubit stabiliser states. However, in Appendix \ref{app:diagonal} we show that for diagonal channels the problem can be greatly simplified, and the problem becomes tractable for operations on up to five qubits. This allows our algorithm to take advantage of the submultiplicativity of channel robustness; for example for diagonal channels where $\mathcal{R}(\chan^{\otimes n}) < \mathcal{R}(\chan)^n$, it is advantageous to compose $n$ single-qubit circuit elements together as a single $n$-qubit circuit element, before running the linear program. We will see in Section \ref{sec:numerical} that this strategy is useful for single-qubit $Z$-rotations.

\subsection{Dynamic Monte Carlo\label{subsec:algorithm_convex}}

In the previous simulator, all convex optimisations are calculated in advance.  However, we have found examples of channels where $\mathcal{C}(\mathcal{E})<\mathcal{R}_* (\mathcal{E})$.  For such channels, and for any stabiliser state $\rho$, the robustness of the output state $\mathcal{E}(\rho)$ will always be less than the $\ell_1$-norm of the decomposition of $\mathcal{E}$ into stabiliser-preserving  CPTP channels.  Our next simulator takes advantage of this, and we present pseudocode in Algorithm ~\ref{SimulatorB}. 
\begin{algo}[ht]
			\raggedright
			{\bf Input:} A sequence of $L$ quantum channels $\mathcal{E}_j$ and number of samples $N$. \\
			{\bf Output:} An estimate of an expectation value.
\begin{enumerate}
	\item Set $i \leftarrow 0$ and $T \leftarrow 0$;
		\item For $i \leq N$ do: 
		\begin{enumerate}	
	\item  $i \leftarrow i+1$; 
	\item Set $\ket{\phi_0}=\ket{0}^{\otimes n}$; 
	\item  $R\leftarrow 1$;
	\item For $1 \leq j \leq L $ do
	\begin{enumerate}
		\item The channel $\mathcal{E}_j$ acts non-trivially on only $m$ qubits.  Partition the qubits into three sets $A|B|C$ where $A$ is the set acted on by $\mathcal{E}_j$, $B$ is a set of any other $m$ qubits; and $C$ comprises the remaining qubits;
		\item Find a Clifford $U= \id^A \otimes U^{BC} $ that is local w.r.t $A| (B \cup C)$ such that $U \ket{\phi_{j-1}}=\ket{\phi_{j-1}^{AB}} \otimes \ket{\phi_{j-1}^{C}} $.  This uses the efficient algorithm of Ref.~\cite{Fattal2004}.
		\item Find $2m$-qubit density matrix $\rho^{AB}_j=\mathcal{E}_j\otimes \id \qty( \ket{\phi_{j-1}^{AB}}\bra{\phi_{j-1}^{AB}}   )$;
		\item Solve convex optimisation to find $\mathcal{R}(\rho_j^{AB})$ and use optimal decomposition to build a quasiprobability distribution ;
		\item Sample from the renormalised quasiprobability distribution to choose a stabiliser state $\ket{\phi'^{AB}_j}$, and then set ${\ket{\phi_{j}} =  U^\dagger \ket{\phi'^{AB}_j} \otimes \ket{\phi_{j-1}^C}}$;
		\item Replace $R \leftarrow R \times \mathcal{R}(\rho_j^{AB}) \times \lambda$ where  $\lambda = \pm 1$ and denotes the phase of the sampled quasiprobability.		
		\item increment $j \leftarrow j+1$ and loop;
	\end{enumerate}	
	\item Evaluate $E=\bra{\phi_L}  Z \ket{\phi_L}$
	\item  $T \leftarrow T + (R \times E)$.
     \end{enumerate}	
\end{enumerate}
			{\bf Return:} $T/N$.
	\caption{\raggedright A classical simulator with sample complexity per step upper-bounded by $\mathcal{C}^2$.}
	\label{SimulatorB}
\end{algo}

As in the previous simulator, we can represent the trajectory through the circuit by a vector $\vec{k} $, such that the output of the true quantum circuit would be $\rho = \sum_{\vec{k}} q_{\vec{k}} \sigma_{\vec{k}} $. The major difference is that $q_{\vec{k}}$ cannot be written in the form of equation \eqref{Product_distro} with $p_j$ independent of $\vec{k}$, as the quasiprobabilities for each intermediate decomposition will depend on the stabiliser state sampled in the previous step. Consider what happens for the $j$th circuit element. After step (d)ii. of the algorithm, we have some stabiliser state $\sigma_{\vec{k}_j}$, where $\vec{k}_j$ labels the trajectory up to the $j$th element. Here we do not assume $\vec{k}_j$ is a binary vector; instead the elements of the vector label each pure stabiliser state. After steps (d)iii. and iv. we have a non-stabiliser state $\rho_{\vec{k}_j} = (\chan_j \otimes \id)(\sigma_{\vec{k}_j}) $, decomposed as:
\begin{equation}
    \rho_{\vec{k}_j} = \sum_{\vec{k}_{j+1}} q_{\vec{k}_{j+1}} \sigma_{\vec{k}_{j+1}}.
\end{equation}
Here the summation is over all $(j+1)$-step trajectories consistent with the previous $j$-step trajectory labelled by $\vec{k}_j$. In steps (d)v. and vi. we will choose the stabiliser state $\sigma_{\vec{k}_{j+1}}$ with probability $|q_{\vec{k}_{j+1}}|/ \mathcal{R}(\rho_{\vec{k}_j})$ and the variable $R$ picks up a factor $ \lambda_{\vec{k}_{j+1}} \mathcal{R}(\rho_{\vec{k}_j})$, where $\lambda_{\vec{k}_{j+1}}$ is the sign of the corresponding quasiprobability.

The final state after the full sequence of quantum operations may be written:
\begin{equation}
    \rho = \sum_{\vec{k}} p_{\vec{k}} R_{\vec{k}} \sigma_{\vec{k}}, \quad \text{where} 
    \quad p_{\vec{k}} = \prod_{j=0}^{L-1} \frac{|q_{\vec{k}_{j+1}}|}{\mathcal{R}(\rho_{\vec{k}_j})},
    \quad R_{\vec{k}} = \prod_{j=0}^{L-1} \sign{q_{\vec{k}_{j+1}}} \mathcal{R}(\rho_{\vec{k}_j}).
\end{equation}
The true expectation value for the observable $Z$ is given by:
\begin{equation}
    \Tr[Z \rho ] = \sum_{\vec{k}} p_{\vec{k}}  R_{\vec{k}} E_{\vec{k}}, \quad \text{where} \quad E_{\vec{k}} = \Tr[Z \sigma_{\vec{k}}].
\end{equation}
A key difference with the previous simulator is that we never explicitly calculate the full distribution $p_{\vec{k}}$. Nevertheless, each time the simulator samples, it produces output $R_{\vec{k}} E_{\vec{k}}$  with probability $p_{\vec{k}}$. These probabilities exactly match the weightings in the above equation, so the simulator is an unbiased estimator. The number of samples required can be again derived using the Hoeffding inequalities, which depend on the maximum possible values of the output of each sample.  Each output is bounded by $|R_{\vec{k}} E_{\vec{k}}| \leq |R_{\vec{k}}|\leq \prod_j \mathcal{R}(\rho_{\vec{k}_j}) \leq \prod_j \mathcal{C}(\mathcal{E}_j)$.  Therefore, the sample complexity is upper-bounded by order $ \prod_j  \mathcal{C}(\mathcal{E}_j)^2$.

Notice that for every sample, $L$ convex optimisations are performed, as well as $L$ steps involving the algorithm of Fattal et al. \cite{Fattal2004}, which has runtime polynomial in the total number of qubits. If $\mathcal{C}(\mathcal{E})=\mathcal{R}_* (\mathcal{E})$ then we would simply not use this method so that the only convex optimisations are in the preprocessing.  However, $\mathcal{C}(\mathcal{E})$ now determines the sample complexity, so if $\mathcal{C}(\mathcal{E}) \ll \mathcal{R}_* (\mathcal{E})$ then the dynamic simulator may run much faster than the static simulator; here we have a trade-off of increase in per-sample runtime, versus a possibly exponential reduction in sample complexity.  Indeed, since the sample complexity is typically the bottleneck, this approach would lead to significant improvements for some quantum channels. In the following section we investigate which types of channels may lead to an advantage.

\section{\label{sec:numerical} Numerical results}
The numerical results in this section have been produced using code and data files available from the public repository detailed in Ref. \cite{channel_repo}.

\subsection{\label{subsec:amp_damp} Single-qubit rotation with amplitude damping}

Consider the setting discussed in Section \ref{sec:algorithms}, where a many-qubit circuit evolution $\chan$ is decomposed as a series of few-qubit circuit elements $\chan =\mathcal{E}_L  \ldots \circ \mathcal{E}_2 \circ \mathcal{E}_1 $. Many implementations of quantum algorithms can be expected to involve single-qubit rotations about some Pauli axis. In near-term devices, the circuit will be subject to noise. Consider a simple model of a noisy computation where a noise channel $\Lambda$ acts between each unitary gate $\mathcal{U}_j$, so the overall channel representing the circuit would be:
\begin{equation}
    \chan = \Lambda \circ \mathcal{U}_L \circ \ldots \circ \Lambda \circ \mathcal{U}_2 \circ \Lambda \circ \mathcal{U}_1.
    \label{eq:noisy_circuit}
\end{equation}
Here we study the simulation cost for a single step in such a computation, comprised of a single-qubit rotation and a noise channel. Note that for intermediate steps in a circuit decomposition such as \eqref{eq:noisy_circuit}, we have a choice of ordering. We can take the circuit elements to be either $\Lambda \circ \mathcal{U}_j$ or $\mathcal{U}_j \circ \Lambda$. These choices are equivalent in terms of the output of the simulation, but could lead to different sample complexity depending on the cost function used. We studied circuit elements made up of a single-qubit Pauli $X$-rotation $U(\theta) = \exp(i X \theta)$ composed with an amplitude damping channel $\Lambda_p$ with noise parameter $p$, defined by Kraus operators:
\begin{equation}
    K_1 = \begin{pmatrix}
        1 & 0 \\
        0 & \sqrt{1-p}
    \end{pmatrix} , \quad 
    K_2 = \begin{pmatrix}
        0 & \sqrt{p} \\
        0 & 0 
    \end{pmatrix}
\end{equation}
We calculated channel robustness $\mathcal{R}_*$, magic capacity $\mathcal{C}$, and the Bennink et al. \cite{Bennink2017} cost function $\mathcal{R}_{\CPR}$, for both $\Lambda_p \circ \mathcal{U}(\theta)$ and $\mathcal{U}(\theta) \circ \Lambda_p$, and for a range of values of $p$ and $\theta$. For noise $p=0.1$ we see that when the noise channel follows the gate, there is no difference between the three quantities (Figure \ref{fig:amp_damp}(i)). However, if the noise channel acts before the unitary, both our monotones show a reduced value, whereas $\mathcal{R}_{\CPR}$ increases. This suggests that for this noise model, the better strategy with respect to sample complexity would be to choose the ordering $\mathcal{U}(\theta) \circ \Lambda_p$, and  use one of our simulators (subject to the caveats mentioned in the previous section). In Figure \ref{fig:amp_damp}(ii) we show how different levels of noise affect the channel robustness. We do not plot capacity since we find that $\mathcal{R}_*(\chan) = \mathcal{C}(\chan)$ for this class of operation, up to solver precision. We also compare channel robustness with the Choi state robustness (Figure \ref{fig:amp_damp}(iii)). We find that $\RChoi{\chan} = \mathcal{R}_*(\chan)$ for $\theta$ up to approximately $\pi/16$, but  $\RChoi{\chan} < \mathcal{R}_*(\chan)$ for larger angles.

\begin{figure}[thbp]
    \centering
    \includegraphics[width=0.8\textwidth]{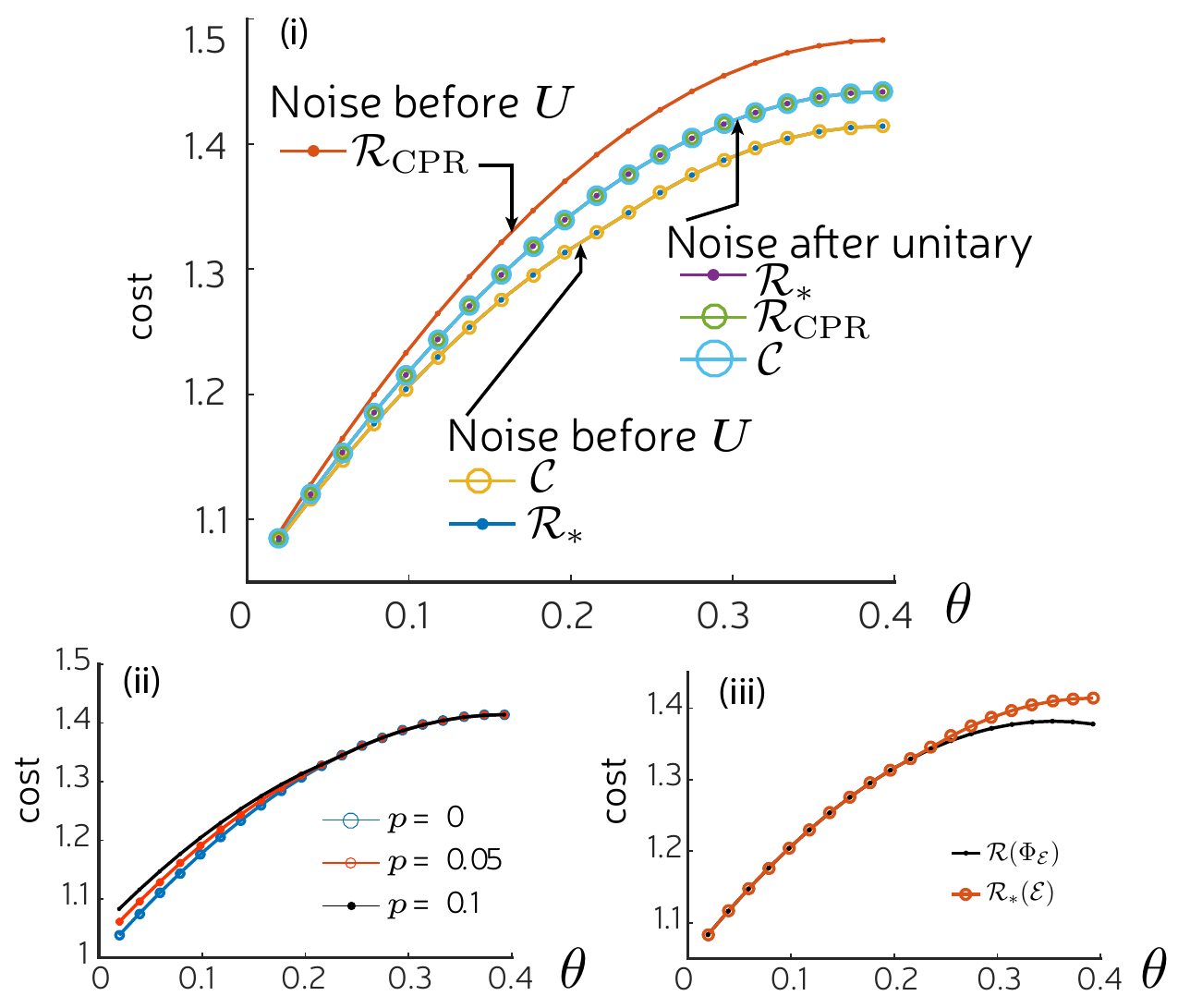}
    \caption{(i) Comparison of $\mathcal{C}(\chan)$, $\mathcal{R}_*(\chan)$ and $\mathcal{R}_{CPR}(\chan)$, where $\chan$ is a single-qubit $X$-rotation $\mathcal{U}(\theta) $ composed with an amplitude damping channel $\Lambda_p$. We consider both possible orderings: noise after unitary ($\Lambda_p \circ \mathcal{U}(\theta)$), and noise before unitary ($\mathcal{U}(\theta) \circ \Lambda_p$). (ii) $\mathcal{R}_*\qty(\mathcal{U}(\theta) \circ \Lambda_p)$ for several values of $p$. (iii) Comparison of channel robustness with robustness of Choi state for $\mathcal{U}(\theta) \circ \Lambda_p$ with $p=0.1$.}
    \label{fig:amp_damp}
\end{figure}

\subsection{Multiqubit phase gates}

Recall that from Theorem \ref{thm:sandwich} we know  $\RChoi{\chan} \leq \Mcap{\chan} \leq \Rstar{\chan}$. For the particular class of channel studied above, we saw numerically that $\Mcap{\chan} = \Rstar{\chan}$ up to solver precision in all cases investigated, and in the absence of noise the numerical results suggested $\RChoi{\chan} = \Mcap{\chan} = \Rstar{\chan}$. Moreover, we know from Theorem \ref{thm:sandwich} that all measures are equal for gates from the third level of the Clifford hierarchy. Under what conditions does this equality persist for multi-qubit operations? As explained earlier, to calculate each of our quantities for $n$-qubit channels, in general we must solve an optimisation problem over all $2n$-qubit stabiliser states. Since this problem is only tractable for up to $5$-qubit states, in practice we are limited to studying two-qubit channels, in the most general case. However, it turns out the problem can be greatly simplified for certain types of operation. In particular, Appendix \ref{app:diagonal} shows how the problem size can be reduced for channels diagonal in the computational basis, using a representation of stabiliser states in terms of affine spaces over binary vectors due to Dehaene and De Moor \cite{Dehaene2003,Gross2008}. MATLAB code to calculate our measures for this reduced problem is provided in the public repository given in Ref. \cite{channel_repo}. This allows us to calculate values for diagonal operations on up to 5 qubits, which we present here.

As a special case we consider multicontrol phase gates of the form:
\begin{equation}
M_{t,n} = \mathrm{diag} ( \exp(  i \pi / 2^t ) , 1 , 1 , \ldots , 1 ), \quad t \in \mathbb{Z}
 \label{eq:CCphase}
\end{equation}
where  $n$ denotes the number of qubits. We note that by convention, controlled-phase gates typically apply the phase to the all-one state $\ket{1^n}$, where $1^n = (1,\ldots,1)^T$, but the form given above is Clifford-equivalent to the conventional version, and is more convenient for the techniques used in Appendix \ref{app:diagonal}. The family includes familiar gates such as $CZ$ ($t=0$, $n=2)$, $CCZ$ ($t=0$, $n=3$), multicontrol-$S$ ($t=1)$ and multicontrol-$T$ ($t=2$). 

The main findings were that the inequalities are tight for the $n=2$ and $n=3$ cases, but that this does not persist for larger system sizes (Figure \ref{fig:multicontrol_comp}). The $t=0$ case (the family of multicontrol-$Z$ gates) turns out to be a special case (Figure \ref{fig:multicontrol_comp}, left panel). Here we find equality for all three quantities up to $n=4$. For the $t=0$, $n=5$ case, $\RChoi{M_{0,5}} = \Mcap{M_{0,5}}$ holds, but $\Rstar{M_{0,5}}$ is strictly greater than both. Note also that for $t=0$, all three quantities increase with each increment in $n$.

\begin{figure}[thbp]
    \centering
    \includegraphics[width=1\textwidth]{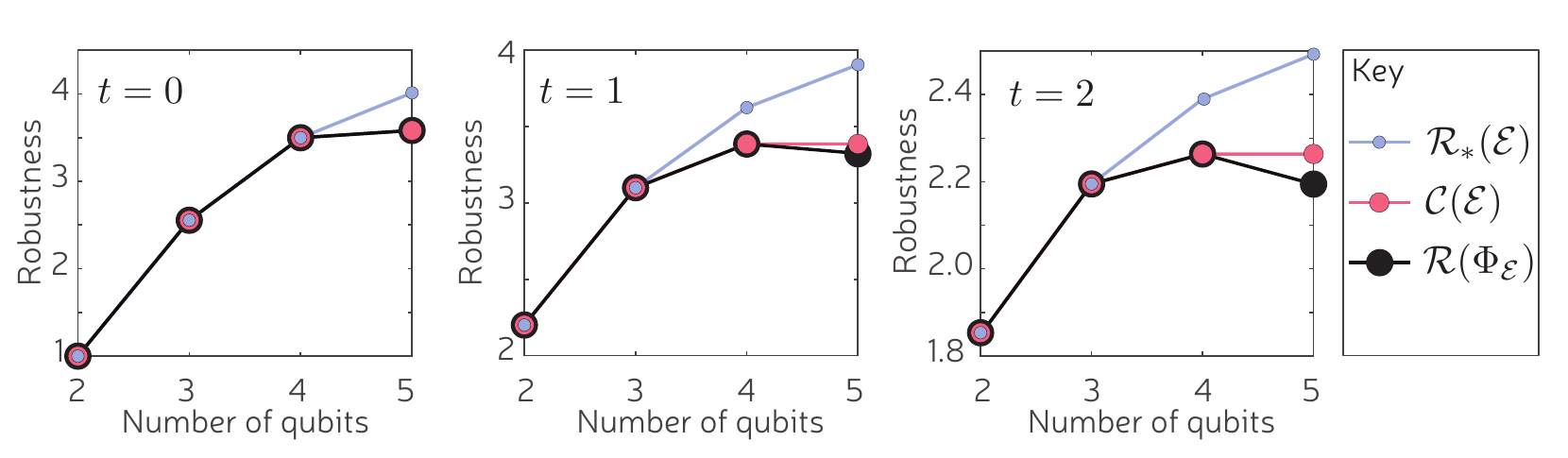}
    \caption{Comparison of quantities for multicontrol phase gates (see equation \ref{eq:CCphase}). Left: Multicontrol-$Z$ gates (t=0). Middle: Multicontrol-$S$ gates (t=1). Right: Multicontrol-$T$ gates (t=2).}
    \label{fig:multicontrol_comp}
\end{figure}

\begin{figure}[thbp]
    \centering
    \includegraphics[width=\textwidth]{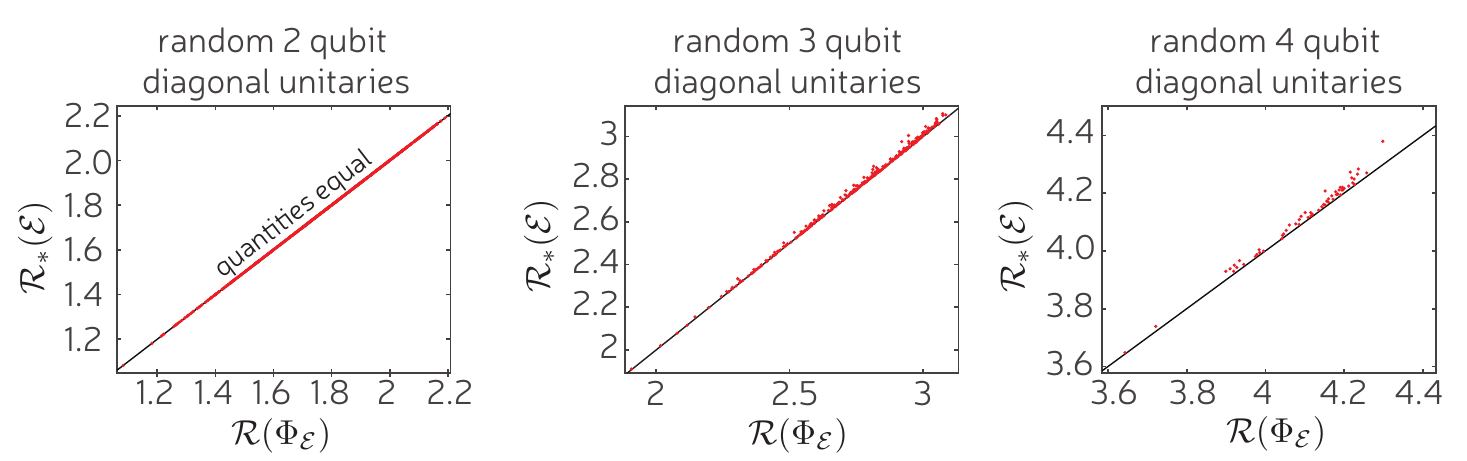}
    \caption{Channel robustness against robustness of Choi state for random $n$-qubit diagonal gates, up to $n=4$. Black line indicates $\mathcal{R}_* = \RChoi{}$. Each red dot represents the data point for an individual gate. Fewer points were calculated for larger $n$ due to the increased time to calculate each value. 1000 data points were calculated for $n=2$, $300$ for $n=3$, and $60$ for $n=4$.}
    \label{fig:random_diag_plots}
\end{figure}
The families of gates with $t>0$ follow a pattern qualitatively similar to each other. The results for the $t=1$ (multicontrol-$S$) and $t=2$ (multicontrol-$T$) cases are shown in the middle and right panels of Figure \ref{fig:multicontrol_comp}. For $n=4$, $t>0$, the same situation holds as for $n=5$, $t=0$, as we find $\RChoi{M_{t,4}} = \Mcap{M_{t,4}} < \Rstar{M_{t,4}}$. At $n=5$, all three quantities separate. In contrast with the multicontrol-$Z$, we see that $\RChoi{M_{t,n}}$ decreases as we go from four to five qubits, while  $\Mcap{M_{t,n}}$ levels off. We see similar behaviour for all non-zero values of $t$ investigated numerically. Our current techniques limit us to five-qubit gates, but we have reason to believe that the capacity will remain level for $n>5$, and we make the following conjecture, which we justify more fully in Appendix \ref{app:dim_affine}.
\begin{conjecture}
    For any fixed $t$, the maximum increase in robustness of magic for $M_{t,n}$ is achieved at some finite number of qubits $n=K$ by acting on the state $\ket{+}^{\otimes K}$. Therefore $\Mcap{M_{t,n}} = \robmag{M_{t,K}\ket{+}^{\otimes K}}{}$ for all $n\geq K$.
    \label{conj:fourqubitMulticontrol}
\end{conjecture}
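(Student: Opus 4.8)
The plan is to establish the two-sided identity $\Mcap{M_{t,n}} = \max_{0\le s\le n} g(s)$, where $g(s):=\robmag{M_{t,s}\ket{+}^{\otimes s}}{}$, and then to show that for $t\ge1$ the sequence $\{g(s)\}_{s\ge0}$ attains a global maximum at a finite $s=K$, so that $\max_{s\le n}g(s)=g(K)$ once $n\ge K$. Note $g(s)=\RChoi{M_{t,s}}$ via the folding isomorphism $\ket{j}\mapsto\ket{jj}$ (a Clifford CNOT-copy) relating the Choi state of a diagonal gate to $M_{t,s}\ket{+}^{\otimes s}$, so the quantity whose peak I must locate is exactly the plotted Choi robustness of Figure \ref{fig:multicontrol_comp}. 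For $t=0$ the data show $g$ strictly increasing, so no finite $K$ exists; the finite-$K$ claim is genuinely a statement about $t\ge1$.

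The lower bound is immediate: the $2n$-qubit stabiliser input $\ket{+}^{\otimes s}\otimes\ket{0}^{\otimes(2n-s)}$ is valid for any $s\le n$, and since $M_{t,n}=\idn{n}+\qty(e^{i\pi/2^t}-1)\op{0^n}$ phases only the all-zero system component, restricting to the subspace where the last $n-s$ system qubits are $\ket{0}$ gives $\qty(M_{t,n}\otimes\idn{n})\qty(\ket{+}^{\otimes s}\otimes\ket{0}^{\otimes(2n-s)}) = \qty(M_{t,s}\ket{+}^{\otimes s})\otimes\ket{0}^{\otimes(2n-s)}$. Stabiliser tensor factors do not change robustness, so $\Mcap{M_{t,n}}\ge g(s)$ for every $s\le n$.

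The matching upper bound is the content of Appendix \ref{app:dim_affine}. By the definition of capacity (and Lemma \ref{lem:twontensor}, which guarantees $2n$ qubits suffice) it is enough to maximise $\robmag{\qty(M_{t,n}\otimes\idn{n})\op{\phi}}{}$ over pure $2n$-qubit stabiliser states. I would write each $\ket{\phi}$ in the Dehaene--De Moor affine form, supported with uniform magnitude and quadratic/linear phases on an affine set $A\subseteq\mathbb{F}_2^{n}\times\mathbb{F}_2^{n}$. Because $M_{t,n}$ acts only through the projector $\op{0^n}$ on the system, it multiplies by $e^{i\pi/2^t}$ exactly the slice $A_0=\{(0,y)\in A\}$ and leaves all other amplitudes fixed. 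Since $A_0$ is a coset of the common fibre subspace of $A$, a Clifford that trivialises this fibre and disentangles the ancilla (chosen so that $0^n\mapsto0$) factors the output as $\qty(M_{t,s}\ket{+}^{\otimes s})\otimes\ket{\mathrm{stab}}$, where $s=\dim S$ is the dimension of the affine projection $S$ of $A$ onto the system register; if $0^n\notin S$ the output is itself a stabiliser state of robustness $1$. As robustness is Clifford-invariant and blind to stabiliser tensor factors, each input contributes exactly $g(s)$ with $0\le s\le n$, and maximising gives $\Mcap{M_{t,n}}=\max_{s\le n}g(s)$. The one point requiring care is that the quadratic-form phases of $\ket{\phi}$ be absorbable into the standardising Clifford without disturbing the relative phase $e^{i\pi/2^t}$ carried by the slice $A_0$; this is where the affine bookkeeping could hide a subtlety.

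It then remains to prove that for $t\ge1$ the maximum of $g(s)$ is attained at finite $K$, and this is the main obstacle — indeed it is why the statement is only conjectured. The exact small-$s$ values already exhibit a peak, so what is needed is an upper bound on $g(s)$ for large $s$ that falls below $g(K)$. Writing $M_{t,s}\ket{+}^{\otimes s}=\ket{+}^{\otimes s}+2^{-s/2}\qty(e^{i\pi/2^t}-1)\ket{0^s}$ exhibits the phased amplitude as order $2^{-s/2}$, suggesting one construct an explicit stabiliser decomposition using a short, $s$-independent family of stabiliser states concentrated on a low-dimensional affine neighbourhood of $\ket{0^s}$, and show its $\ell_1$-norm drops below $g(K)$ for all large $s$. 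The difficulty is that convexity and submultiplicativity give bounds that are too loose and $g(s)$ is not obviously monotone, so certifying the eventual decrease appears to require either such a tailored decomposition or a matching family of witnesses built in the spirit of the hyperplane-separation argument of Lemma \ref{thm:choi_faithfulness}.
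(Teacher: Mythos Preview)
Your assessment is correct and matches the paper's treatment. The statement is explicitly a conjecture; the paper proves only the reduction $\Mcap{M_{t,n}}=\max_{0\le s\le n}g(s)$ via the same Dehaene--De~Moor affine-space route you sketch (Theorem~\ref{thm:cap_affine} reduces to $n$-qubit $\ket{\mathcal{K}}$ states, then Theorem~\ref{thm:CtEquiv} and Observation~\ref{obs:M_t_n_dimension} collapse these to one representative $M_{t,k}\ket{+}^{\otimes k}$ per subspace dimension), and the subtlety you flag about absorbing the quadratic-form phases without disturbing the distinguished phase is exactly what Statement~1 of Lemma~\ref{thm:diagonal} handles, using diagonal Cliffords built from $S$, $Z$, and $CZ$ gates. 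For the genuinely open part --- that $g(s)$ peaks at finite $K$ --- the paper offers only the shrinking-amplitude heuristic and the numerical Table~\ref{tab:affineEquality}, essentially the same intuition you reach; your suggested routes (a tailored short stabiliser decomposition, or a witness construction) go slightly beyond what the paper attempts. One small overreach: your claim that no finite $K$ exists for $t=0$ is stronger than the data warrant, since the paper's heuristic (the phased amplitude still decays as $2^{-s/2}$) applies equally at $t=0$; the conjecture as stated may simply have $K>5$ there rather than failing.
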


We also numerically investigated the robustness of diagonal unitaries 
\begin{equation}
	 U = \sum_{x} e^{i \theta_x} \vert x \rangle \langle x \vert ,
	\end{equation}
with $\theta_x$ chosen uniformly at random.  We were particularly interested in understanding when $\mathcal{R}(\Phi_\mathcal{E}) \leq C(\mathcal{E}) \leq \mathcal{R}_*(\mathcal{E})$ is tight or loose.  In Figure ~\ref{fig:random_diag_plots} we compare the Choi robustness with the channel robustness.  For every 2-qubit gate tested we observed that $\mathcal{R}(\Phi_\mathcal{E})= \mathcal{R}_*(\mathcal{E}) $ up to numerical precision.  Whereas, for 3 and 4 qubit gates we typically saw that $\mathcal{R}(\Phi_\mathcal{E})< \mathcal{R}_*(\mathcal{E})$, though the gap is not often large.  While the difference is slight for a single gate, these quantities influence the rate of exponential scaling when considering $N$ uses of such a unitary and will lead to a large gap for modest $N$.

We also compared the Choi robustness with the magic capacity but do not plot this data as it was equal for every random instance we observed.  This is curious since in Figure ~\ref{fig:multicontrol_comp} we clearly see that there do exist diagonal gates, the multicontrol phase gates, for which there is a gap between the Choi robustness and the magic capacity.  While such gates exist, our random sampling of diagonal gates does not tend to provide such examples. We discuss this further in Appendix \ref{app:dim_affine}.

Finally, we are also interested in the \emph{normalised} channel robustness for single-qubit gates $U$, defined as $\qty[\mathcal{R}_*(U^{\otimes n})]^{1/n}$. This allows us to quantify the per-gate savings in sample complexity that can be achieved by grouping single-qubit rotations in $n$-qubit blocks. In Figure \ref{fig:regularised} we present results for qubit $Z$-rotations $U = \exp[iZ \theta]$, up to four qubits. We find that strict submultiplicativity is observed for all values of $\theta$, with significant reductions between the $n=2$ and $n=4$ cases for a wide range of angles.

\begin{figure}[thbp]
    \centering
    \includegraphics[trim={0 9.3cm 0 9.2cm},clip,width=0.6\textwidth]{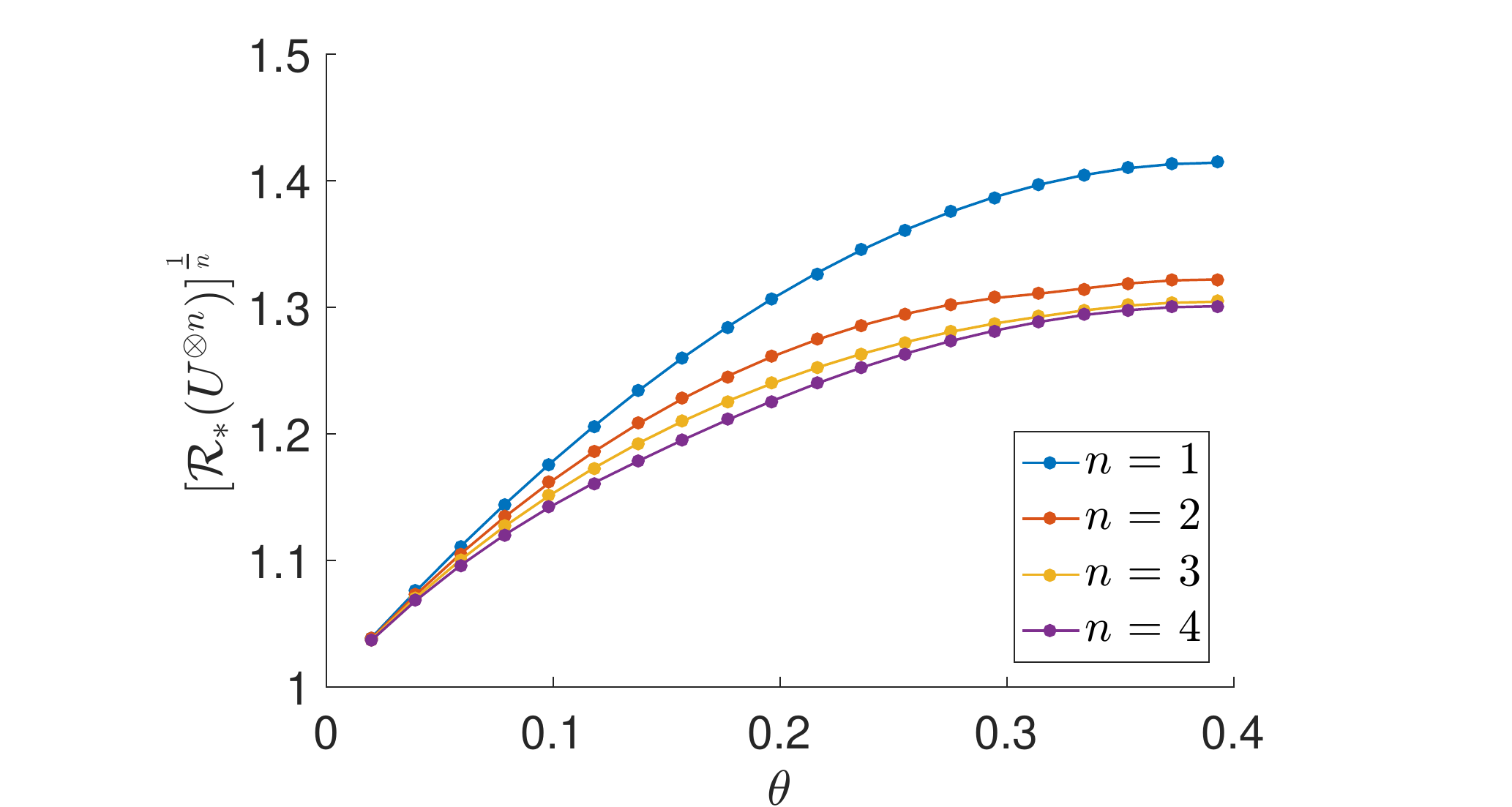}
    \caption{Normalised channel robustness $[\mathcal{R}_*(U^{\otimes n})]^{1/n}$ plotted for $Z$-rotations $U(\theta) = \exp[i Z \theta]$ and for $n$ qubits, up to $n=4$.}
    \label{fig:regularised}
\end{figure}

\section{Conclusion}
We have presented two new magic monotones for general quantum channels: the magic capacity $\mathcal{C}$, which quantifies the ability of a channel to generate magic, and the channel robustness $\mathcal{R}_*$, which is related to finding the minimal quasiprobability decomposition of a channel into stabiliser-preserving CPTP maps. Each of these monotones is directly related to the sample complexity for an associated Monte Carlo-type classical simulation algorithm. We found that for certain quantum channels, our static simulator would lead to a exponentially better sample complexity as compared to that for the algorithm due to Bennink et al. \cite{Bennink2017}. In particular we found a reduction in sample complexity for the case of a sequence of single-qubit rotations subject to amplitude-damping noise. Since our decompositions can be calculated for up to five qubits in the case of diagonal operations, our static simulator is also able to take advantage of the submultiplicativity of channel robustness under tensor product and composition.

For some channels, further improvements in sample complexity are possible using a different simulator that is related to the capacity.  That is, we found the capacity can be strictly less than channel robustness for certain multi-qubit entangling gates.  This simulator has to introduce on-the-fly convex optimisation, however, so each sample will be more difficult to obtain.

For simulation of realistic quantum devices involving many qubits, one would need to decompose the circuit into a sequence of operations on smaller number of qubits, as described in Section \ref{sec:algorithms}, in order that the associated optimisation problems are tractable.  It is a non-trivial problem to decide what is the optimal way to block together the few-qubit operations making up a given many-qubit circuit: we saw in Section \ref{sec:numerical} that the ordering of operations can make a difference to the sample complexity. We leave this problem for a future work.

Since our monotones are submultiplicative under tensor product and compositions, it is generally preferable to combine circuit elements where possible. In practice, to calculate our monotones, circuit elements can involve up to two qubits for the most general case, or at most five qubits for diagonal operations, where we can make use of the techniques described in Appendix \ref{app:diagonal}. In Ref. \cite{Heinrich2018}, Heinrich and Gross show that robustness of magic can be calculated for up to 10 copies of the resource states employed in the standard magic state model of fault-tolerant quantum computation. Their methods rely on both the permutation symmetry due to having multiple copies, and the stabiliser symmetries of the magic states considered. Another direction for future work could therefore be to investigate whether (and under what conditions) similar techniques can be applied to the channel picture to increase the number of qubits that can be involved in each circuit element.

\begin{acknowledgments}
This work was supported by the Engineering and Physical Sciences Research Council [grant numbers EP/P510270/1 (J.R.S.) and EP/M024261/1 (E.T.C.)] .

We would like to thank Mark Howard for many useful discussions and assistance with linear optimisation. We would also like to thank Ryuji Takagi for interesting discussions on stabiliser-preserving operations. Thank you to Hakop Pashayan and Dan Browne for helpful discussions on classical simulation and the discrete Wigner function. Finally we would like to thank the anonymous reviewers for their comments and suggestions on the manuscript.
\end{acknowledgments}


\bibliography{ms}

\appendix
\renewcommand{\theequation}{A\arabic{equation}}
             \renewcommand{\thetheorem}{A\arabic{theorem}}
             \renewcommand{\thelemma}{A\arabic{lemma}}
             \renewcommand{\theobs}{A\arabic{obs}}
             \renewcommand{\theconjecture}{A\arabic{conjecture}}
\section{Properties of robustness of the Choi state \label{app:choiR}}
Here we confirm that the robustness of the Choi state, $\RChoi{}$ has the properties convexity and submultiplicativity under tensor product. We then give an example to show that it is not submultiplicative under composition.

\noindent\textbf{Convexity:} 
This follows immediately from convexity of robustness of magic. Consider a real linear combination of $n$-qubit channels: $\mathcal{E} = \sum_k q_k \mathcal{E}_k$. The Choi state for $\chan$ is:
\begin{align}
    \choistate{\chan} & = \outstate{\chan}{\op{\Omega_n}}{n} \\
                        & = \sum_k q_k \outstate{\chan_k}{\op{\Omega_n}}{n}  = \sum_k q_k \choistate{\chan_k},
\end{align}
where in the last line we identified $\outstate{\chan_k}{\op{\Omega}}{n}$ as the Choi state for $\chan_k$. Then by convexity of robustness of magic:
\begin{equation}
     \robmag{\choistate{\chan}}{} \leq \sum_k \abs{q_k}\robmag{\choistate{\chan_k}}{},
\end{equation}
which shows $\mathcal{R}\qty(\Phi_\chan)$ is convex in $\chan$.

\noindent\textbf{Submultiplicativity under tensor product:} The maximally entangled state $\ket{\Omega_{n+m}}^{AA'|BB'}$ as defined by equation \eqref{eq:CJdefn} in the main text can be factored as $\ket{\Omega_{n+m}}^{AA'BB'} = \ket{\Omega_n}^{A|B} \ket{\Omega_m}^{A'|B'}$. So the Choi state for a channel $\chan^{AA'} = \chan^A \otimes \chan'^{A'}$, where $\chan^A$ and $\chan'^{A'}$ are respectively $n$-qubit and $m$-qubit channels, can be written:
\begin{align}
    \choistate{\chan} & = \outstate{\chan^A \otimes \chan'^{A'}}{\op{\Omega_{n+m}}^{AA'|BB'}}{n+m} \\
                        & = \outstate{\chan^A}{\op{\Omega_n}^{A|B}}{n} \otimes  \outstate{\chan'^{A'}}{\op{\Omega_m}^{A'|B'}}{m}  = \choistate{\chan^A} \otimes \choistate{\chan'^{A'}}.
\end{align}
Then by submultiplicativity of robustness of magic for states, we have:
$$
\robmag{\choistate{\chan^A \otimes \chan'^{A'}}}{} \leq \robmag{\choistate{\chan^A}}{} \robmag{\choistate{\chan'^{A'}}}{},
$$
which is the desired property.

\noindent\textbf{Failure of submultiplicativity under composition:} Let $\chan_1$ be the single-qubit $Z$-reset channel defined by Kraus operators $\qty{\op{0}, \op{0}{1}}$, and let $\chan_2$ be the conditional channel defined by $\qty{\op{T}{0}, \op{1}}$, where $\ket{T}=T\ket{+}$. These channels respectively have Choi states $\choistate{\chan_1}=\op{0} \otimes \frac{\id}{2}$,  and $\Phi_{\chan_2} = \frac{1}{2}\qty(\op{T0} + \op{11} )$, with robustness of magic $\RChoi{\chan_1} = 1$ and $\RChoi{\chan_2} \approx 1.207$.

The composed channel $\chan_2 \circ \chan_1$ has a Kraus representation $\qty{\op{T}{0},\op{T}{1}}$, and so has a Choi state $\Phi_{\chan_2 \circ \chan_1} = \op{T} \otimes \frac{\id}{2}$, with $\RChoi{\chan_2 \circ \chan_1} \approx 1.414 > \RChoi{\chan_2}\RChoi{\chan_1}$. So it is not the case that the robustness of the Choi state is submultiplicative under composition.

More intuitively, such counterexamples arise for channels $\chan$ where the stabiliser state $\ket{\phi_*}$ that results in maximal final robustness $\mathcal{R}\qty[\outstate{\chan}{\op{\phi_*}}{n}]$ is not the maximally entangled state $\ket{\Omega_n}$, as then we can always boost the output robustness by using a stabiliser-preserving operation to prepare $\ket{\phi_*}$ before applying $\chan$.

\section{Properties of channel robustness \label{app:channel_robustness}}

\noindent\textbf{Faithfulness:} Suppose $\chan$ is an $n$-qubit CPTP map. There are two cases:

(i) $\chan \in \SPOmth{n,n}$. In this case, $\Phi_\chan$ is itself a mixed stabiliser state, and since $\chan$ is trace-preserving it satisfies $\Tr(\Phi_\chan) = \idn{n}/2^n$. So $\Phi_\chan$ is already trivially a decomposition of the correct form, with $p=0$, so that $R_* (\chan) = 1 + 2p = 1$.

(ii) $\chan \notin \SPOmth{n,n}$. Then by faithfulness of robustness of magic (Lemma \ref{thm:choi_faithfulness} in the main text), $\Phi_\chan$ has $\RChoi{\chan} > 1$. Since the definition of $\mathcal{R}_*$ is a restriction of $\RChoi{}$, it must be the case that $\RChoi{\chan} \leq \Rstar{\chan}$. Therefore $\mathcal{R}_*(\chan)>1$.

\noindent\textbf{Convexity:} 

Suppose we have a set of Choi states $\Phi_{\chan_j}$ corresponding to channels $\chan_j$, with optimal decompositions:
\begin{equation}
\Phi_{\chan_j} = (1 + p_j) \rho_{j_+} - p_j \rho_{j_-},
\end{equation}
where each $\rho_{j_\pm}$ separately satisfies the condition $\Tr_A(\rho_\pm) = \frac{\idn{n}}{2^n}$, so that $\mathcal{R}_* (\chan_j) = 1 + 2 p_j$. Now take a real linear combination of such channels:
\begin{equation}
\chan = \sum_i q_i \chan_i = \sum_{j\in P} q_{j} \chan_{j} +  \sum_{k \in N} q_{k} \chan_{k} ,
\end{equation}
where $P$ is the set of indices such that $q_{j}\geq 0$, and $N$ is the set such that $q_{k}< 0$. We assume that $\sum_i q_i = 1$ so that the trace of $\Tr(\Phi_\chan) = 1$. Then the corresponding Choi state for channel $\chan$ is:
\begin{align}
\Phi_\chan & =  \sum_{j \in P} q_{j} \Phi_{\chan_j} -  \sum_{k\in N} \abs{q_k} \Phi_{\chan_k} \\
			& = \sum_{j\in P} q_{j} \qty[(1 + p_j) \rho_{j_+} -  p_{j} \rho_{j_-}]  - \sum_{k\in N} |q_{k}| \qty[(1 + p_{k}) \rho_{k_+} -  p_{k} \rho_{{k}_-}] \\
           & = \qty( \sum_{j \in P} q_{j} (1 + p_{j}) \rho_{j_+} + \sum_{k \in N} |q_{k}| p_k \rho_{k_-}) 
            - \qty(\sum_{j \in P} q_{j} p_{j} \rho_{j_-} +  \sum_{k \in N} |q_{k}| (1 + p_{k}) \rho_{k_+} ).
\end{align}
Note that the terms inside the brackets all have positive coefficients, hence we can interpret as non-normalised mixtures over stabiliser states. To normalise them we can define:
\begin{equation}
\widetilde{\rho}_+ = \frac{\sum_{j \in P} q_{j} (1 + p_{j}) \rho_{j_+} + \sum_{k \in N} |q_{k}| p_k \rho_{k_-}}{\sum_{j \in P} q_{j} (1 + p_{j}) + \sum_{k \in N } |q_{k}| p_k},
\end{equation}
and
\begin{equation}
\widetilde{\rho}_- = \frac{\sum_{j \in P} q_{j} p_{j} \rho_{j_-} +  \sum_{k \in N} |q_{k}| (1 + p_k) \rho_{k_+} }{\sum_{j \in P} q_{j} p_{j} +  \sum_{k \in N} |q_k| (1 + p_k)  }.
\end{equation}
Then writing:
\begin{equation}
\widetilde{p} = \sum_{j \in P} q_{j} p_{j} +  \sum_{k \in N} |q_k| (1 + p_k) ,
\end{equation}
one can check that:
\begin{equation}
1 + \widetilde{p} = \sum_{j \in P} q_{j} (1 + p_{j}) + \sum_{k \in N} |q_k| p_k.
\end{equation}
This allows us to rewrite the Choi state as:
$
\Phi_\chan = (1+ \widetilde{p}) \widetilde{\rho}_+ - \widetilde{p} \widetilde{\rho}_-.
\label{eq:Rstardecomp}
$
Since $\widetilde{\rho}_\pm$ are convex mixtures over stabiliser states satisfying $\Tr_A(\rho_{j\pm}) = \frac{\idn{n}}{2^n}$, they must satisfy the same condition. We also know that $\widetilde{p}\geq 0$, so it is clear that the decomposition is in the form required for the definition of $\mathcal{R}_*$, except that it is not necessarily optimised to minimise $1 + 2 \widetilde{p}$. So we have:
\begin{align}
\mathcal{R}_* \qty(\sum_j q_j \chan_j)  \leq 1 + 2 \widetilde{p} 
		& = \sum_{j \in P} q_{j} (1 + p_{j}) + \sum_{k \in N} |q_k| + \sum_{j \in P} q_{j} p_{j} +  \sum_{k \in N} |q_k| (1 + p_k) \\
        & = \sum_{j \in P} |q_{j}| (1 + 2 p_{j}) + \sum_{k \in N} |q_k| (1 + 2 p_k)\\
        & = \sum_i |q_i| \mathcal{R}_* (\chan_j),
\end{align}
which gives us the required result. 

\noindent\textbf{Invariance under tensor with identity:}
In Section \ref{sec:choi} of the main text we saw that $\Rstar{\chan^A \otimes \id} \leq \Rstar{\chan^A}$. We now complete the proof that $\Rstar{\chan^A \otimes \id} = \Rstar{\chan^A}$ by showing that $ \Rstar{\chan^A} \leq \Rstar{\chan^A \otimes \id} $.

Consider an optimal decomposition for $\Phi_{\chan^A \otimes \idn{m}} = (1+p') \rho'_{+} - p' \rho'_{-}$, such that $\mathcal{R}_* (\chan^A \otimes \idn{m}) = 1 + 2p'$, 
where  $\Tr_{AA'}(\rho_\pm) = \idn{n+m}/2^{n+m}$. Here we do not assume that $\rho'_\pm$ are products across the partition $AB|A'B'$, as was the case in equation \eqref{eq:productDecomp} in the main text. However, we have just seen that $\Phi_{\chan^A \otimes \idn{m}} $ can be written as a product, so that by tracing out systems $A'B'$ we obtain:
\begin{equation}
\Phi_{\chan^A} = (1 + p') \Tr_{A'B'}(\rho'_+) - p' \Tr_{A'B'}(\rho'_-).
\end{equation}
Partial trace of a stabiliser state remains a stabiliser state, so this is a stabiliser decomposition. We just need to check that the partial trace condition holds, so we want to show:
\begin{equation}
\Tr_A(\Tr_{A'B'} (\rho'_\pm)) = \Tr_{AA'B'} (\rho'_\pm) = \frac{\idn{n}}{2^n},
\end{equation}
but this is clearly the case from the fact that $\rho'_\pm$ were constrained such that $\Tr_{AA'}(\rho_\pm) = \idn{n+m}/2^{n+m}$. Hence again we have a valid, not necessarily optimal decomposition and:
\begin{equation}
{\mathcal{R}_*(\chan^A) \leq 1 + 2p' = \mathcal{R}_*(\chan^A \otimes \id^B)}.
\end{equation}
Combining with the inequality $\mathcal{R}_* (\chan^A \otimes \id^B) \leq \mathcal{R}_* (\chan^A)$, shown in the main text, we obtain the equality:
\begin{equation}
    \mathcal{R}_* (\chan \otimes \id) = \mathcal{R}_* (\chan) = \mathcal{R}_* (\id \otimes \chan).
\end{equation}

\section{Optimisation problem for channel robustness \label{app:channel_optimisation}}
In Howard and Campbell \cite{Howard2017}, the optimisation problem for calculating robustness of magic for states was cast as follows:
\begin{align*}
\text{\bf minimise}&\quad \lonevec{q}\\
\text{\bf subject to}& \quad A \vec{q} = \vec{b},
\end{align*}
where $\vec{q}$ is a vector of coefficients, $\vec{b}$ is the vector of Pauli expectation values for the target state $\Phi_\chan$, and $A$ is a matrix whose columns are the Pauli vectors for the stabiliser states. For $n$-qubit channels, we have $2n$-qubit Choi states, so the number of generalised Paulis is $N_P = 4^{2n}$, and the number of stabiliser states is $N_S = 2^{2n} \prod_{j=1}^{2n} (2^j + 1)$ \cite{Howard2017}. Then $\vec{b}$ has $N_P$ entries, $\vec{q}$ has $N_S$ entries, and the dimension of $A$ is $(N_P \times N_S)$. From this construction we can recover optimal decompositions of the form:
$
\Phi_\chan = \sum_j q_j \op{\phi_j}
$, where
$
\sum_j q_j = 1
$
and $\ket{\phi_j}$ are the pure stabiliser states.

We want to restrict the problem to decompositions of the form:
\begin{align}
\Phi_\chan = (1 + p) \rho_+ - p \rho_-,
\end{align}
where $p\geq 0$ and $\rho_\pm$ correspond to trace-preserving channels, and can in general be mixed. Rather than enumerating all the extreme points of the set of stabiliser states corresponding to maps in \SPO{n,n}, it is more convenient to retain the same $A$ matrix and modify the constraints. We still need to start from a finite set of extreme points, i.e. pure stabiliser states, so first rewrite as:
\begin{align}
\Phi_\chan &= \sum_j q_{j_+} \rho_j + \sum_j q_{j_-} \rho_j = \sum_j p_{j_+} \rho_j - \sum_j p_{j_-} \rho_j,
\end{align}
where $q_{j_+}$ are the positive quasiprobabilities, $q_{j_-}$ are the negative quasiprobabilities, and $p_{j_\pm} = |q_{j_\pm}|$. In the Pauli vector picture we can write this as  
$
\vec{b} = A \vec{p}_+ - A \vec{p}_-
$, 
where all the entries of $\vec{p}_\pm$ are non-negative. We define a new variable vector $\vec{p}$ which will have twice the length of the previous $\vec{q}$, i.e. $2N_S$ entries:
\begin{equation}
\vec{p} = \begin{pmatrix} \vec{p}_+ \\ \vec{p}_-,
\end{pmatrix}
\end{equation}
and define a new $(N_P \times 2N_S)$ matrix $A'$ in block form, 
$
A' = \begin{pmatrix}
 A & -A
\end{pmatrix}
$. 
Then we have:
\begin{equation}
A' \vec{p} = \begin{pmatrix}
 A & -A
\end{pmatrix} 
\begin{pmatrix} \vec{p}_+ \\ \vec{p}_-
\end{pmatrix} = A \vec{p}_+ - A \vec{p}_- = \vec{b} .
\end{equation}
So now we need to minimise $\lonevec{p} = \sum_j p_j$ subject to $A'\vec{p} = \vec{b}$ and $\vec{p} \geq 0$.

Next, we need the trace-preserving condition. Provided $\chan$ is CPTP, if one part of the decomposition is trace-preserving, then the other will be as well, so we only need enforce the constraint on one of $\rho_+$ or $\rho_-$. Assume that we check $\rho_+$. The condition for a Choi state $\Phi^{AB} = \chan^A \otimes \id^B (\ket{\Omega} \bra{\Omega}^{AB})$ to be trace-preserving is:
\begin{equation}
\Tr_A (\Phi^{AB}) = \frac{\id}{d},
\label{eq:TPcriterion}
\end{equation}
where $d$ is the dimension of the subsystem. We need to convert this to a constraint on the vector $\vec{b}_+$ corresponding to $\phi_+$, which is given by $\vec{b}_+ = A \vec{p}_+$. First, note that all Paulis are traceless except for the identity $P_0 = \id$, so for the maximally mixed state:
\begin{equation}
\langle P_j \rangle = \Tr\qty(P_j \frac{\id}{d}) = \frac{\Tr(P_j)}{d} = \delta_{j,0},
\label{eq:expCondition}
\end{equation}
so if the first entry in a Pauli vector is always $\langle \id \rangle$, the maximally mixed state has Pauli vector:
\begin{equation}
\vec{b}_B = \begin{pmatrix} 1 \\ \vec{0}
\end{pmatrix}.
\end{equation}
where $\vec{0}$ is the zero vector. However, we need this to hold just for the reduced state on $B$ rather than the full Pauli vector. Consider that if the whole state is written 
$
\Phi^{AB} = \sum_{j,k} r_{j,k} P_j \otimes P_k
$ 
for some set of coefficients $r_{j,k}$, then the expectation values are given by:
\begin{equation}
\langle P_l \otimes P_m \rangle = \sum_{j,k} r_{j,k} \Tr(P_l P_j \otimes P_m P_k) = \sum_{j,k} r_{j,k} d^2 \delta_{j,l} \delta_{m,k} = d^2 r_{l,m}.
\end{equation}
The reduced state is:
\begin{equation}
\Tr_A (\Phi^{AB}) = \sum_{j,k} r_{j,k} \Tr_A[P_j \otimes P_k] = \sum_{j,k} r_{j,k} d \delta_{j,0} P_k = d \sum_k r_{0,k} P_k.
\end{equation}
and the entries of the reduced Pauli vector will be:
\begin{equation}
\langle P_m \rangle = d \sum_k r_{0,k} \Tr{P_m P_k} = d^2 r_{0,m} = \langle P_0 \otimes P_m\rangle.
\label{eq:reducedExp}
\end{equation}
So for condition \eqref{eq:TPcriterion} to hold for the reduced state on $B$, we combine equations \eqref{eq:expCondition} and \eqref{eq:reducedExp} to get:
\begin{equation}
\langle P_m \rangle = \langle P_0 \otimes P_m\rangle = \delta_{m,0}.
\end{equation}
That is, we just need to look at the entries of $\vec{b}_+$ corresponding to Paulis of the form $\id \otimes P_j$. These should all be zero except the first entry, which corresponds to $\langle \id \otimes \id \rangle$. Note that $\vec{b}_+ = A \vec{p}_+$ will in general not be normalised, but this does not matter, since we are only interested in whether or not entries are zero. We can use a binary matrix $M$ to pick out the values of interest. As an example we consider the two-qubit case, and assume that the entries are ordered as:
\begin{equation}
\vec{b}_+ = \begin{pmatrix}
\langle{\id \otimes \id}\rangle\\
\langle{\id \otimes X}\rangle\\
\langle{\id \otimes Y}\rangle\\
\langle{\id \otimes Z}\rangle\\
\langle{X \otimes \id}\rangle\\
\vdots \\
\langle{Z \otimes Z}\rangle
\end{pmatrix}.
\end{equation}
Here, we are only interested in the 2nd, 3rd and 4th entries. We form a new vector $\vec{c}$ by left multiplying with $M$:
\begin{equation}
\vec{c} = M \vec{b}_+ = \begin{pmatrix}
0 & 1 & 0 & 0 & 0 &\cdots& 0 \\
0 & 0 & 1 & 0 & 0 &\cdots& 0 \\
0 & 0 & 0 & 1 & 0 &\cdots& 0 \\
\end{pmatrix}\vec{b}_+ = \begin{pmatrix}
\langle{\id \otimes X}\rangle\\
\langle{\id \otimes Y}\rangle\\
\langle{\id \otimes Z}\rangle\\
\end{pmatrix}.
\end{equation}
Then the condition we need is just $\vec{c} = 0 $. To convert this to a condition on the $2N_S$-entry variable $\vec{p} = \begin{pmatrix} \vec{p}_+ \\ \vec{p}_- \end{pmatrix}$, we first pad $A$ with zeroes: $A_+ = \begin{pmatrix} A & \overline{0} \end{pmatrix}$, where $\overline{0}$ is the $(N_P \times N_S)$ zero matrix. We then have:
\begin{equation}
\vec{b}_+ = A \vec{p}_+ = A\vec{p}_+ + \overline{0} \vec{p}_- = \begin{pmatrix} A & \overline{0} \end{pmatrix} \begin{pmatrix} \vec{p}_+ \\ \vec{p}_- \end{pmatrix} = A_+ \vec{p},
\end{equation}
so that $
\vec{c} = M\vec{b}_+ = M A_+ \vec{p}
$. 
Therefore, our condition for trace-preserving $\rho_+$ is 
$
 M A_+ \vec{p} = 0
$. We can therefore specify the new optimisation problem as:
\begin{align*}
\text{\bf minimise}&\quad \lonevec{p} = \sum_j p_j\\
\text{\bf subject to}& \quad A' \vec{p} = \vec{b},\\
					& \quad \vec{p} \geq 0, \\
                    & \quad M A_+ \vec{p} = 0
\end{align*}
where $
A' = \begin{pmatrix} A & -A\end{pmatrix}$, and $A_+ = \begin{pmatrix} A & \overline{0}\end{pmatrix}
$, with $A$ and $\vec{b}$ having the same definitions as previously, $\overline{0}$ is the zero matrix with dimension the same as $A$, and with $M$ being the binary matrix that picks out the $\langle \id \otimes P_j \rangle$ entries from the vector $A_+ \vec{p}$. Most of this is straightforward to implement. The step that requires some care is in correctly constructing the matrix $M$, as it will depend on the choice of ordering of Pauli operators in the construction of $A$ and $\vec{b}$. If the $B$ subsystem has $n$ qubits, then we will need to constrain $4^n - 1$ non-trivial $\langle \id \otimes P_j \rangle$ expectation values to zero, so $M$ should have dimension $((4^n - 1) \times N_P)$. If the Paulis are ordered as in the example given above for $2$-qubit Choi states, then the construction is just 
$
M = \begin{pmatrix} \vec{0} & \id' & \vec{0} & \cdots & \vec{0} \end{pmatrix}
$,
 where $\id'$ is the $((4^n - 1) \times (4^n - 1))$ identity, and $\vec{0}$ denotes a column of zeroes. We have implemented this linear program in MATLAB, using the convex optimisation package CVX \cite{CVX}, and have made the code available from the repository Ref. \cite{channel_repo}.

\section{Properties of magic capacity \label{app:capacity_properties}}

\noindent \textbf{Faithfulness:} 
For any $n$-qubit stabiliser-preserving CPTP channel $\Lambda$, if $\rho \in \STABmth{2n}$ is a stabiliser state, then $(\Lambda \otimes \idn{n}) \rho$ is also a stabiliser state. So by the faithfulness of robustness of magic, $\robmag{(\Lambda \otimes \idn{n}) \rho}{} = 1$ for any input stabiliser state $\rho \in \STABmth{2n}$, and $\Mcap{\Lambda} = 1$. 

Suppose instead that $\chan$ is non-stabiliser-preserving, but still CPTP. Then there exists at least one stabiliser state $\rho \in \STABmth{2n}$ such that $(\chan \otimes \id) \rho$ is a normalised state, but not a stabiliser state. Then by faithfulness of $\mathcal{R}$ when applied to states, $\robmag{(\chan \otimes \id) \rho}{} > 1$, and so $\Mcap{\chan} > 1$.

\noindent\textbf{Convexity:}
Suppose we have a real linear combination of $n$-qubit CPTP maps $\chan_k$: 
\begin{equation}
\chan = \sum_k q_k \chan_k.
\end{equation}
There exists some optimal stabiliser state $\rho_*$  that achieves $\Mcap{\chan} = \robmag{\chan \otimes \id \qty(\rho_*)}{}$. Then
\begin{align}
\robmag{\qty(\chan \otimes \idn{n}) \rho_*}{} &  = \robmag{\sum_k q_k \qty[\qty(\chan_k \otimes \idn{n}) \rho_*]}{} \\
& \leq \sum_k \abs{q_k} \robmag{\qty(\chan_k \otimes \idn{n}) \rho_*}{}, 
\end{align}
where the last line follows by convexity of the robustness of magic. But each robustness  $\robmag{\qty(\chan_k \otimes \idn{n}) \rho_*}{} $ can be no larger than $\Mcap{\chan_k}$. So we have:
\begin{equation}
    \Mcap{ \sum_k q_k \chan_k} \leq \sum_k \abs{q_k} \Mcap{\chan_k}.
\end{equation}

\section{\label{app:diagonal}Calculating monotones for diagonal channels}

\subsection{Reducing the problem size}
As mentioned earlier, the size of the optimisation problem for calculating our monotones (as well as $\RChoi{\chan}$) quickly becomes prohibitively large for $n$-qubit states, since the number of stabiliser states increases super-exponentially with $n$ (Table \ref{tab:stabstatenums}). 
\begin{table}[htbp]
\centering
\begin{tabular}{c|c}
$n$ & $N_S$ \\
\hline
1 & 6\\
2&60\\
3 & 1,080 \\
4 & 36,720 \\
5 & 2,423,520 \\
6 & 315,057,600
\end{tabular}
\caption{Number of pure stabiliser states $N_S$ for number of qubits $n$.}
\label{tab:stabstatenums}
\end{table}
The issue is even worse than it first appears, since for an $n$-qubit channel we must in general consider $2n$-qubit stabiliser states. Direct calculation of either monotone is impractical for $n$-qubit channels with $n>2$. This difficulty is aggravated when calculating the capacity as in principle we have to repeat the optimisation for every $\qty(\chan \otimes \idn{n})\op{\phi}$ such that $\ket{\phi} \in \STABmth{2n}$. 
In some cases we can ameliorate these problems by looking for Clifford gates that commute with the channel of interest. Here we consider the case where $\chan$ is a diagonal channel, meaning it has a Kraus representation where each Kraus operator is diagonal in the computational basis. This of course includes diagonal unitaries as a special case. One could likely reduce the problem size further by exploiting symmetries of channels using techniques similar to those used in Ref. \cite{Heinrich2018}, but we will not consider this strategy here.

It is straightforward to see how the problem can be simplified for calculating $\RChoi{\chan}$ and $\mathcal{R}_*(\chan)$. If $\chan$ is diagonal, the operation $\chan \otimes \idn{n}$ commutes with any sequence of CNOTs targeted on the last $n$ qubits. But the maximally entangled state $\ket{\Omega_n}$ can be written:
\begin{equation}
    \ket{\Omega_n} = U_C (\ket{+}^{\otimes n} \otimes \ket{0}^{\otimes n}).
\end{equation}
Here $U_C = \otimes_{j=1}^n U_j$, where $U_j$ is the CNOT controlled on qubit $j$ and targeted on qubit $n+j$. By the monotonicity of robustness of magic, we immediately see that:
\begin{equation}
    \RChoi{\chan} = \mathcal{R}\qty[\outstate{\chan}{\op{\Omega_n}}{n}] = \mathcal{R} \qty[\chan\qty(\op{+}^{\otimes n})\otimes \op{0}^{\otimes n}] = \mathcal{R} \qty[\chan\qty(\op{+}^{\otimes n})].
\end{equation}

For the channel robustness we would like to decompose $\chan\qty(\op{+}^{\otimes n})$ in terms of states $\rho_\pm \in \STABmth{n}$, but need to take care that the trace condition $\Tr_A(\rho'_\pm) = \idn{n}/2^n$ is satisfied for the equivalent $2n$-qubit Choi states $\rho'_\pm$. In Appendix \ref{app:tracecond} we show that the criterion is satisfied provided all diagonal elements of $\rho_\pm$ are equal to $1/2^n$. So for diagonal channels we can write:
\begin{equation}
    \mathcal{R}_*(\chan) = \min_{ \rho_\pm \in \STABmth{n}} \qty{1 + 2p : (1+p) \rho_+ - p \rho_- = \chan(\op{+}^{\otimes n}), \, p \geq 0, \bra{x}\rho_\pm \ket{x} = \frac{1}{2^n},  \forall x } .
\end{equation}
So calculation of $R_*(\chan)$ and $\mathcal{R}(\Phi_\chan)$ is tractable up to five qubits provided $\chan$ is diagonal. We will see below in Section \ref{app:cap_affine} that this is also true for the magic capacity.

\subsection{Trace condition for diagonal channels \label{app:tracecond}}
Consider that the Choi state for a diagonal channel has a decomposition 
\begin{equation}
    \Phi_\chan = U_C \qty(\chan(\op{+}^{\otimes n}) \otimes \op{0}^{\otimes n})U_C^\dagger= (1 + p) \rho_+ - p \rho_-,
    \label{eq:plus_zero_decomp}
\end{equation}
where $U_C = \otimes_{j=1}^n U_j$ is the tensor product of CNOTs $U_j$ that are controlled on the $j$th qubit and targeted on the  $n+j$th. Then 
\begin{equation}
  \chan(\op{+}^{\otimes n}) \otimes \op{0}^{\otimes n}= (1 + p) \rho'_+ - p \rho'_-,\label{eq:E_plus_tensor_zero_decomp}
\end{equation}
where $\rho'_\pm$ are still stabiliser states since $U_C$ is Clifford. Now consider the stabiliser-preserving channel $\idn{n} \otimes \Lambda$ that resets the last $n$ qubits to $\op{0}^{\otimes n}$. Applying this to both sides of equation \eqref{eq:E_plus_tensor_zero_decomp} we get a new decomposition
\begin{equation}
\chan(\op{+}^{\otimes n}) \otimes \op{0}^{\otimes n}= (1 + p) \rho''_+ \otimes \op{0}^{\otimes n}  - p \rho''_-\otimes \op{0}^{\otimes n}.
\end{equation}
Then referring back to equation \eqref{eq:plus_zero_decomp}, we obtain $\rho_\pm = U_C \qty(\rho''_+ \otimes \op{0}^{\otimes n})U_C^\dagger$. So, the trace-preserving condition becomes:
\begin{align}
    \frac{\idn{n}}{2^n} = \Tr_A \qty(\rho_\pm) &= \Tr_A \qty(U_C \qty(\rho''_+ \otimes \op{0}^{\otimes n})U_C^\dagger)  \\
                & = \sum_x \bra{x}^A U_C \qty(\rho''_+ \otimes \op{0}^{\otimes n})U_C^\dagger \ket{x}^A,
\end{align}
where $\ket{x}$ are the computational basis states on subsystem $A$. Recalling that $U_C$ can be written as a tensor product of CNOTs $U_C = \otimes_{j=1}^n U_j$ one can check that this equation can be written:
\begin{equation}
    \frac{\idn{n}}{2^n} = \sum_x \bra{x} \rho''_\pm \ket{x} \op{x}.
\end{equation}
Therefore, the decomposition corresponds to a pair of trace-preserving channels provided that all diagonal elements of $\rho''_\pm$ are equal to $1/2^n$. 

For a given diagonal channel, there always exists a decomposition that satisfies these conditions and has  $\ell_1$-norm equal to the channel robustness as defined for the full Choi state. We do not give the full proof here, but sketch the argument. Given any diagonal channel $\chan$ decomposition of the full Choi state $\Phi_\chan = (1 + p) \rho_+ - p \rho_-$ satisfying the trace condition, one can always find a new decomposition $\Phi_\chan = (1 + p) \Lambda(\rho_+) - p \Lambda(\rho_-)$ where $\Lambda(\rho_\pm)$ still satisfy $\Tr_A(\Lambda(\rho_\pm))$, but are now the Choi states for diagonal channels. The map $\Lambda$ used to obtain this decomposition is in effect an error correction circuit that takes general stabiliser Choi states to the subspace corresponding to the diagonal channels. Specifically, we note that the Choi states for diagonal maps $\curlyT$ have the general form:
\begin{equation}
   \Phi_\curlyT = \frac{1}{2^n}\sum_{j,k} c_{j,k} \ket{j}^A\ket{j}^B\bra{k}^A\bra{k}^B.
   \label{eq:thischoistateisdiagonal}
\end{equation}
In general $c_{j,k}$ can be complex or zero, but terms on the diagonal are constrained. In particular, trace-preserving diagonal channels cannot change the weight of particular computational basis states, so the probability distribution for computational basis states will be the same as for $\ket{\Omega}$:
\begin{equation}
    \bra{p,q}  \Phi_\curlyT \ket{p,q} = \frac{1}{2^n} \delta_{p,q}.
\end{equation}
The circuit $\Lambda$ is defined by the following steps. For each $j$ from $1$ to $n$:
\begin{enumerate}
    \item Perform a parity measurement ($Z\otimes Z$) between qubits $j$ and $n + j$.
    \item If even parity (+1 outcome), do nothing. If odd parity (-1 outcome), perform an $X$ gate on qubit $j$.
\end{enumerate}
This stabiliser-preserving channel leaves Choi states for diagonal maps (and crucially, the target Choi state $\Phi_\chan$) invariant, but updates general Choi states to have the form \eqref{eq:thischoistateisdiagonal}. One can check that the circuit preserves the property $\Tr_A(\rho_\pm) = \idn{n}/2^n$. We then obtain a decomposition in the desired form:
\begin{equation}
    \Lambda(\Phi_\chan) = \Phi_\chan = (1 + p) \Lambda(\rho_+) - p \Lambda(\rho_-)
\end{equation}
Where $\Lambda(\rho_\pm) = (\chan_\pm \otimes \id)\op{\Omega} $ are Choi states for $n$-qubit diagonal channels $\chan_\pm$. But as described above, the CNOT sequence $U_C$ commutes with diagonal channels acting on the first $n$ qubits, so we can obtain $n$-qubit representatives of these channels:
\begin{equation}
    \chan_\pm(\op{+}^{\otimes n}) \otimes \op{0}^{\otimes n} = U_C \qty((\chan_\pm \otimes \id)\op{\Omega}) U_C^\dagger.
\end{equation}
Discarding the last $n$ qubits we obtain the desired $n$-qubit decomposition:
\begin{equation}
    \chan(\op{+}^{\otimes n}) = (1 + p) \chan_+(\op{+}^{\otimes n}) - p \chan_-(\op{+}^{\otimes n}).
\end{equation}

\subsection{Magic capacity in the affine space picture\label{app:cap_affine}}
In this section we will make use of the formalism due to Dehaene and De Moor, in which stabiliser states are cast in terms of affine spaces and quadratic forms over binary vectors \cite{Dehaene2003,Gross2008}, to prove the following theorem:
\begin{theorem}[Capacity for diagonal operations]\label{thm:cap_affine}
Suppose the $n$-qubit channel $\chan_D$ is diagonal. Let:
\begin{equation}
    \ket{\mathcal{K}} = \frac{1}{|\mathcal{K}|^{1/2}} \sum_{x \in \mathcal{K}} \ket{x},
    \label{eq:K_states}
\end{equation}
where $x \in \mathbb{F}_2^n$ are binary vectors and $\mathcal{K} \subseteq \mathbb{F}_2^n $ is an affine space. Then:
\begin{equation}
\Mcap{\chan_D} = \max_{\mathcal{K}} \robmag{\chan_D\qty(\op{\mathcal{K}})}{}.
\end{equation}
\end{theorem}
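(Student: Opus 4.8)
The plan is to prove the two inequalities $\Mcap{\chan_D}\geq\max_{\mathcal{K}}\robmag{\chan_D(\op{\mathcal{K}})}{}$ and $\Mcap{\chan_D}\leq\max_{\mathcal{K}}\robmag{\chan_D(\op{\mathcal{K}})}{}$ separately. The first is immediate: for any affine space $\mathcal{K}\subseteq\mathbb{F}_2^n$, the $2n$-qubit state $\ket{\mathcal{K}}^A\otimes\ket{0^n}^B$ is a stabiliser state, and since $\op{0^n}$ is a stabiliser state it does not contribute to the robustness, so $\robmag{(\chan_D\otimes\idn{n})(\op{\mathcal{K}}\otimes\op{0^n})}{}=\robmag{\chan_D(\op{\mathcal{K}})}{}$. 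This is a feasible point of the maximisation defining $\Mcap{\chan_D}$, giving the first inequality.

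The substance is the reverse inequality, which requires showing that \emph{every} pure $2n$-qubit stabiliser input $\ket{\phi}$ can be replaced, without changing the output robustness, by a real non-negative affine state on the $A$ register alone. The engine is the observation that $\chan_D\otimes\idn{n}$ commutes with a large subgroup of Cliffords: since each Kraus operator $K_i=\sum_a\lambda_{i,a}\op{a}^A$ preserves computational-basis labels on $A$ and acts trivially on $B$, the map $\chan_D\otimes\idn{n}$ commutes with (i) every diagonal unitary on all $2n$ qubits, and (ii) every CNOT controlled on a qubit of $A$ and targeted on a qubit of $B$. For any Clifford $V$ drawn from this subgroup, $(\chan_D\otimes\idn{n})V\op{\phi}V^\dagger=V(\chan_D\otimes\idn{n})\op{\phi}V^\dagger$, so by the two-sided monotonicity of robustness under reversible Cliffords the output robustness is unchanged when the input $\ket{\phi}$ is replaced by $V\ket{\phi}$.

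I would then carry out two reductions using such commuting Cliffords. First, writing $\ket{\phi}$ in Dehaene--De Moor form $\ket{\phi}\propto\sum_{x\in\mathcal{A}}i^{\ell(x)}(-1)^{q(x)}\ket{x}$ with $\mathcal{A}\subseteq\mathbb{F}_2^{2n}$ affine, the phase function $i^{\ell}(-1)^{q}$ extends to a diagonal \emph{Clifford} $D$ on all $2n$ qubits; applying the diagonal (hence commuting) $D^\dagger$ turns $\ket{\phi}$ into the real affine state $\ket{\mathcal{A}}=|\mathcal{A}|^{-1/2}\sum_{x\in\mathcal{A}}\ket{x}$. Second, decomposing $\mathcal{A}=(a_A,a_B)+\mathcal{A}_0$ and setting $V_0=\{w:(0,w)\in\mathcal{A}_0\}$, $W=\pi_A(\mathcal{A}_0)$ with a linear section $s:W\to\mathbb{F}_2^n$ satisfying $(u,s(u))\in\mathcal{A}_0$, I would extend $s$ to an $\mathbb{F}_2$-linear map $M$ and realise $\ket{x_A}\ket{x_B}\mapsto\ket{x_A}\ket{x_B+Mx_A}$ by a network of $A\to B$ CNOTs. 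A short calculation shows the $s(u)$-correlation cancels mod $2$, so the network maps $\ket{\mathcal{A}}$ to a product $\ket{\mathcal{K}}^A\otimes\ket{\mathcal{L}}^B$ of affine states with $\mathcal{K}=a_A+W$. Applying $\chan_D\otimes\idn{n}$ then yields $\chan_D(\op{\mathcal{K}})\otimes\op{\mathcal{L}}$, whose robustness equals $\robmag{\chan_D(\op{\mathcal{K}})}{}$ since $\ket{\mathcal{L}}$ is a stabiliser state. Chaining the two commuting-Clifford moves gives $\robmag{(\chan_D\otimes\idn{n})\op{\phi}}{}=\robmag{\chan_D(\op{\mathcal{K}})}{}$ for some affine $\mathcal{K}\subseteq\mathbb{F}_2^n$, and maximising over $\ket{\phi}$ delivers the reverse inequality.

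I expect the main obstacle to be the first reduction: justifying cleanly that the full phase function of an arbitrary $2n$-qubit stabiliser state is realisable on its support by a genuine diagonal \emph{Clifford}, which is essential because only Clifford conjugation leaves $\mathcal{R}$ invariant. I would handle this via the Dehaene--De Moor/Gross normal form, checking that $i^{\ell(x)}(-1)^{q(x)}$ for linear $\ell$ and quadratic $q$ is exactly the phase realised by products of $S$, $Z$ and $CZ$ gates, after extending $\ell$ and $q$ arbitrarily off $\mathcal{A}$. The second reduction is more routine linear algebra over $\mathbb{F}_2$; the only points needing care are confirming that an arbitrary linear $M$ is implementable purely by $A\to B$ CNOTs and that these commute with $\chan_D\otimes\idn{n}$.
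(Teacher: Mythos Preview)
Your proposal is correct and follows essentially the same route as the paper's proof via its Lemma on equivalences for diagonal channels: first strip the Dehaene--De~Moor phases with a diagonal Clifford built from $S$, $Z$ and $CZ$ gates (which commutes with $\chan_D\otimes\idn{n}$), then factor the resulting affine state across $A|B$ using a network of CNOTs controlled on $A$ and targeted on $B$. Your description of the factoring step via a linear section $s:W\to\mathbb{F}_2^n$ is a coordinate-free repackaging of the paper's generator-matrix row reduction, but the underlying argument is the same.
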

That is, given an $n$-qubit channel $\chan$, provided the channel is diagonal, the capacity $\Mcap{\chan}$ may be calculated by optimisation over only the $n$-qubit states $\ket{\mathcal{K}}$ as defined in equation \eqref{eq:K_states}, rather than over all $2n$-qubit stabiliser states.

We first review the formalism of Ref. \cite{Dehaene2003}. Computational basis states $\ket{x}$ can be labelled by binary column vectors $x = \qty(x_1,\ldots,x_n)^T \in \mathbb{F}_2^n$, so that $x_j \in \qty{0,1}$ relates to the $j$th qubit. Any pure $n$-qubit stabiliser state may be written:
\begin{equation}
\ket{\mathcal{K},q,d} = \frac{1}{|\mathcal{K}|^{1/2}} \sum_{x \in \mathcal{K}} i^{d^T x} (-1)^{q(x)} \ket{x},
\end{equation}
where $\mathcal{K} \subseteq  \mathbb{F}_2^n$ is an affine space, $d$ is some fixed binary vector, and $q(x)$ has the form:
\begin{equation}
q(x) = x^T Q x + \lambda^T x.
\end{equation}
Here $Q$ is a binary, strictly upper triangular matrix, $\lambda$ is a vector, and addition is modulo 2. Conversely, any state that can be written in this way is a stabiliser state. 

An affine space $\mathcal{K}$ is a linear subspace $\mathcal{L}$ shifted by some constant binary vector $h$, modulo 2: $\mathcal{K} = \mathcal{L} + h$. Every affine space is related in this way to exactly one linear subspace, and the dimension $k = \dim(\mathcal{K})$ of an affine space means the dimension of the corresponding subspace. Instead of enumerating all elements of an affine space, we can specify it by a shift vector $h$ and an $n \times k$ matrix where each column is one of the generators of the corresponding linear space:

\begin{equation}G = 
\begin{pmatrix}
\vec{g}_1 & \vec{g}_2  & \cdots & \vec{g}_k
\end{pmatrix} = \begin{pmatrix}
g_{1,1} & g_{1,2}  & \cdots & g_{1,k} \\
\vdots &  \vdots & &\vdots\\
g_{j,1} & g_{j,2}  & \cdots & g_{j,k} \\
\vdots &  \vdots & &\vdots\\
g_{n,1} & g_{n,2}  & \cdots & g_{n,k} \\
\end{pmatrix} .
\end{equation}
We have freedom in our choice of $k$ independent generators, and we can transform between equivalent generating sets by adding any two columns of $G$. We are also free to swap any two columns. A general transform between generating sets can therefore be represented by an invertible matrix $S$ of dimension $k \times k$, multiplying on the right $G \longrightarrow GS$.

Any non-trivial linear transformation of the affine space can be fully specified by the transformation of the generators and the shift vector. In particular, we can represent the action of a single CNOT by multiplication on the left by a matrix $C$. If the CNOT has control qubit $j$ and target qubit $k$, then $C$ has 1s on the diagonal, a 1 in the $j$th element of the $k$th row, and zeroes everywhere else. A sequence for a $2n$-qubit system, in which CNOTs are always controlled on the first $n$ qubits, and targeted on the last $n$ qubits can be represented in block form:
\begin{equation}
C = \begin{pmatrix}
\id & 0 \\
M & \id \\
\end{pmatrix},
\label{eq:CNOTmatrix}
\end{equation}
where each block has dimension $ n \times n $, and $M$ can be any binary matrix. We use this formalism to prove the following lemma, which leads directly to Theorem \ref{thm:cap_affine}:
\begin{lemma}[Equivalences for diagonal channels]\label{thm:diagonal}
Suppose $\chan_D$ is a diagonal CPTP channel. Then:
\begin{enumerate}
\item All input stabiliser states with the same affine space $\mathcal{K}$ result in the same final robustness:
\begin{equation}
\robmag{(\chan_D \otimes \id)\op{\mathcal{K},q,d}}{} = \robmag{(\chan_D \otimes \id)\op{\mathcal{K},q',d'}}{}, \quad \forall q,q',d,d'.
\end{equation}
\item Given a $2n$-qubit state $ \ket{\phi}\in \STABmth{2n}$ , there exists some $n$-qubit $\ket{\phi'} \in \STABmth{n}$ such that:
\begin{equation}{\robmag{(\chan_D \otimes \idn{n})\op{\phi}}{} = \robmag{\chan_D \qty(\op{\phi'})}{}} .
\end{equation}
\end{enumerate}
\end{lemma}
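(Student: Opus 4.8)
The plan is to establish the two numbered claims in sequence, using the first as a lemma for the second, and then to read off Theorem~\ref{thm:cap_affine} as an immediate corollary. The observation that governs everything is that $\chan_D \otimes \idn{n}$ is itself a \emph{diagonal} channel on $2n$ qubits: its Kraus operators $K_i \otimes \idn{n}$ are diagonal on the full register since each $K_i$ is diagonal on $A$.

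For the first claim I would argue that the phase data $(q,d)$ of the input is immaterial. Any two $2n$-qubit stabiliser states sharing the affine space $\mathcal{K}$ differ only in the phase factor $i^{d^Tx}(-1)^{q(x)}$, and a phase factor of this form is precisely what a diagonal Clifford $D$ (a product of $Z$, $S$ and $CZ$ gates on the $2n$ qubits) produces when acting on the uniform superposition over $\mathcal{K}$. Because $D$ and every $K_i \otimes \idn{n}$ are simultaneously diagonal they commute, so $D$ commutes with $\chan_D \otimes \idn{n}$; hence $(\chan_D \otimes \idn{n})\op{\mathcal{K},q,d} = D\,(\chan_D \otimes \idn{n})\qty(\op{\mathcal{K},q',d'})\,D^\dagger$, where $D$ is the diagonal Clifford relating the two phase functions. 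Monotonicity of robustness under the reversible Clifford $D$, applied in both directions, then gives the stated equality.

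For the second claim I would encode $\ket{\phi}$ by its $\mathbb{F}_2$ generator matrix $G = \binom{G_A}{G_B}$ (of size $2n \times k$, with blocks of $n$ rows) and a shift vector $(h_A,h_B)$, and reduce $G$ to block-diagonal form using only operations that leave the output robustness unchanged. These are: column operations on $G$, which merely change the generating set and do not alter the state; CNOTs targeted from $A$ onto $B$, acting as $G_B \mapsto G_B + M G_A$, which commute with $\chan_D \otimes \idn{n}$ because their controls sit on the diagonal $A$-register while their targets lie in $B$, where the channel is trivial; and---via the first claim---arbitrary diagonal Cliffords, used to discard the phase function. It is essential \emph{not} to use CNOTs internal to $A$, which conjugate $\chan_D$ into a different channel. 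The reduction runs as follows: column operations bring $G_A$ to the form $(\tilde G_A \mid 0)$ with $\tilde G_A$ of full column rank $r_A \le n$, so that the first $r_A$ generators carry $A$-block $\tilde G_A$ and $B$-block $\tilde G_B^{(1)}$ while the remaining $k - r_A$ generators have zero $A$-block and independent $B$-blocks $\tilde G_B^{(2)}$; since $\tilde G_A$ has full column rank the equation $M \tilde G_A = \tilde G_B^{(1)}$ is solvable, and the corresponding $A\to B$ CNOTs clear the lower-left block, leaving $G$ block-diagonal. The affine space is then a product $\mathcal{K}_A \times \mathcal{K}_B$ with $\mathcal{K}_A \subseteq \mathbb{F}_2^n$; invoking the first claim to strip the (possibly $A$--$B$ entangling, $CZ$-type) phases, the input is equivalent for the robustness to $\ket{\mathcal{K}_A}\otimes\ket{\mathcal{K}_B}$. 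Then $\chan_D \otimes \idn{n}$ factorises across $A|B$, and since $\ket{\mathcal{K}_B}$ is a stabiliser state that does not contribute to the robustness, we obtain $\robmag{(\chan_D \otimes \idn{n})\op{\phi}}{} = \robmag{\chan_D\qty(\op{\mathcal{K}_A})}{}$ with $\ket{\phi'} = \ket{\mathcal{K}_A} \in \STABmth{n}$.

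I expect the main obstacle to be the bookkeeping of the second claim: pinning down exactly which Cliffords commute with $\chan_D \otimes \idn{n}$ (diagonal Cliffords on $A$, all Cliffords on $B$, and $A\to B$ CNOTs, but crucially not $A$-internal CNOTs) and verifying that this restricted toolkit still suffices to block-diagonalise $G$. The linear-algebra normal form is the crux, and one must track the shift vector and confirm that the residual phase function---which after the CNOTs can carry $A$--$B$ cross terms that would otherwise keep the registers entangled---is legitimately removed by the first claim before asserting that the input has factorised. Finally, Theorem~\ref{thm:cap_affine} follows at once: the second claim collapses the maximisation over all $2n$-qubit stabiliser states in the definition \eqref{eq:capdef} to a maximisation of $\robmag{\chan_D\qty(\op{\mathcal{K}})}{}$ over the $n$-qubit affine-space states \eqref{eq:K_states}, while each such state is plainly a legitimate input, so the two optima coincide.
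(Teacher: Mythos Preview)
Your proposal is correct and follows essentially the same approach as the paper: for statement~1 you use that diagonal Cliffords (built from $S$, $Z$, $CZ$) realise any change of phase data $(q,d)\to(q',d')$ and commute with the diagonal channel, and for statement~2 you block-diagonalise the generator matrix $G=\binom{G_A}{G_B}$ via column operations together with $A\to B$ CNOTs, exactly as the paper does. The only cosmetic difference is ordering---the paper strips phases first and then applies CNOTs to the uniform superposition $\ket{\mathcal{K}}$, whereas you do the CNOTs first and invoke statement~1 afterwards to remove any residual (possibly $A$--$B$ cross) phase; both orderings are valid since each step individually preserves the output robustness.
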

\begin{proof}
We first prove statement 1. Since robustness of magic is invariant under Clifford unitaries, we need to show that there exists a Clifford unitary $U$ that converts $(\chan_D \otimes \id)\op{\mathcal{K},q,d}$  to $(\chan_D \otimes \id)\op{\mathcal{K},q',d'}$. A suitable choice for $U$ is one such that $U\ket{\phi_{\mathcal{K},q,d}} = \ket{\phi_{\mathcal{K},q',d'}}$, and, crucially, that commutes with the channel $\chan_D$. Since $\chan_D$ is given to be diagonal, any diagonal Clifford $U$ will suffice. The affine space $\mathcal{K}$ remains unchanged, so we only need show there is always a diagonal Clifford that maps $q \to q'$ and $d\to d'$ for any $q$, $q'$, $d$ and $d'$. That this is always possible is perhaps already evident from Ref. \cite{Dehaene2003}, but for completeness we give the argument here. 

We can convert $d$ to $d'$ using appropriately chosen $S_j$ gates, meaning the gate $\mathrm{diag}(1,i)$ acting on the $j$th qubit.  Consider the action of $S_j$ on a basis vector:
\begin{equation}
S_j\ket{x} = \begin{cases}
\ket{x} \, \text{if} \, x_j = 0 \\
i\ket{x} \, \text{if} \, x_j = 1
\end{cases}.
\end{equation}
If we define basis vector $e_j$ so that it has $1$ in the $j$th position and zeroes elsewhere, we can write the action of $S_j$ as:
\begin{equation}
S_j \ket{x} = i^{e_j^T x} \ket{x}.
\end{equation}
Note that the form of this equation is independent of the value of $x$, so we can write:
\begin{align}
S_j \ket{\phi_{\mathcal{K},q,d}} &= \frac{1}{|\mathcal{K}|^{1/2}} \sum_{x \in \mathcal{K}} i^{d^T x} (-1)^{q(x)} S_j \ket{x} \\
								& = \sum_{x \in \mathcal{K}} i^{(d^T + e_j^T) x} (-1)^{q(x)} S_j \ket{x}.
\end{align}
So, we can flip any bit of $d$ by applying the correct $S$ gate. The quadratic form $q(x)$ is left unchanged.

Now consider $q(x) = x^T Q x + \lambda^Tx$, which we must convert to some other ${q'(x) = x^T Q' x + \lambda'^T x}$. We can use the same trick as above to convert any $\lambda$ to any other $\lambda'$, by replacing $S_j$ with the $Z_j$ gate, i.e. $\mathrm{diag}(1,-1)$ acting on the $j$th qubit. For $Q$ we can use the controlled-$Z$ gate between the $j$th and $k$th qubit, which we denote $CZ_{jk}$. This has the following effect on a basis state:
\begin{equation}
CZ_{jk} \ket{x} = (-1)^{x^T M_{jk} x} \ket{x},
\end{equation}
where $M_{jk}$ is the $n\times n$ matrix with a $1$ in position $(j,k)$ and zeroes everywhere else. The set of all $\qty{M_{jk}}$ form a basis for $n \times n$ binary matrices, hence we can convert any $Q$ to any other $Q'$ by an appropriately chosen sequence of $CZ$ gates, leaving $d$ and $\lambda$ untouched. This completes the proof of statement 1.

Now to prove statement 2. From statement 1 any stabiliser state $\ket{\phi}$ is equivalent to:
\begin{equation}
\ket{\mathcal{K}} = \frac{1}{|\mathcal{K}|^{1/2}} \sum_{x \in \mathcal{K}} \ket{x},
\end{equation}
up to some diagonal Clifford, for some $\mathcal{K}$. The strategy is to find a Clifford unitary $U$ that commutes with $\chan_D$, and converts the $2n$-qubit stabiliser state $\ket{\mathcal{K}}$ to some product of two $n$-qubit states $\ketcal{K}{'} = \ketcal{K}{'_A} \otimes \ketcal{K}{'_B}$. Then we have:
\begin{align}
\mathcal{R}[(\chan_D \otimes \idn{n})\op{\mathcal{K}}] & = \mathcal{R}\qty[(\chan_D \otimes \idn{n})(\op{\mathcal{K}'_A} \otimes \op{\mathcal{K}'_B})] \\
									& = \mathcal{R}\qty[\chan_D(\op{\mathcal{K}'_A}) \otimes \op{\mathcal{K}'_B}]  = \mathcal{R}\qty[\chan_D\qty(\op{\mathcal{K}'_A})],
\end{align}
where the last step follows as $\ketcal{K}{'_B}$ is a stabiliser state so makes no contribution to the robustness. The final state $\ketcal{K}{'}$ can be factored as $\ketcal{K}{'_A} \otimes \ketcal{K}{'_B}$ provided its generator $G'$ can be written in block matrix form as:

\begin{equation}
G' = \begin{pmatrix}
G'_A & 0 \\
0 & G'_B \\
\end{pmatrix},
\label{eq:productgenerator}
\end{equation}
where $G'_A$ and $G'_B$ have $n$ rows, and represent the generators for affine spaces $\mathcal{K}'_A$ and $\mathcal{K}'_B$.

We now show that we can always reach this form by a Clifford $U_C$ comprised of a sequence of CNOTs targeted on the last $n$ qubits. Such a sequence always commutes with $\chan_D \otimes \idn{n}$. Suppose we have some $2n \times k$ generator $G$ for an affine space $\mathcal{K}$ with $k = \dim(\mathcal{K})$:
\begin{equation}
G = \begin{pmatrix}
G_A \\
G_B \\
\end{pmatrix},
\end{equation}
where $G_A$ and $G_B$ are each $n \times k $ submatrices. The full matrix $G$ will have rank $k$, and $G_A$ will have some rank $m\leq k$. Either $G_A$ is already full rank ($m=k$), or it can be reduced to the following form by elementary column operations, which is equivalent to multiplication on the right by a $k \times k$ matrix $S$:
\begin{equation}
G_A \longrightarrow G_A S = \begin{pmatrix} G_A' & 0 \end{pmatrix},
\end{equation}
where $G_A'$ is $n \times m$ (and hence full column rank), and $0$ is $n \times (k - m)$. Multiplying $G$ on the right by $S$, we interpret as a change in the choice of generating set:
\begin{equation}
G \longrightarrow  GS = \begin{pmatrix}
G_A S \\ G_B S \\
\end{pmatrix} = 
\begin{pmatrix}
G'_A & 0 \\ G''_B  & G'_B \\
\end{pmatrix}.
\end{equation}
Now, apply the Clifford $U_C$ described by the matrix $C$ in equation \eqref{eq:CNOTmatrix}. This transforms the generator to:
\begin{equation}
G' = CGS  = \begin{pmatrix}
\id & 0 \\
M & \id \\
\end{pmatrix} 
\begin{pmatrix}
G'_A & 0 \\ G''_B  & G'_B \\
\end{pmatrix} = \begin{pmatrix}
G'_A & 0 \\ M G'_A  +  G''_B  & G'_B \\
\end{pmatrix}.
\end{equation}
Note that if $G_A$ was already full rank, the change of generating set is not necessary. If we can set the bottom-left submatrix to zero, then $U_C\ketcal{K}{}$ can be factored as described above. This is possible if there exists a binary matrix $M$ such that 
$
 M G'_A  =  G''_B
$.
But $G'_A$ has full column rank $m$, so there exists an $m \times n$ left-inverse $G'^{-1}_{A,\mathrm{left}}$ such that
$
G'^{-1}_{A,\mathrm{left}} G'_A = \id,
$
where $\id$ is $m \times m$. Then we can set $M = G''_B G'^{-1}_{A,\mathrm{left}}$, so that:
\begin{equation}
 M G'_A  = G''_B G'^{-1}_{A,\mathrm{left}} G'_A = G'_B \id =  G''_B.
\end{equation}
Then $G' = CGS$ is in the form \eqref{eq:productgenerator}, so $U_C \ketcal{K}{} = \ketcal{K}{'_A} \otimes \ketcal{K}{'_B}$, as required.
\end{proof}
Lemma \ref{thm:diagonal} shows that if $\chan_D$ is diagonal then for any $2n$-qubit stabiliser state $\ket{\phi}$ we have that $\robmag{\outstate{\chan_D}{\op{\phi}}{n}}{} = \robmag{\chan_D\qty(\op{\mathcal{K}})}{}$ for some $n$-qubit affine space $\mathcal{K}$. This shows that the capacity can be calculated by maximising over just the representative states $\ketcal{K}{}$, proving Theorem \ref{thm:cap_affine}. Table \ref{tab:spaces} illustrates the reduction in problem size. For example, whereas naively for a two-qubit channel we would need to calculate robustness for all $36,720$ four-qubit stabiliser states, using the result above we only need check one stabiliser state for each of the $7$ non-trivial affine spaces. Cases up to five qubits are now tractable using this method.

\begin{table}[htbp]
\centering
\begin{tabular}{c|c|c|c}
$n$ &  Stabiliser states & Total affine spaces & Non-trivial affine spaces\\
\hline
2 & 60 & 11 & 7\\
3 & 1,080 & 51 & 43\\
4 & 36,720 & 307 & 291\\
5 & 2,423,520 & 2451 & 2419\\
\end{tabular}
\caption{\label{tab:spaces} Number of $n$-qubit stabiliser states compared with number of affine spaces. By trivial affine spaces we mean those comprised of a single element, which correspond to computational basis states. Diagonal CPTP channels act as the identity on such states.}
\end{table}

\subsection{Dimension of affine space\label{app:dim_affine}}
Here we make further observations that will help interpret numerical results from Section \ref{sec:numerical} of the main text.
\begin{obs}[Dimension of affine space limits achievable robustness]\label{obs:affineDimension}
Suppose $U$ is a diagonal unitary acting on $n$ qubits, and suppose $\ket{\mathcal{K}}$ is a stabiliser state associated with some affine space $\mathcal{K}$, $k= \dim(\mathcal{K})$. Then $\mathcal{R}(U\ket{\mathcal{K}}) = \mathcal{R}(U' \ket{\phi'})$ where $U' \ket{\phi'}$ is a state on only $k$ qubits, and $U'$ is some $k$-qubit unitary. Therefore $\mathcal{R}(U\ket{\mathcal{K}})$ is upper-bounded by the maximum robustness achievable for a $k$-qubit state.
\end{obs}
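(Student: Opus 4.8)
The plan is to exhibit an explicit Clifford that carries $U\ket{\mathcal{K}}$ to a $k$-qubit state tensored with a stabiliser state, and then invoke Clifford-invariance of the robustness together with the fact that tensoring with stabiliser states does not change $\mathcal{R}$. The whole argument rests on the observation that $\ket{\mathcal{K}}$ is supported on exactly the $2^k$ computational basis vectors of the affine space $\mathcal{K} = \mathcal{L} + h$, and that a diagonal $U$, which acts as $U\ket{x} = e^{i\theta_x}\ket{x}$, merely attaches phases to these basis vectors.

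First I would parametrise $\mathcal{K}$ by $\mathbb{F}_2^k$. Choosing generators $G = (\vec{g}_1,\ldots,\vec{g}_k)$ for the underlying linear space $\mathcal{L}$, every $x \in \mathcal{K}$ is uniquely of the form $x = h + Gc$ with $c \in \mathbb{F}_2^k$. Since $\dim\mathcal{K} = k$ the columns of $G$ are independent, so they extend to a basis of $\mathbb{F}_2^n$; let $A \in GL_n(\mathbb{F}_2)$ be the invertible matrix having these as its first $k$ columns. The affine bijection $v \mapsto Av + h$ of $\mathbb{F}_2^n$ is realised on the computational basis by a Clifford $W$ built from CNOTs (for the invertible linear part $A$, using that $GL_n(\mathbb{F}_2)$ is generated by elementary row operations) and $X$ gates (for the shift $h$). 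By construction $W\ket{c,0^{n-k}} = \ket{h + Gc}$, and hence $W(\ket{+}^{\otimes k} \otimes \ket{0^{n-k}}) = \ket{\mathcal{K}}$.

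The key step is to conjugate $U$ through $W$. Because $W$ is a product of CNOTs and $X$ gates it is a permutation matrix in the computational basis, so $\tilde{U} := W^\dagger U W$ is again diagonal; crucially, $W$ need not commute with $U$, I only need conjugation by a permutation to preserve diagonality. Its diagonal entry on $\ket{c,0^{n-k}}$ equals $\theta_{h+Gc}$, the phase $U$ attaches to $\ket{h+Gc}$. Consequently $\tilde{U}$ acts trivially on the last $n-k$ qubits within the support of $\ket{+}^{\otimes k} \otimes \ket{0^{n-k}}$, and I may read off $W^\dagger U \ket{\mathcal{K}} = \tilde{U}(\ket{+}^{\otimes k} \otimes \ket{0^{n-k}}) = (U'\ket{+}^{\otimes k}) \otimes \ket{0^{n-k}}$, where $U'$ is the $k$-qubit diagonal unitary defined by $U'\ket{c} = e^{i\theta_{h+Gc}}\ket{c}$.

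Finally I would invoke invariance of $\mathcal{R}$ under the Clifford $W^\dagger$ and the fact that tensoring with the stabiliser state $\ket{0^{n-k}}$ leaves the robustness unchanged, obtaining $\mathcal{R}(U\ket{\mathcal{K}}) = \mathcal{R}((U'\ket{+}^{\otimes k}) \otimes \ket{0^{n-k}}) = \mathcal{R}(U'\ket{+}^{\otimes k})$. Setting $\ket{\phi'} = \ket{+}^{\otimes k}$ yields the claimed $k$-qubit state $U'\ket{\phi'}$, and since $U'\ket{\phi'}$ lives on only $k$ qubits, $\mathcal{R}(U\ket{\mathcal{K}})$ cannot exceed the maximal robustness over all $k$-qubit states. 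I expect the only delicate point to be the bookkeeping in constructing $W$ and verifying that $W^\dagger U W$ remains diagonal; everything else follows from properties of $\mathcal{R}$ already established earlier in the paper.
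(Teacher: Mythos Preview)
Your proposal is correct and follows essentially the same route as the paper: both construct a Clifford (built from CNOTs, with you additionally using $X$ gates to absorb the shift $h$) that realises an affine bijection of $\mathbb{F}_2^n$ carrying $\mathcal{K}$ onto $\mathbb{F}_2^k \times \{\text{point}\}$, and then invoke Clifford-invariance of $\mathcal{R}$ together with the fact that tensoring with a stabiliser state leaves $\mathcal{R}$ unchanged. Your framing via conjugation---$W^\dagger U W$ remains diagonal because $W$ permutes the computational basis---is a tidy way to package the phase-tracking that the paper carries out explicitly, but the underlying argument is the same.
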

\begin{proof}
We prove the result by showing that there is a sequence of Clifford gates that takes $U\ket{\mathcal{K}}$ to the product of a $k$-qubit state and an $(n-k)$-qubit stabiliser state. We know from Lemma \ref{thm:diagonal} that for diagonal unitaries, all states with same affine space result in the same robustness, so it is enough to consider the state:
\begin{equation}
\ket{\mathcal{K}} = \frac{1}{\sqrt{\abs{\mathcal{K}}}} \sum_{x \in \mathcal{K}} \ket{x}.
\end{equation}
A diagonal unitary will map this to:
\begin{equation}
U \ket{\mathcal{K}} = \frac{1}{\sqrt{\abs{\mathcal{K}}}} \sum_{x \in \mathcal{K}} e^{i \theta_x}\ket{x},
\end{equation}
where $\qty{e^{i \theta_x}}$ will be some subset of the diagonal elements of $U$. The affine space $\mathcal{K}$ will have a generator matrix of rank $k$. As we saw in Lemma \ref{thm:diagonal}, a sequence of elementary row operations on the generator matrix can be realised by a sequence of CNOT gates. So we can use Clifford gates to transform any rank $k$ generator matrix as:
\begin{equation}
G \longrightarrow G' = A G = \begin{pmatrix}
\id \\
0 
\end{pmatrix},
\end{equation}
where $\id$ is the $k \times k$ identity. Each element of $\mathcal{K}$ can be written $x = \sum_j g_j + h$, where $\sum_j g_j$ is some combination of columns of $G$, and $h$ is a fixed shift vector. The transformation $A$ corresponds to a sequence of CNOTs that we collect in a single Clifford unitary $U_A$, that acts on $n$-qubit computational basis states $\ket{x}$, where $x \in \mathcal{K}$, as follows:
\begin{equation}
U_A \ket{x} = \ket{y(x)} \otimes \ket{h'},
\end{equation}
where $h'$ is an $(n-k)$-length vector, and $y(x)$ is a $k$-length vector given by:
\begin{equation}
\begin{pmatrix}
y(x) \\
h'
\end{pmatrix} = A x = \sum_j Ag_j +Ah.
\end{equation}
Note that $y(x)$ is only defined for $x \in \mathcal{K}$, and that $h'$ is independent of $x$. Elements $x\in \mathbb{F}_2^n$ that are not in $\mathcal{K}$ could be mapped to a vector where the last $n-k$ bits are not $h'$, but these never appear as terms of $U \ket{\mathcal{K}}$. Since $U_A$ must preserve orthogonality, each $\ket{x}$,  where $x \in \mathcal{K}$, maps to a distinct element of the $k$-qubit basis set $\qty{\ket{y}}$. In fact, since $y$ are length $k$ and there are $2^k$ distinct elements, they must form the $k$-bit linear space $\mathcal{L}' = \mathbb{F}_2^k$. So we can write:
\begin{align}
U_A U \ketcal{K}{} &= \frac{1}{\sqrt{\abs{\mathcal{L}'}}} \sum_{y \in \mathcal{L}'} e^{i \theta'_y} \ket{y} \otimes \ket{h'} \\
		& = (U' \ket{\mathcal{L'}}) \otimes \ket{h'},
\end{align}
where $\ket{\mathcal{L}'}$ is a $k$-qubit stabiliser state, and $U'$ is the $k$-qubit diagonal unitary with $e^{i\theta'_{y(x)}} = e^{i\theta_x}$ as the non-zero elements. The state $\ket{h'}$ is a stabiliser state, so cannot contribute to the robustness of $U_A U \ketcal{K}{}$, and therefore $\mathcal{R}(U\ket{\LofK{}}) = \mathcal{R}(U_A U \ket{\LofK{}}) = \mathcal{R}(U' \ket{\mathcal{L}'})$, where $U' \ket{\mathcal{L}'}$ is a $k$-qubit state.
\end{proof}
Recall that in Section \ref{sec:numerical} of the main text, we found that highly structured examples of diagonal unitaries $U$ exist where $\mathcal{C}(U)$ is strictly larger than $\mathcal{R}(\Phi_U)$, whereas for all the random diagonal unitaries sampled, we found them to be exactly equal. We can now explain this by a concentration effect, in conjunction with Observation \ref{obs:affineDimension}. The $n$-qubit random diagonal gates concentrate (with high probability) within a narrow range of values for the magic capacity, close to the maximum possible magic capacity for an $n$-qubit diagonal gate. If $\mathcal{R}(\Phi_U) < C(U)$ then by Theorem ~\ref{thm:cap_affine} we must have that $C(U)=\mathcal{R}(U \op{\mathcal{K}}U^{\dagger}  )$ for some affine space $\mathcal{K}$ of non-maximal dimension.   However, $U \op{\mathcal{K}}U^{\dagger} $ is Clifford equivalent to an $(n-1)$-qubit stabiliser state acted on by a diagonal unitary.  Then $\mathcal{R}(U \op{\mathcal{K}}U^{\dagger}  )$ would be upper bounded by the maximum $C(\mathcal{E})$ for $(n-1)$-qubit diagonal unitaries.  But if $C(\mathcal{E})$ is close to the maximum possible for $n$-qubit diagonal unitaries, then it is impossible for $U \op{\mathcal{K}}U^{\dagger} $ to achieve the magic capacity.

Finally, we consider the special case of multi-control phase gates $M_{t,n}$, which we defined in the main text as:
\begin{equation}
M_{t,n} = \mathrm{diag} ( \exp(  i \pi / 2^t ) , 1 , 1 , \ldots , 1 ), \quad t \in \mathbb{Z}.
 \label{eq:SuppCCphase}
\end{equation} Note that the gate $M_{t,n}$ acts as the identity on states $\ketcal{K}{}$ unless $\mathcal{K}$ contains the zero vector $0^n = (0,\ldots,0)^T$, so if $0^n \notin \mathcal{K}$, we get $\mathcal{R}(M_{t,n} \ket{\mathcal{K}}) = 1$. But if $0^n\in \mathcal{K}$, then $\mathcal{K}$ is a linear subspace. So for this type of gate, to find all possible values of $\mathcal{R}(M_{t,n} \ket{\mathcal{K}})>1$ we need only consider linear subspaces. The following theorem implies that we actually only need solve one optimisation for each possible \emph{dimension} of linear subspace rather than one for every linear subspace.
\begin{theorem}\label{thm:CtEquiv}
Consider the $n$-qubit gate $M_{t,n}$ defined by equation \eqref{eq:SuppCCphase}, and let $\mathcal{L}_A$ and $\mathcal{L}_B$ be linear subspaces such that $\dim(\mathcal{L}_A) = \dim(\mathcal{L}_B) = k$. Then:
\begin{equation}
\robmag{M_{t,n} \ket{\mathcal{L}_A}}{} = \robmag{M_{t,n} \ket{\mathcal{L}_B}}{}.\label{eq:CtRobEquiv}
\end{equation}
\end{theorem}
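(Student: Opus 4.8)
The plan is to exploit the Clifford-invariance of robustness of magic together with the fact that $M_{t,n}$ touches only the $\ket{0^n}$ amplitude. First I would note that, since $\mathcal{L}_A$ and $\mathcal{L}_B$ are linear subspaces of $\mathbb{F}_2^n$ of the same dimension $k$, there is an invertible binary matrix $A \in GL_n(\mathbb{F}_2)$ with $A\mathcal{L}_A = \mathcal{L}_B$: one extends a basis of each subspace to a basis of $\mathbb{F}_2^n$ and lets $A$ send one basis to the other. As recalled in the affine-space formalism above, any such invertible linear transformation is realised by a sequence of CNOTs, hence by a Clifford unitary $U_A$ acting as $U_A\ket{x} = \ket{Ax}$ on computational basis states. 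Because $A$ restricts to a bijection of $\mathcal{L}_A$ onto $\mathcal{L}_B$ and $|\mathcal{L}_A| = |\mathcal{L}_B| = 2^k$, this gives $U_A\ket{\mathcal{L}_A} = \ket{\mathcal{L}_B}$.

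The key step is to show that $U_A$ commutes with $M_{t,n}$. Writing the action of the gate as $M_{t,n}\ket{y} = \exp(i\pi \delta_{y,0^n}/2^t)\ket{y}$, I would evaluate both orderings on a basis state: $M_{t,n}U_A\ket{x} = \exp(i\pi \delta_{Ax,0^n}/2^t)\ket{Ax}$, whereas $U_A M_{t,n}\ket{x} = \exp(i\pi \delta_{x,0^n}/2^t)\ket{Ax}$. Since $A$ is invertible, $Ax = 0^n$ holds if and only if $x = 0^n$, so $\delta_{Ax,0^n} = \delta_{x,0^n}$ and the two expressions agree for every $x$. Hence $M_{t,n}U_A = U_A M_{t,n}$.

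Putting these together yields $U_A M_{t,n}\ket{\mathcal{L}_A} = M_{t,n}U_A\ket{\mathcal{L}_A} = M_{t,n}\ket{\mathcal{L}_B}$. Because $U_A$ is Clifford and robustness of magic is invariant under Clifford unitaries, $\robmag{M_{t,n}\ket{\mathcal{L}_A}}{} = \robmag{U_A M_{t,n}\ket{\mathcal{L}_A}}{} = \robmag{M_{t,n}\ket{\mathcal{L}_B}}{}$, which is the claim.

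The main obstacle is the commutation step, and more precisely ensuring that the Clifford $U_A$ relating the two subspaces permutes the computational basis via a genuinely linear (not merely affine) map, so that the unique phase-carrying state $\ket{0^n}$ is preserved. This is exactly why the restriction to linear subspaces rather than general affine spaces matters: a linear map fixes $0^n$ and has trivial kernel, which is precisely what forces $\delta_{Ax,0^n} = \delta_{x,0^n}$ and hence the commutation. Everything else follows routinely from the Clifford-invariance of $\mathcal{R}$ and the CNOT realisation of $GL_n(\mathbb{F}_2)$ established in the affine-space formalism above.
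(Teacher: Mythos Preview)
Your proof is correct and follows essentially the same approach as the paper: both construct a CNOT-realised Clifford $U_A$ implementing an element of $GL_n(\mathbb{F}_2)$ that maps $\ket{\mathcal{L}_A}$ to $\ket{\mathcal{L}_B}$, observe that such a linear map fixes $0^n$ so that the sole nontrivial phase stays on the $\ket{0^n}$ term, and conclude by Clifford-invariance of $\mathcal{R}$. Your version is slightly cleaner in that you verify the full operator commutation $M_{t,n}U_A = U_A M_{t,n}$ rather than just tracking the phase on the particular state, but the underlying idea is identical.
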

\begin{proof}

We largely repeat the arguments of Observation \ref{obs:affineDimension}, for the special case where the phases are given by:
\begin{equation}
\theta_x = \begin{cases}
\pi/2^t \quad & \text{if}\,  x = \vec{0} \\
0 \quad & \text{otherwise}
\end{cases}
\label{eq:phases}
\end{equation}
Since $ \dim(\mathcal{L}_A) = \dim(\mathcal{L}_B)$, their generator matrices $G_A$ and $G_B$ have the same rank. It follows from the arguments of Observation \ref{obs:affineDimension} that there exists an invertible $C$, corresponding to a sequence of CNOT gates, such that $G_B = C G_A$, and $\ketcal{L}{_A} = U_C \ketcal{L}{_A}$, where $U_C$ is a unitary Clifford operation.

If we consider instead the state $M_{t,n} \ketcal{L}{_A}$, which involves terms in the same basis vectors as $\ketcal{L}{_A}$, we just need to track what happens to the phase $\exp(i \theta_0)$. Clearly, since any CNOT acts as the identity on $\ket{0^n}$, we obtain:
\begin{equation}
     U_C  M_{t,n} \ketcal{L}{_A}  = \frac{1}{2^{k/2}} \sum_{x \in \mathcal{L}_B} \exp(i \theta_x) \ket{x} = M_{t,n} \ketcal{L}{_B}
\end{equation}
Since $U_C$ is a reversible Clifford operation, by monotonicity of robustness of magic, equation \eqref{eq:CtRobEquiv} follows.
\end{proof}
From Theorem \ref{thm:CtEquiv}, then, to find $\Mcap{M_{t,n}}$, we only need calculate $\robmag{M_{t,n}\ketcal{L}{}}{}$ for a single representative subspace for each possible value of $\dim(\mathcal{L})$. Recall that for $n$-qubit stabiliser states $\ketcal{L}{}$, $k=\dim{\mathcal{L}}$ can take integer values from $0$ to $n$. The states with $k=0$ correspond to single computational basis states without superposition, so are unaffected by phase gates. That is, for $n$-qubit multicontrol phase gates we only have to calculate $n$ robustnesses. Compare this to the number of optimisation problems we would need to solve without using the above observations (Table \ref{tab:spaces}).

We can go further. From Observation \ref{obs:affineDimension} we know that for a subspace with $\dim(\mathcal{L})=k < n $, it must be the case that $M_{t_n} \ketcal{L}{}$ is Clifford-equivalent to $(U' \ket{\mathcal{L'}})\otimes \ket{h'} $ for the $k$-qubit state  $\ket{\mathcal{L'}}$ and $(n-k)$-qubit computational basis state $\ket{h'}$, and some diagonal $k$-qubit unitary $U'$. By inspection of the phases given by equation \eqref{eq:phases}, $U'$ can only be the $k$-qubit multicontrol gate $M_{t,k}$. This leads to the following statement:

\begin{obs}[$n$-qubit multicontrol gates]\label{obs:M_t_n_dimension}
For any fixed $t$ and $n$-qubit state $\ketcal{L}{}$ where $\dim(\mathcal{L}) = k < n$, we have:
\begin{equation}
    \robmag{M_{t,n}\ketcal{L}{}}{} = \robmag{M_{t,k}\ketcal{L}{'}}{}
\end{equation}
where $\ketcal{L}{'}$ is the $k$-qubit state with $\mathcal{L}' = \mathbb{F}_2^k$.
\end{obs}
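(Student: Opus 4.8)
The plan is to apply Observation~\ref{obs:affineDimension} directly to the diagonal unitary $U = M_{t,n}$ and then pin down the identity of the resulting $k$-qubit unitary $U'$. That observation already supplies a Clifford $U_A$, built entirely from CNOT gates, together with the factorisation
\begin{equation}
    U_A M_{t,n} \ketcal{L}{} = \qty(U' \ketcal{L}{'}) \otimes \ket{h'},
\end{equation}
where $\mathcal{L}' = \mathbb{F}_2^k$, the state $\ket{h'}$ is an $(n-k)$-qubit computational basis state, and $U'$ is the $k$-qubit diagonal unitary whose phases are inherited through $e^{i\theta'_{y(x)}} = e^{i\theta_x}$. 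So the only work left is to read off $U'$ from the phase pattern of $M_{t,n}$.

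First I would record that, by equation~\eqref{eq:phases}, $M_{t,n}$ carries the single nontrivial phase $\theta_{0^n} = \pi/2^t$ on the all-zeros component and phase $0$ on every other basis vector. The crucial step is then to track where $0^n$ is sent by the reduction map $A$ of Observation~\ref{obs:affineDimension}. Because $\mathcal{L}$ is a \emph{linear} subspace it contains $0^n$, and the shift vector in the construction is $h = 0^n$; since $A$ is an invertible binary matrix (a product of CNOTs) we have $A\,0^n = 0^n$, forcing $y(0^n) = 0^k$ and $h' = 0^{n-k}$. Hence the phase $e^{i\pi/2^t}$ is relocated onto the $0^k$ component of $U'\ketcal{L}{'}$, while all remaining components keep phase $1$. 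By inspection this is precisely the action of $M_{t,k}$, so $U' = M_{t,k}$.

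Finally, because $U_A$ is a reversible Clifford and the tensored factor $\ket{h'}$ is a stabiliser state contributing nothing to the robustness, invariance of robustness of magic under Clifford operations gives $\robmag{M_{t,n}\ketcal{L}{}}{} = \robmag{M_{t,k}\ketcal{L}{'}}{}$, as required. The only point requiring genuine care is the identification $U' = M_{t,k}$: one must verify that the all-zeros vector is fixed by the reduction, so that the inherited $k$-qubit gate retains multicontrol form rather than degenerating into a generic diagonal unitary. Everything else is an immediate consequence of Observation~\ref{obs:affineDimension}.
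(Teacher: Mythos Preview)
Your proposal is correct and follows essentially the same route as the paper: invoke Observation~\ref{obs:affineDimension} to reduce to a $k$-qubit diagonal unitary $U'$, then identify $U' = M_{t,k}$ by tracking the unique nontrivial phase. The paper's justification is in fact terser than yours (``by inspection of the phases \ldots\ $U'$ can only be $M_{t,k}$''), whereas you spell out the key fact that the CNOT sequence fixes $0^n$, so the phase lands on $0^k$; this is the right detail to make explicit.
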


\begin{table}[thbp]
\centering
\begin{tabular}{|c|c|c|c|c|}
\hline
Linear subspace & \multicolumn{4}{|c|}{Number of qubits, $n$} \\
\cline{2-5}
dimension, $k$ & 2 & 3 & 4 & 5\\
\hline
1 & $1.414$ & $1.414$ & $1.414$ & $1.414$ \\
2 & $\color{red}{\mathbf{1.849}}$ & $1.849$ & $1.849$ & $1.849$ \\
3 & - & $\color{red}{\mathbf{2.195}}$ & $2.195$ & $2.195$ \\
4 & - & - & $\color{red}{\mathbf{2.264}}$ & $\color{red}{\mathbf{2.264}}$ \\
5 & - & - & - & $2.195$ \\
\hline
\end{tabular}
\caption{Final robustness after multicontrol-$T$ gate applied to input stabiliser states $\ketcal{L}{}$ with $k = \dim(\mathcal{L})$. In each column, the maximum robustness (i.e. the capacity) is highlighted red.}
\label{tab:affineEquality}
\end{table}
Observation \ref{obs:M_t_n_dimension} partially justifies our Conjecture \ref{conj:fourqubitMulticontrol} in Section \ref{sec:numerical} of the main text, that for fixed $t$, the maximum increase in robustness achievable for $M_{t,n}$, over any $n$, is given by $\robmag{M_{t,K}\ket{+}^{\otimes K}}{}$, for some finite number of qubits $K$. To unpack this claim further, let us consider the maximisation over input stabiliser states performed to calculate the capacity $\mathcal{C}$. In this Appendix, we have seen that for the family of gates $M_{t,n}$, we only need to calculate robustness for one representative input stabiliser state for each possible \emph{dimension} of linear subspace; that is,  for $M_{t,n}$ there are only $n$ robustnesses to calculate. In Table \ref{tab:affineEquality} we present the relevant values for the family of multicontrol-$T$ gates ($t=2$) and make two observations.  First, looking across the rows of Table \ref{tab:affineEquality}, notice that the values for fixed $k$ are constant with $n$, assuming $k \leq n$.  Indeed, this is a generic feature of the $M_{t,n}$ gates as formalised by Observation \ref{obs:M_t_n_dimension}. Second, looking down the last column of Table \ref{tab:affineEquality}, we see that up until $k=4$, $\robmag{M_{t,n}\ketcal{L}{}}{}$ increases with $\dim(\mathcal{L})$, but at $k=5$ the value drops. With a little thought we can see that this is necessarily the case if $\RChoi{M_{t,5}}< \Mcap{M_{t,5}}$; we saw earlier that for diagonal gates $U$ the Choi state robustness is equal to $\mathcal{R}\qty(U\ket{+}^{\otimes n})$, and $\ket{+}^{\otimes n}$ is a representative state for the $k=n$ case. 

Our current techniques limit us to five-qubit operations, so we are unable to confirm whether $\mathcal{R}(M_{t,n}\ketcal{L}{})$ continues to decrease with increasing $\dim(\mathcal{L})$. An intuition for why a decrease is plausible goes as follows. A stabiliser state $\ket{\mathcal{L}}$ with $\dim(\mathcal{L})=k$ will have $2^k$ equally weighted terms when written in the computational basis, so will have a normalisation factor of $2^{-k/2}$. The non-stabiliser state $M_{t,n}\ket{\mathcal{L}}$ is identical to $\ketcal{L}{}$ apart from the phase on the all-zero term $\ket{0\ldots0}$. As $k$ becomes large, the amplitude of the term $\frac{e^{i \pi /2^t}}{2^{k/2}} \ket{0\ldots0}$ becomes very small, so that $M_{t,n}\ket{\mathcal{L}}$ has high fidelity with the stabiliser state $\ketcal{L}{}$. We would therefore expect $M_{t,n}\ket{\mathcal{L}}$ to have a small robustness if $k$ is large.

\end{document}